\DeclareMathAlphabet{\mathbfsl}{OT1}{ppl}{b}{it} 
\newcommand{\Z}{\mathbb{Z}}
\newcommand{\R}{\mathbb{R}}
\newcommand{\cA}{{\cal A}}
\newcommand{\cC}{{\cal C}}
\newcommand{\cD}{{\cal D}}
\newcommand{\cH}{{\cal H}}
\newcommand{\cI}{{\cal I}}
\newcommand{\cL}{{\cal L}}
\newcommand{\cS}{{\cal S}}
\newcommand{\bfa}{{\boldsymbol a}}
\newcommand{\bfb}{{\boldsymbol b}}
\newcommand{\bfc}{{\boldsymbol c}}
\newcommand{\bfw}{{\boldsymbol w}}
\newcommand{\bfx}{{\boldsymbol x}}
\newcommand{\bfy}{{\boldsymbol y}}
\newcommand{\bfz}{{\boldsymbol z}}
\DeclareMathOperator*{\argmax}{arg\,max}
\DeclareMathOperator*{\argmin}{arg\,min}
\newtheorem{theorem}{Theorem}
\newtheorem{lemma}{Lemma}
\newtheorem{corollary}{Corollary}
\newtheorem{example}{Example}
\theoremstyle{definition}
\newtheorem{definition}[theorem]{Definition}
\newtheorem{claim}[theorem]{Claim}
\begin{document}

\bibliographystyle{plain}

\title{On the Size of Balls and Anticodes of Small Diameter under the Fixed-Length Levenshtein Metric}

\author{
{\sc Daniella Bar-Lev}
 \hspace{1cm}
{\sc Tuvi Etzion}
 \hspace{1cm}
{\sc Eitan Yaakobi}
\thanks{The research of D. Bar-Lev was supported in part by the ISF grant no. 222/19. 
The research of T. Etzion was supported in part by the ISF grant no. 222/19 and by the Technion Data Science Initiative.
The research of E. Yaakobi was supported in part by the Israel Innovation Authority grant 75855 and the Technion Data Science Initiative.

An earlier version of this paper was presented in part at the 2021 IEEE International Symposium on Information Theory~\cite{BEY21}.

The authors are with the Department of Computer Science, Technion -- Israel Institute of Technology, Haifa 3200003, Israel, (e-mail: \{daniellalev,etzion,yaakobi\}@cs.technion.ac.il).}}

\maketitle

\begin{abstract}
The rapid development of DNA storage has brought the deletion and insertion channel to the front
line of research. When the number of deletions is equal to the number of insertions, the \emph{Fixed Length Levenshtein} ({FLL})
metric is the right measure for the distance between two words of the same length. Similar to any other metric, the size of a ball
is one of the most fundamental parameters.
In this work, we consider the minimum, maximum, and average size of a ball
with radius one, in the FLL metric. The related
minimum and the maximum size of a maximal anticode with diameter one are also considered.
\end{abstract}

\vspace{0.5cm}

\vspace{0.5cm}

\newpage

\section{Introduction}
Coding for DNA storage has attracted significant attention in the previous decade due to recent experiments and demonstrations
of the viability of storing information in macromolecules~\cite{Anavy19, BO21, CGK12, EZ17, Getal13, GH15,Oetal17,YGM17,TWAEHLSZM19}. Given the trends
in cost decreases of DNA synthesis and sequencing, it is estimated that already within this decade DNA storage may become
a highly competitive archiving technology. However, DNA molecules induce error patterns that are fundamentally different from their
digital counterparts~\cite{HMG18,HSR17, SOSAYY19, LSWY21}; This distinction results from the specific error behavior in DNA and it is well-known
that errors in DNA are typically in the form of substitutions, insertions, and deletions, where most published studies report that deletions
are the most prominent ones, depending upon the specific technology for synthesis and sequencing. Hence, due to its high relevance
to the error model in DNA storage coding for insertion and deletion errors has received renewed interest recently;
see e.g.~\cite{BGH17, BGZ16, Cheraghchi19, CK15, CS19,GW17, GS17, M09, MD06, RD14, SB19, SRB18, TPFV19}. This paper takes one more step in advancing this study and its goal is to study the size of balls
and anticodes when the number of insertions equals to the number of deletions.

If a word $\bfx \in \Z_q^n$ can be transferred to a word $\bfy \in \Z_q^n$ using $t$ deletions and $t$ insertions
(and cannot be transferred using a smaller number of deletions and insertions), then their
{\bf \emph{Fixed Length Levenshtein} (FLL) \emph{distance}} is $t$, which is denoted by $d_\ell (\bfx,\bfy) =t$.
It is relatively easy to verify that the FLL distance defines a metric.
Let $G=(V,E)$ be a graph whose set of vertices $V = \Z_q^n$
and two vertices $\bfx,\bfy \in V$ are connected by an edge if $d_\ell (\bfx,\bfy)=1$. This graph represents the FLL distance.
Moreover, the FLL distance defines a {\bf \emph{graphic metric}}, i.e., it is a metric and for each $\bfx,\bfy \in \Z_q^n$, $d_\ell (\bfx,\bfy)=t$
if and only if the length of the shortest path between $\bfx$ and $\bfy$ in $G$ is $t$.

One of the most fundamental parameters in any metric is the size of a ball with a given radius $t$ centered at a word~$\bfx$.
There are many metrics, e.g. the Hamming metric, the Johnson metric, or the Lee metric, where
the size of a ball does not depend on the word~$\bfx$. This is not the case in the FLL metric.
Moreover, the graph $G$ has a complex structure and it makes it much more difficult to find the
exact size of any ball and in particular the size of
a ball with minimum size and the size of a ball with maximum size. In~\cite{SaDo13}, a formula for the size of the ball with radius one, centered
at a word $x$, in the FLL metric was given. This formula depends on the number of runs in the word and the lengths of its alternating segments
(where in an alternating segment no run is larger than one). Nevertheless, while it is easy to compute the minimum size of a ball,
it is still difficult to determine from this formula what the maximum size of a ball is. In this paper, we find explicit expressions for
the minimum and maximum sizes of a ball when the ball is of radius one. We also find the average size of a ball when the radius
of the ball is one. Finally, we consider the related basic concept of anticode in the FLL metric, where an anticode with diameter $D$ is the
a code where the distance between any two elements of the code is at most $D$. Note, that a ball with radius $R$ has diameter
at most $2R$. We find the maximum size and the minimum size of maximal anticodes with diameter one, where an anticode with diameter one
is maximal if any addition of a word to it will increase its diameter.

This paper is the first one which considers a comprehensive discussion and exact computation on the balls with radius one and the
anticodes with diameter one in the FLL metric.
The rest of this paper is organized as follows. Section~\ref{cap: defenitions} introduces some basic concepts, presents some
of the known results on the sizes of balls, presents some results on equivalence of codes correcting deletions and insertions,
and finally introduce some observations required for our exposition.
The minimum size of a ball of any given radius in the FLL metric over $\Z_q$ is discussed in
Section~\ref{sec:min_size}. Section~\ref{sec:max_size} is devoted for the discussion on the maximum size of a ball with radius one
in the FLL metric over $\Z_q$. The analysis of non-binary sequences is discussed in Section~\ref{sec:max_non_binary}.
It appears that contrary to many other coding problems the binary case is much more difficult to analyze and it
is discussed in Section~\ref{sec:max_binary}. For the binary case,
the sequence for which the maximum size is obtained is presented in Theorem~\ref{the: q=2 max ball} and the maximum size is
given in Corollary~\ref{cor: q=2 max ball}. The average size of the FLL ball with radius one
over $\Z_q$ is computed in Section~\ref{sec:expect_size} and proved in Theorem~\ref{the: avg l-ball}.
In Section~\ref{sec:anticode_size}, we consider binary maximal anticodes with diameter one.
The maximum size of such an anticode is discussed in Section~\ref{sec:upper_anticodes}
and Section~\ref{sec:lower_anticodes} is devoted to the minimum size of such anticodes.
The results can be generalized for the non-binary case, but since they are more complicated and especially messy, they are omitted.

\section{Definitions and Previous Results}
\label{cap: defenitions}
In this section, we present the definitions and notations as well as several results that will be used throughout the paper.

For an integer $q\geq 2$, let $\Z_q$ denote the set of integers $\{0,1,\ldots,q-1\}$ and for an integer $n\ge0$,
let $\Z_q^n$ be the set of all sequences (words) of length $n$ over the alphabet $\Z_q$ and let $\Z_q^*=\bigcup_{n=0}^\infty\Z_q^n$, and let
$[n]$ denote the set of integers $\{1,2,\ldots,n\}$.
For two sequences $\bfx,\bfy\in\mathbb{Z}_q^n$, the distance between $\bfx$ and $\bfy$, $d(\bfx,\bfy)$, can be measured in various ways. When the type of errors is substitution, the \emph{Hamming distance} is the most natural to be considered.
The \emph{Hamming weight} of a sequence $\bfx\in\mathbb{Z}_q^*$, denoted by $\text{wt}{(\bfx})$,
is equal to the number of nonzero coordinates in $\bfx$.
The {Hamming distance} between two sequences ${\bfx,\bfy\in\mathbb{Z}_q^n}$, denoted by $d_H(\bfx,\bfy)$, is the number of coordinates
in which $\bfx$ and $\bfy$ differ. In other words, $d_H(\bfx,\bfy)$ is the number of symbol-substitution operations required
to transform $\bfx$ into $\bfy$. The Hamming distance is well known to be a metric on $\mathbb{Z}_q^n$
(also referred as the \emph{Hamming space}), as it satisfies  the three conditions of a metric (i.e., coincidence,
symmetry and the triangle inequality). Given a distance $d$ on a space $V$, the \emph{$t$-ball}
centered at ${\bfx \in V}$ is the set $\{ \bfy ~:~ d(\bfx,\bfy) \leq t\}$. The \emph{$t$-sphere} centered at ${\bfx \in V}$ is the set
$\{ \bfy ~:~ d(\bfx,\bfy) = t\}$. A \emph{code} $\cC \subseteq V$ is a subset of words from $V$.
The last related concept is an \emph{anticode} with diameter $D$ which is a code in $V$ for which the distance
between any two elements is at most $D$. Clearly, a $t$-ball is an anticode whose diameter is at most $2t$.
The \emph{Hamming $t$-ball} centered at ${\bfx\in\Z_q^n}$ will be denoted by $\cH_t(\bfx)$. For $\bfx\in\mathbb{Z}_q^n$, the number of words in the Hamming $t$-ball is a function of $n, q$ and $t$. The number of such words is
\begin{align}
\label{eq: hamming ball size}
|\cH_t(\bfx)|=\sum_{i=0}^t\binom{n}{i}(q-1)^i.
\end{align}

For an integer $t$, $0\le t\le n$, a sequence $\bfy\in\Z_q^{n-t}$ is a \emph{$t$-subsequence} of
$\bfx\in\Z_q^n$ if $\bfy$ can be obtained from $\bfx$ by deleting $t$ symbols from $\bfx$.
In other words, there exist $n-t$ indices ${1\le i_1<i_2<\cdots<i_{n-t}\le n}$ such that $y_j=x_{i_j}$, for all $1\le j\le n-t$.
We say that~$\bfy$ is a \emph{subsequence} of $\bfx$ if~$\bfy$ is a $t$-subsequence of $\bfx$ for some~$t$.
Similarly, a sequence $\bfy\in\Z_q^{n+t}$ is a \emph{$t$-supersequence} of~${\bfx\in\Z_m^n}$ if $\bfx$ is a $t$-subsequence
of~$\bfy$ and $\bfy$ is a \emph{supersequence} of $\bfx$ if $\bfy$ is a $t$-supersequence of $\bfx$ for some $t$.

\begin{definition}
The {\emph{deletion $t$-sphere}} centered at ${\bfx\in\Z_q^n}$, $\cD_t(\bfx)\subseteq \Z_q^{n-t}$, is the set of all $t$-subsequences of~$\bfx$.
The size of the largest deletion $t$-sphere in $\Z_q^n$ is denoted by $D_q(n,t)$. The {\emph{insertion $t$-sphere}} centered
at ${\bfx\in\Z_q^n}$, $\cI_t(\bfx)\subseteq \Z_q^{n+t}$, is the set of all $t$-supersequences of $\bfx$.
\end{definition}

Let $\bfx\in\mathbb{Z}_q^n$ be a sequence. The size of the insertion $t$-sphere $|\cI_t(\bfx)|$ does not depend on~$\bfx$ for any $0\le t\le n$. To be exact, it was shown by Levenshtein~\cite{L66} that
	\begin{align}~\label{eq: insertion ball size}
	|\cI_t(\bfx)|=\sum_{i=0}^t\binom{n+t}{i}(q-1)^i.
\end{align}
On the other hand, calculating the exact size of the deletion sphere is one of the more intriguing problems when studying codes for deletions. Deletion spheres, unlike substitutions balls and insertions spheres, are not \emph{regular}. That is, the size of
the deletion sphere, $|\cD_t(\bfx)|$, depends on the choice of the sequence $\bfx$.
Let $\{\sigma_1,\ldots, \sigma_q\}$ be the symbols of $\mathbb{Z}_q$ in some order and let $\bfc(n) = (c_1,c_2,\ldots, c_n)$ be a sequence in
$\mathbb{Z}_q^n$ such that $c_i = \sigma_i$ for $1 \leq i \leq q$ and $c_i=c_{i-q}$ for $i>q$.
It was shown in Hirschberg and Regnier~\cite{HR00} that $\bfc(n)$ has the largest deletion $t$-sphere and its size
is given by
\begin{align*}
D_q(n,t) = |\cD_t(\bfc(n))|=  \sum_{i=0}^t \binom{n-t}{i}D_{q-1}(t,t-i)
\end{align*}
In particular, $D_2(n,t) = \sum_{i=0}^t \binom{n-t}{i}$ and $D_3(n,t) = \sum_{i=0}^t \binom{n-t}{i}\sum_{j=0}^{t-i}\binom{i}{j}$.  The value $D_2(n,t)$ also satisfies the following recursion
$$D_2(n,t) = D_2(n-1,t) + D_2(n-2,t-1),$$
where the values for the basic cases can be evaluated by $D_2(n,t) = \sum_{i=0}^t \binom{n-t}{i}$.

\begin{definition}
A \emph{run} is a maximal subsequence composed of consecutive identical symbols. For a sequence  $\bfx\in\mathbb{Z}_q^n$,
the number of runs in $\bfx$ is denoted by $\rho(\bfx)$.
\end{definition}
\begin{example}\label{examp: runs}
If $\bfx=0000000$ then $\rho(\bfx)=1$ since $\bfx$ has a single run of length $7$ and for $\bfy=1120212$ we have that $\rho(\bfy) = 6$ since $\bfy$ has six runs, the first is on length two and the others are of length one.
\end{example}
There are upper and lower bounds on the size of the deletion ball which depend on the number of runs in the sequence.
Namely, it was shown by Levenshtein~\cite{L66} that
\begin{align*}
\binom{\rho(\bfx)-t+1}{t}\le |\cD_t(\bfx)|\le \binom{\rho(\bfx)+t-1}{t}.
\end{align*}
Later, the lower bound was improved in~\cite{HR00}:
\begin{align}
\label{eq: deletion ball size}
\sum_{i=0}^t \binom{\rho(\bfx)-t}{i} \leq |\cD_t(\bfx)| \leq \binom{\rho(\bfx)+t-1}{t}.
\end{align}
Several more results on this value which take into account the number of runs appear in~\cite{LL15}.

The \emph{Levenshtein distance} between two words $\bfx,\bfy \in \mathbb{Z}_q^*$, denoted by $d_L(\bfx,\bfy)$,
is the minimum number of insertions and deletions required to transform $\bfx$ into $\bfy$.
Similarly, for two sequences $\bfx,\bfy\in \mathbb{Z}_q^*$, $d_E(\bfx,\bfy)$ denotes the \emph{edit} distance between $\bfx$ and $\bfy$, which is the minimum number of insertions, deletions, and substitutions required to transform $\bfx$ into $\bfy$.
\begin{definition}
Let $t,n$ be integers such that  $0\le t\le n$. For a sequence $\bfx\in\mathbb{Z}_q^n$, the Levenshtein $t$-ball centered at ${\bfx\in\mathbb{Z}_q^n}$,
$\widehat{\cL}_t(\bfx)$, is defined by
$$
\widehat{\cL}_t(\bfx) \triangleq \{  \bfy\in\mathbb{Z}_q^* \ : \  d_L(\bfx,\bfy)\leq t \} .
$$
\end{definition}
In case $\bfx,\bfy\in\mathbb{Z}_q^n$, for some integer $n$, the \emph{Fixed Length Levenshtein} (FLL) \emph{distance} between $\bfx$ and $\bfy$,
$d_\ell(\bfx,\bfy)$, is the smallest $t$ for which there exists a $t$-subsequence $\bfz\in\mathbb{Z}_q^{n-t}$  of both $\bfx$ and $\bfy$, i.e.
\begin{equation}
\label{eq: deletion intersection}
d_{\ell}(\bfx,\bfy)= \min\{t': \cD_{t'}(\bfx)\cap \cD_{t'}(\bfy) \ne \varnothing\} = \frac{d_L(\bfx,\bfy)}{2}.
\end{equation}
In other words, $t$ is the smallest integer for which there exists $\bfz\in\mathbb{Z}_q^{n-t}$  such that
$\bfz\in \cD_t(\bfx)$ and $\bfy\in \cI_t(\bfz)$.
Note that if $\bfx,\bfy\in\mathbb{Z}_q^n$ and $\bfx$ is obtained from $\bfy$ by $t_1$ deletions and $t_2$ insertions, then $t_1=t_2$.


\begin{definition}
Let $n,t$ be integers such that $0\le t\le n$. For a sequence $\bfx\in\mathbb{Z}_q^n$, the FLL $t$-ball centered at ${\bfx\in\mathbb{Z}_q^n}$, $\cL_t(\bfx)\subseteq \mathbb{Z}_q^{n}$, is defined by
$$\cL_t(\bfx) \triangleq \{  \bfy\in\mathbb{Z}_q^n \ : \  d_\ell(\bfx,\bfy)\leq t \} .$$
\end{definition}

We say that a subsequence $\bfx_{[i,j]}\triangleq x_ix_{i+1}\cdots x_j$ is an \emph{alternating segment}
if $\bfx_{[i,j]}$ is a sequence of alternating distinct symbols $\sigma,\sigma'\in \Z_m$.
Note that $\bfx_{[i,j]}$ is a \emph{maximal alternating segment} if $\bfx_{[i,j]}$ is an alternating segment
and $\bfx_{[i-1,j]},\bfx_{[i,j+1]}$ are not. The number of maximal alternating segments of a sequence $\bfx$ will be denoted by $A(\bfx)$.
\begin{example}
If $\bfx=0000000$ then $A(\bfx)=7$ since $\bfx$ has seven maximal alternating segments, each of length one, and for $\bfx=1120212$ we have that $A(\bfx)=4$ and the maximal alternating segments are $1,\ 12,\ 202,\ 212$.
\end{example}

The following formula to compute $|\cL_1(\bfx)|$ as a function of $\rho(\bfx)$ and $A(\bfx)$ was given in~\cite{SaDo13}
\begin{align}
\label{eq:L1size}
\left|\cL_1(\bfx)\right| = \rho(\bfx)\cdot (n(q-1)-1) + 2 - \sum_{i=1}^{{\Large\text{$A$}}(\bfx)} \frac{(s_i-1)(s_i-2)}{2},
\end{align}
where $s_i$ for $1\le i\le A(\bfx)$ denotes the length of the $i$-th maximal alternating segment of $\bfx$.



Note that $|\widehat{\cL}_1(\bfx)|$, $|\widehat{\cL}_2(\bfx)|$ can be deduced from (\ref{eq: insertion ball size}), (\ref{eq: deletion ball size}), (\ref{eq: deletion intersection}), and  $|\cL_1(\bfx)|$, since
\begin{align*}
\widehat{\cL}_1(\bfx) & = \cD_1(\bfx)\cup \cI_1(\bfx)\cup\{\bfx\}, \\
\widehat{\cL}_2(\bfx) & = \cL_1(\bfx)\cup \cD_2(\bfx)\cup \cI_2(\bfx)\cup  \cD_1(\bfx)\cup \cI_1(\bfx),
\end{align*}
and the length of the sequences in each ball is different which implies that the sets in these unions are disjoint.
However, not much is known about the size of the Levenshtein ball and the FLL ball for arbitrary $n, t$ and $\bfx\in \mathbb{Z}_q^n$.

For $\bfx\in\mathbb{Z}_q^*$, let $|\bfx|$ denote the length of $\bfx$ and for a set of indices $I\subseteq [|\bfx|]$,
and let $\bfx_I$ denote the \emph{projection} of $\bfx$ on the ordered indices of $I$, which is the subsequence
of $\bfx$ received by the symbols in the entries of $I$.
For a symbol ${\sigma\in \Z_m}$, $\sigma^n$ denotes the sequence with $n$ consecutive $\sigma$'s.




A word $\bfx$ is called a \emph{common supersequence} (\emph{subsequence}) of some sequences $\bfy_1,\ldots,\bfy_t$ if $\bfx$ is a supersequence (subsequence) of each one of these $t$ words.
The set of all shortest common supersequences of $\bfy_1,\ldots,\bfy_t\in \mathbb{Z}_q^*$ is denoted by $\mathcal{SCS}(\bfy_1,\ldots,\bfy_t)$ and $\mathsf{SCS}(\bfy_1,\dots,\bfy_t)$ is the \emph{length of the shortest common supersequence} (\emph{SCS)} of $\bfy_1,\dots,\bfy_t$, that is,
$$\mathsf{SCS}(\bfy_1,\dots,\bfy_t) = \min_{\bfx\in \cS\cC\cS(\bfy_1,\ldots,\bfy_t)}\{|\bfx|\}.$$ Similarly, $\mathcal{LCS}(\bfy_1,\ldots,\bfy_t)$ is the set of all longest common subsequences of $\bfy_1,\dots,\bfy_t$ and $\mathsf{LCS}(\bfy_1,\dots,\bfy_t)$ is the \emph{length of the longest common subsequence} (\emph{LCS)} of $\bfy_1,\dots,\bfy_t$, that is,
$$
\mathsf{LCS}(\bfy_1,\dots,\bfy_t) \triangleq \max_{\bfx\in \cL\cC\cS(\bfy_1,\ldots,\bfy_t)}\{|\bfx|\}.
$$
This definition implies the following well known property.
\begin{claim}\label{lem: deletion intersection and LCS}
For $\bfx_1,\bfx_2\in\mathbb{Z}_q^n$, $\cD_t(\bfx_1)\cap \cD_t(\bfx_2)=\varnothing$ if and only if ${\mathsf{LCS}(\bfx_1,\bfx_2)< n-t}$.
\end{claim}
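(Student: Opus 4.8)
The plan is to collapse both sides of the claimed equivalence to the single statement ``$\bfx_1$ and $\bfx_2$ have a common subsequence of length $n-t$'' and then use a trivial monotonicity-in-length observation to link that statement to the LCS length.

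First I would record the elementary translation between deletion spheres and common subsequences. By definition, every element of $\cD_t(\bfx_i)$ is obtained from $\bfx_i$ by deleting exactly $t$ of its $n$ symbols, hence is a subsequence of $\bfx_i$ of length $n-t$; conversely, any subsequence of $\bfx_i$ of length $n-t$ arises by deleting $t$ symbols and therefore lies in $\cD_t(\bfx_i)$ (this is where $0\le t\le n$ is used). Consequently $\cD_t(\bfx_1)\cap\cD_t(\bfx_2)\ne\varnothing$ if and only if there is a word $\bfz$ of length exactly $n-t$ that is simultaneously a subsequence of $\bfx_1$ and of $\bfx_2$.

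Next I would prove the two directions against this reformulation. If $\mathsf{LCS}(\bfx_1,\bfx_2)<n-t$, then no common subsequence of $\bfx_1,\bfx_2$ has length $\ge n-t$, so in particular none has length $n-t$, and the intersection is empty. Conversely, if $\mathsf{LCS}(\bfx_1,\bfx_2)\ge n-t$, take a longest common subsequence $\bfw$ of $\bfx_1$ and $\bfx_2$ and delete $|\bfw|-(n-t)\ge 0$ arbitrary symbols from $\bfw$ to obtain a word $\bfz$ of length $n-t$; since $\bfz$ is a subsequence of $\bfw$ and $\bfw$ is a subsequence of each $\bfx_i$, transitivity of the subsequence relation gives $\bfz\in\cD_t(\bfx_1)\cap\cD_t(\bfx_2)$, so the intersection is nonempty. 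Contrapositively, $\cD_t(\bfx_1)\cap\cD_t(\bfx_2)=\varnothing$ exactly when $\mathsf{LCS}(\bfx_1,\bfx_2)<n-t$.

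There is no real obstacle here; the only step deserving an explicit sentence is the monotonicity used in the converse direction, namely that once \emph{some} common subsequence of length $L\ge n-t$ exists, a common subsequence of length exactly $n-t$ also exists. This is immediate from transitivity of ``is a subsequence of'': restricting a length-$L$ common subsequence to any $n-t$ of its positions yields a shorter word that is still a common subsequence of both $\bfx_1$ and $\bfx_2$. I would phrase this as a one-line remark and then the claim follows directly from the reformulation in the first paragraph.
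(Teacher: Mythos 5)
Your argument is correct: the identification of $\cD_t(\bfx_i)$ with the set of length-$(n-t)$ subsequences of $\bfx_i$, together with the monotonicity remark that a common subsequence of length $L\ge n-t$ can be truncated to one of length exactly $n-t$, is exactly the content behind this claim. The paper itself states the claim without proof as a well-known consequence of the definitions, so your write-up simply makes explicit the same (and essentially the only) argument; no gaps.
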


Combining (\ref{eq: deletion intersection}) and Claim~\ref{lem: deletion intersection and LCS} implies that
\begin{corollary}~\label{cor: LCS length}
If ${\bfx_1,\bfx_2\in\mathbb{Z}_q^n}$ then $$\mathsf{LCS}(\bfx_1,\bfx_2)= n-d_\ell(\bfx_1,\bfx_2).$$
\end{corollary}

For two sequences $\bfx\in \mathbb{Z}_q^{n}$ and $\bfy\in \mathbb{Z}_q^{m}$, the value of $\mathsf{LCS}(\bfx,\bfy)$  is given by the following recursive formula~\cite{Itoga81}

\begin{align}~\label{eq: recursive LCS}
\mathsf{LCS}(\bfx,\bfy)=
\begin{cases}
0 & n = 0 \text{ or } m = 0 \\
1 +\mathsf{LCS}( \bfx_{[1:{n}-1]}, \bfy_{[1:m-1]}) & x_{n}=y_{m}\\
\max \left\{
\mathsf{LCS}(\bfx_{[1:n-1]}, \bfy), \mathsf{LCS}(\bfx, \bfy_{[1:m-1]})
\right\} & \text{otherwise}
\end{cases}.
\end{align}

A subset $\cC\subseteq\mathbb{Z}_q^n$ is a \emph{$t$-deletion-correcting code} (\emph{${t\text{-insertion-correcting code}}$}, respectively) if for any two distinct codewords $\bfc,\bfc'\in\cC$ we have that $\cD_t(\bfc)\cap \cD_t(\bfc')=\varnothing$ (${\cI_t(\bfc)\cap \cI_t(\bfc')=\varnothing}$, respectively).
Similarly, $\cC$ is called a \emph{$(t_1,t_2)$-deletion-insertion-correcting code}
if for any two distinct codewords $\bfc,\bfc'\in\cC$ we have that
$\cD\cI_{t_1,t_2}(\bfc)\cap \cD\cI_{t_1,t_2}(\bfc')=\varnothing$, where $\cD\cI_{t_1,t_2}(\bfx)$ is the set of all words that can be obtained from $\bfx$ by $t_1$ deletions and $t_2$ insertions.
Levenshtein~\cite{L66} proved that $\cC$ is a $t$-deletion-correcting code if and only if $\cC$ is a $t$-insertion-correcting code and if and only if $\cC$ is a $(t_1,t_2)$-deletion-insertion-correcting code for every $t_1,t_2$ such that $t_1+t_2\le t$.
A straightforward generalization is the following result~\cite{CK13}.

\begin{lemma}
\label{lem: equivalent codes}
For all $t_1, t_2\in\Z$, if $\cC\subseteq\Z_q^n$ is a ${(t_1,t_2)\text{-deletion-insertion-correcting code}}$, then $\cC$ is also a $(t_1+t_2)$-deletion-correcting code.
\end{lemma}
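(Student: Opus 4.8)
The plan is to argue by contradiction, exhibiting a single word that witnesses a collision of the $\cD\cI_{t_1,t_2}$-balls whenever the $(t_1+t_2)$-deletion spheres collide. Throughout I assume $t_1,t_2\ge 0$ and $t_1+t_2\le n$; otherwise the statement is vacuous, since then $\cD_{t_1+t_2}(\bfx)$ is empty. Concretely: suppose, toward a contradiction, that $\cC$ is not a $(t_1+t_2)$-deletion-correcting code, so there are distinct $\bfc,\bfc'\in\cC$ with $\cD_{t_1+t_2}(\bfc)\cap\cD_{t_1+t_2}(\bfc')\ne\varnothing$; equivalently, by Claim~\ref{lem: deletion intersection and LCS}, $\bfc$ and $\bfc'$ admit a common subsequence $\bfz=z_1\cdots z_m$ of length $m=n-t_1-t_2$. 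I will construct $\bfw\in\cD\cI_{t_1,t_2}(\bfc)\cap\cD\cI_{t_1,t_2}(\bfc')$, contradicting the $(t_1,t_2)$-deletion-insertion-correcting property.

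First I would fix an embedding of $\bfz$ into $\bfc$ and use it to pick a subsequence $\bfu$ of $\bfc$ with $|\bfu|=n-t_1$ that still contains $\bfz$: keep the $m$ embedded positions together with any $t_2$ of the remaining $t_1+t_2$ positions of $\bfc$, and delete the other $t_1$. Then $\bfu\in\cD_{t_1}(\bfc)$ and $\bfz$ is a subsequence of $\bfu$, so one can write $\bfu=a_0\,z_1\,a_1\,z_2\cdots z_m\,a_m$ with (possibly empty) blocks $a_i$ satisfying $\sum_i|a_i|=t_2$. Symmetrically, from an embedding of $\bfz$ into $\bfc'$, obtain $\bfu'\in\cD_{t_1}(\bfc')$ with $|\bfu'|=n-t_1$ and a decomposition $\bfu'=b_0\,z_1\,b_1\,z_2\cdots z_m\,b_m$, $\sum_i|b_i|=t_2$.

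The merging step is the heart of the argument: set
\[
\bfw \;=\; a_0 b_0\,z_1\,a_1 b_1\,z_2\cdots z_m\,a_m b_m .
\]
Deleting the $b_i$-blocks from $\bfw$ leaves $\bfu$, and deleting the $a_i$-blocks leaves $\bfu'$; hence $\bfw$ is obtained from $\bfu$ by $t_2$ insertions and from $\bfu'$ by $t_2$ insertions, so $\bfw\in\cI_{t_2}(\bfu)\subseteq\cD\cI_{t_1,t_2}(\bfc)$ and, likewise, $\bfw\in\cD\cI_{t_1,t_2}(\bfc')$. Moreover $|\bfw|=m+\sum_i|a_i|+\sum_i|b_i|=n-t_1+t_2$, which is exactly the length of the words in these $\cD\cI_{t_1,t_2}$-sets, so $\bfw$ is a genuine common element — the contradiction sought.

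The one idea without which the proof collapses, and thus what I regard as the main obstacle, is routing both $\bfu$ and $\bfu'$ through the common subsequence $\bfz$: two arbitrary length-$(n-t_1)$ subsequences of $\bfc$ and $\bfc'$ could have a shortest common supersequence as long as $2(n-t_1)$, far too long to lie in $\cD\cI_{t_1,t_2}$, whereas insisting that both contain $\bfz$ forces the interleaved word $\bfw$ to have length exactly $n-t_1+t_2$. Everything else is routine bookkeeping with subsequence/supersequence embeddings, which could alternatively be phrased entirely in terms of LCS lengths via Claim~\ref{lem: deletion intersection and LCS} if preferred.
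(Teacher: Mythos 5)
Your proof is correct. Note that the paper itself does not prove Lemma~\ref{lem: equivalent codes}; it states it as a ``straightforward generalization'' of Levenshtein's equivalence and cites~\cite{CK13}. Your argument --- routing two length-$(n-t_1)$ subsequences $\bfu\in\cD_{t_1}(\bfc)$ and $\bfu'\in\cD_{t_1}(\bfc')$ through a common $(t_1+t_2)$-subsequence $\bfz$ and interleaving the leftover blocks to get a common element of $\cD\cI_{t_1,t_2}(\bfc)\cap\cD\cI_{t_1,t_2}(\bfc')$ of length $n-t_1+t_2$ --- is exactly the standard argument behind that citation, and all the bookkeeping (block lengths summing to $t_2$, the deletions of the $a$- and $b$-blocks recovering $\bfu'$ and $\bfu$) checks out.
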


\begin{corollary}
For $\mathcal{C}\subseteq \Z_q^n$, the following statements are equivalent.
\begin{enumerate}
\item $\cC$ is a $(t_1,t_2)$-deletion-insertion-correcting code.
\item $\cC$ is a $(t_1+t_2)$-deletion-correcting code.
\item $\cC$ is a $(t_1+t_2)$-insertion-correcting code.
\item $\cC$ is a $(t_1',t_2')$-deletion-insertion-correcting code for any $t_1',t_2'$ such that $t_1'+t_2' = t_1+t_2$.
\end{enumerate}
\end{corollary}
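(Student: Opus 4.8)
The plan is to prove the equivalence of the four statements by a short cycle of implications, drawing only on Lemma~\ref{lem: equivalent codes} and on Levenshtein's characterization quoted immediately before it (namely, that $\cC$ is a $t$-deletion-correcting code if and only if it is a $t$-insertion-correcting code, and if and only if it is a $(s_1,s_2)$-deletion-insertion-correcting code for \emph{every} $s_1,s_2$ with $s_1+s_2\le t$). Throughout, set $t\triangleq t_1+t_2$, so that all four statements are phrased relative to the same total number $t$ of errors.

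The implications I would check are as follows. First, $(1)\Rightarrow(2)$ is exactly Lemma~\ref{lem: equivalent codes}: a $(t_1,t_2)$-deletion-insertion-correcting code is a $(t_1+t_2)$-deletion-correcting code. Second, $(2)\Rightarrow(4)$ follows from Levenshtein's result in the ``for every pair'' form: if $\cC$ is $t$-deletion-correcting, then it is $(t_1',t_2')$-deletion-insertion-correcting for every $t_1',t_2'$ with $t_1'+t_2'\le t$, and in particular for every such pair with $t_1'+t_2'=t$, which is precisely statement $(4)$. Third, $(4)\Rightarrow(1)$ is immediate, since $(1)$ is just the instance $(t_1',t_2')=(t_1,t_2)$ of $(4)$. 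This closes the cycle $(1)\Rightarrow(2)\Rightarrow(4)\Rightarrow(1)$, so $(1)$, $(2)$, and $(4)$ are all equivalent. Finally, $(2)\Leftrightarrow(3)$ is the deletion/insertion symmetry in Levenshtein's theorem ($t$-deletion-correcting $\iff$ $t$-insertion-correcting), which brings $(3)$ into the same equivalence class and completes the proof.

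I do not anticipate any real obstacle here: the corollary is essentially a repackaging of the already-quoted results. The only point deserving a word of care is the bookkeeping between ``every pair with sum at most $t$'' and ``sum exactly $t$'' when invoking Levenshtein's theorem for $(2)\Rightarrow(4)$; one should note that the statement as written asks only for pairs $(t_1',t_2')$ with $t_1'+t_2'=t_1+t_2$, so the subset of pairs with sum exactly $t$ suffices, although the same argument would in fact also yield the case of strictly smaller sums.
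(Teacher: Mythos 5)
Your proposal is correct and matches the paper's (implicit) argument: the paper states this corollary without a written proof, treating it as an immediate consequence of the quoted Levenshtein equivalence together with Lemma~\ref{lem: equivalent codes}, which is exactly the cycle $(1)\Rightarrow(2)\Rightarrow(4)\Rightarrow(1)$ plus $(2)\Leftrightarrow(3)$ that you spell out. Your remark on the bookkeeping between ``sum at most $t$'' and ``sum exactly $t$'' is a fair and harmless clarification.
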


We further extend this result in the next lemma.
\begin{lemma}
A code $\cC\in\Z_q^n$ is a $(2t+1)$-deletion-correcting code if and only if the following two conditions are satisfied\\
$~~~\bullet$  $\cC$ is a $(t,t)$-deletion-insertion-correcting code \\
and also\\
$~~~\bullet$ if exactly $t+1$ FLL errors (i.e., $t+1$ insertions and $t+1$ deletions) occurred, then $\cC$ can detect these $t+1$ FLL errors.
\end{lemma}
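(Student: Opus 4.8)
The plan is to reduce the statement to the classical fact that, in a graphic metric, a code both corrects $t$ errors and detects $t+1$ errors precisely when its minimum distance is at least $2t+2$. First I would translate all three code properties into conditions on the minimum FLL distance $d_{\min}(\cC)\triangleq\min\{d_\ell(\bfc,\bfc'):\bfc\neq\bfc'\in\cC\}$. Combining Claim~\ref{lem: deletion intersection and LCS} with Corollary~\ref{cor: LCS length}, for every $s$ and every pair $\bfc\neq\bfc'$ we have $\cD_s(\bfc)\cap\cD_s(\bfc')\neq\varnothing$ iff $\mathsf{LCS}(\bfc,\bfc')\ge n-s$ iff $d_\ell(\bfc,\bfc')\le s$; hence $\cC$ is $s$-deletion-correcting iff $d_{\min}(\cC)\ge s+1$. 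In particular $\cC$ is $(2t+1)$-deletion-correcting iff $d_{\min}(\cC)\ge 2t+2$, while by the Corollary following Lemma~\ref{lem: equivalent codes} being $(t,t)$-deletion-insertion-correcting is the same as being $2t$-deletion-correcting, i.e. $d_{\min}(\cC)\ge 2t+1$. So the lemma becomes: $d_{\min}(\cC)\ge 2t+2$ if and only if $d_{\min}(\cC)\ge 2t+1$ and $\cC$ detects $t+1$ FLL errors.

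Next I would fix the operational meaning of the detection clause in the way forced by the first bullet (which already guarantees correction of $t$ FLL errors): $\cC$ \emph{detects $t+1$ FLL errors} means that whenever a codeword $\bfc$ is sent and the received $\bfy\in\Z_q^n$ arises from $\bfc$ by $t+1$ insertions and $t+1$ deletions, no radius-$t$ bounded-distance decoder is driven to output a codeword other than $\bfc$. Two observations make this workable. First, a word $\bfy\in\Z_q^n$ is reachable from $\bfc$ by exactly $t+1$ insertions and $t+1$ deletions exactly when $d_\ell(\bfc,\bfy)\le t+1$: the ``$\le$'' side allows wasting error budget by deleting and re-inserting equal symbols. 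Second, by $d_{\min}(\cC)\ge 2t+1$ a received word has at most one codeword within FLL distance $t$, so the decoder's output on such a word is forced. Thus the detection clause is equivalent to: there is no triple $\bfc\neq\bfc'$ in $\cC$ and $\bfy\in\Z_q^n$ with $d_\ell(\bfc,\bfy)\le t+1$ and $d_\ell(\bfc',\bfy)\le t$.

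For the forward direction, assume $d_{\min}(\cC)\ge 2t+2$. Then $d_{\min}(\cC)\ge 2t+1$, giving the $(t,t)$-deletion-insertion-correcting property. For detection, a triple as above would yield $d_\ell(\bfc,\bfc')\le d_\ell(\bfc,\bfy)+d_\ell(\bfy,\bfc')\le 2t+1<2t+2$ by the triangle inequality, a contradiction. For the converse, assume $\cC$ is $(t,t)$-deletion-insertion-correcting and detects $t+1$ FLL errors, but $d_{\min}(\cC)\le 2t+1$; then $d_{\min}(\cC)=2t+1$, attained by some $\bfc\neq\bfc'$. Here I would invoke that $d_\ell$ is a graphic metric on $G=(\Z_q^n,E)$: a shortest $\bfc$–$\bfc'$ path $\bfc=\bfv_0,\bfv_1,\dots,\bfv_{2t+1}=\bfc'$ lies entirely in $\Z_q^n$, and for $\bfy\triangleq\bfv_{t+1}$ the path bounds together with the triangle inequality force $d_\ell(\bfc,\bfy)=t+1$ and $d_\ell(\bfc',\bfy)=t$. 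Then $\bfy$ is obtainable from $\bfc$ by $t+1$ insertions and $t+1$ deletions, and since $d_{\min}(\cC)=2t+1$ the only codeword within FLL distance $t$ of $\bfy$ is $\bfc'$, so every radius-$t$ decoder outputs $\bfc'\neq\bfc$ and flags nothing — contradicting detection. Hence $d_{\min}(\cC)\ge 2t+2$, i.e. $\cC$ is $(2t+1)$-deletion-correcting.

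I expect the only real subtlety to be the detection clause: one must read ``can detect these $t+1$ FLL errors'' jointly with the radius-$t$ correction of the first bullet (a stand-alone detection reading would give $d_{\min}\ge t+2$, not the $2t+2$ the lemma needs), and one must ensure the counterexample word $\bfy$ in the converse is an honest length-$n$ word — which is exactly what the graphic-metric property of $d_\ell$ (established in the Introduction) supplies, via the midpoint of a shortest path. Everything else is bookkeeping with the triangle inequality and the LCS characterization of $d_\ell$.
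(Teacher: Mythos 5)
Your proposal is correct and follows essentially the same route as the paper: both directions reduce the three code properties to conditions on the minimum FLL distance (via the LCS/deletion-sphere characterization and the equivalence of $(t,t)$-correction with $2t$-deletion-correction), and the converse uses the graphic-metric property to produce a midpoint word $\bfy$ at distances $t$ and $t+1$ from a codeword pair at distance $2t+1$, contradicting detection. Your treatment is somewhat more explicit than the paper's about what the detection clause means operationally, but the underlying argument is the same.
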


\begin{proof}
If $\mathcal{C}$ is a $(2t+1)$-deletion-correcting code, then by definition for any $\bfc_1,\bfc_2\in \mathcal{C}$ we have that
$$
\cD_{2t+1}(\bfc_1)\cap \cD_{2t+1}(\bfc_2)=\varnothing.
$$
Therefore, by Claim~\ref{lem: deletion intersection and LCS} for any two distinct codewords $\bfc_1, \bfc_2\in \mathcal{C}$ we have that
$${\mathsf{LCS}(\bfc_1,\bfc_2)\le n-(2t+1)}.$$ Hence, by Corollary~\ref{cor: LCS length}, ${d_\ell(\bfc_1,\bfc_2)\ge 2(t+1)}$.
Since the FLL metric is graphic, it follows that  $\mathcal{C}$ can correct up to $t$ FLL errors
and if exactly $t+1$ FLL errors occurred it can detect them.

For the other direction, assume that $\mathcal{C}$  is a $(t,t)$-deletion-insertion-correcting code and if exactly $t+1$ FLL errors occurred, then $\cC$ can detect them. By Lemma~\ref{lem: equivalent codes}, $\mathcal{C}$
is a $(2t)$-deletion-correcting code which implies that ${\cD_{2t}(\bfc_1)\cap \cD_{2t}(\bfc_2) = \varnothing}$ for all $\bfc_1,\bfc_2\in\cC$, and hence by~(\ref{eq: deletion intersection}) we have that
$$
\forall \bfc_1,\bfc_2\in \mathcal{C}: \ \ \ d_\ell(\bfc_1,\bfc_2) > 2t.
$$
Let us assume to the contrary that there exist two codewords  $\bfc_1,\bfc_2\in \cC$ such that $d_\ell(\bfc_1,\bfc_2)=2t+1$.
Since the FLL metric is a graphic metric, it follows that there exists a word $\bfy\in\Z_q^n$ such that $d_\ell(\bfc_1,\bfy) = t$
and $d_\ell(\bfy,\bfc_2)= t+1$. Hence, if the received word is $\bfy$, then the submitted codeword can be either $\bfc_1$ ($t$ errors)
or $\bfc_1$ ($t+1$ errors) which contradicts the fact that in $\cC$ up to~$t$ FLL errors can be corrected
and exactly $t+1$ FLL errors can be detected. Hence,
$$
\forall \bfc_1,\bfc_2\in \cC: \ \ \ d_\ell(\bfc_1,\bfc_2) > 2t+1,
$$
and by definition, $\mathcal{C}$ can correct $2t+1$ deletions.
\end{proof}

\section{The Minimum Size of an FLL Ball}
\label{sec:min_size}

In this section, the explicit expression for the minimum size of an FLL $t$-ball of any radius $t$ is derived. Although this result is rather simple and straightforward, it is presented here for the completeness of the problems studied in the paper. Since changing the symbol in the $i$-th position from $\sigma$ to $\sigma'$ in any sequence $\bfx$ can be
done by first deleting $\sigma$ in the $i$-th position of $\bfx$ and then inserting $\sigma'$ in the same position of $\bfx$, it follows that
$$
\forall \bfx,\bfy\in\mathbb{Z}_q^n:\ \ \ d_H(\bfx,\bfy)\ge d_\ell(\bfx,\bfy).
$$
Since $\bfy\in \cH_t(\bfx)$ if and only if ${d_H(\bfx,\bfy)\le t}$ and
$\bfy\in \cL_t(\bfx)$ if and only if ${d_\ell(\bfx,\bfy)\le t}$, the following results are immediatey implied.

\begin{lemma}
\label{lem: hamming subset levinshtein balls}
If $n\ge t\ge0$ are integers and $\bfx\in\mathbb{Z}_q^n$, then $\cH_t(\bfx)\subseteq \cL_t(\bfx)$.
\end{lemma}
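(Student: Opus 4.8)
The plan is to show the set containment directly from the inequality $d_H(\bfx,\bfy)\ge d_\ell(\bfx,\bfy)$ established just before the statement. First I would take an arbitrary $\bfy\in\cH_t(\bfx)$; by definition of the Hamming $t$-ball this means $\bfy\in\mathbb{Z}_q^n$ and $d_H(\bfx,\bfy)\le t$. Then I would chain the two facts: since $d_\ell(\bfx,\bfy)\le d_H(\bfx,\bfy)\le t$, we get $d_\ell(\bfx,\bfy)\le t$, and since $\bfy$ already lies in $\mathbb{Z}_q^n$, the definition of the FLL $t$-ball yields $\bfy\in\cL_t(\bfx)$. As $\bfy$ was arbitrary, this gives $\cH_t(\bfx)\subseteq\cL_t(\bfx)$.

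The only genuine content to justify is the coordinate-wise substitution argument underlying $d_H(\bfx,\bfy)\ge d_\ell(\bfx,\bfy)$, but this is already spelled out in the excerpt: a single substitution at position $i$ (replacing $\sigma$ by $\sigma'$) is realized by one deletion followed by one insertion at position $i$, so a transformation of $\bfx$ into $\bfy$ using $d_H(\bfx,\bfy)$ substitutions can be simulated by $d_H(\bfx,\bfy)$ deletions and $d_H(\bfx,\bfy)$ insertions; by the definition of $d_\ell$ via~(\ref{eq: deletion intersection}) (or equivalently the fact that $d_\ell$ counts the minimal number of deletion-insertion pairs), this means $d_\ell(\bfx,\bfy)\le d_H(\bfx,\bfy)$. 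Note that $\bfx$ and $\bfy$ have the same length $n$, so the FLL distance is well-defined between them, and no length mismatch arises. I would keep this half-sentence for completeness rather than re-deriving it, since the excerpt presents it as already known.

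I do not anticipate a real obstacle here: the statement is a one-line consequence of a displayed inequality that the paper has already justified, combined with unwinding two definitions. If anything, the "hard part" is purely expository — making sure the roles of the two balls are not swapped (the Hamming ball is the \emph{smaller} one because substitutions are a special, more restrictive way of moving between words than arbitrary insertion/deletion pairs). A proof of two or three sentences suffices, essentially: \emph{Let $\bfy\in\cH_t(\bfx)$. Then $d_\ell(\bfx,\bfy)\le d_H(\bfx,\bfy)\le t$, so $\bfy\in\cL_t(\bfx)$.}
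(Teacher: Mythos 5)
Your proposal is correct and follows exactly the paper's route: the paper derives $d_H(\bfx,\bfy)\ge d_\ell(\bfx,\bfy)$ from the substitution-as-deletion-plus-insertion observation and then states the lemma as an immediate consequence of the two ball definitions, which is precisely your chain $d_\ell(\bfx,\bfy)\le d_H(\bfx,\bfy)\le t$. No gaps.
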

\begin{corollary}
\label{cor:BsubsetL}
For any two integers $n\ge t\ge 0$ and any sequence $\bfx\in\mathbb{Z}_q^n$, $|\cH_t(\bfx)|\le |\cL_t(\bfx)|$.
\end{corollary}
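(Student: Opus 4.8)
The statement is an immediate numerical consequence of the set inclusion established in Lemma~\ref{lem: hamming subset levinshtein balls}. The plan is simply to take cardinalities of both sides of $\cH_t(\bfx)\subseteq\cL_t(\bfx)$: since $\cH_t(\bfx)$ and $\cL_t(\bfx)$ are both finite subsets of the finite set $\mathbb{Z}_q^n$, monotonicity of cardinality under set inclusion gives $|\cH_t(\bfx)|\le|\cL_t(\bfx)|$. The hypotheses $n\ge t\ge 0$ and $\bfx\in\mathbb{Z}_q^n$ are precisely those required to invoke Lemma~\ref{lem: hamming subset levinshtein balls}, so there is nothing further to check.

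If one prefers not to quote the lemma, I would unfold it to first principles in one line, reusing the observation already made in the surrounding text. For any $\bfy\in\mathbb{Z}_q^n$, performing a symbol substitution at a coordinate can be realized by one deletion followed by one insertion at that coordinate, so $d_\ell(\bfx,\bfy)\le d_H(\bfx,\bfy)$. Hence every $\bfy$ with $d_H(\bfx,\bfy)\le t$ also satisfies $d_\ell(\bfx,\bfy)\le t$, i.e. $\bfy\in\cL_t(\bfx)$; this gives the inclusion, and counting gives the inequality.

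There is essentially no obstacle here: the only thing to be careful about is that the inequality is between \emph{sizes} of balls, not the balls themselves, so the argument genuinely needs the inclusion together with finiteness of the ambient space $\mathbb{Z}_q^n$ — both of which are in hand. The result, together with~(\ref{eq: hamming ball size}), then furnishes the explicit lower bound $|\cL_t(\bfx)|\ge\sum_{i=0}^t\binom{n}{i}(q-1)^i$ that motivates the minimum-size discussion of Section~\ref{sec:min_size}, and (as will be shown there) this bound is in fact tight for a suitable choice of $\bfx$.
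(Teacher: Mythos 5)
Your proof is correct and matches the paper's: the corollary is stated there as an immediate consequence of the inclusion $\cH_t(\bfx)\subseteq\cL_t(\bfx)$ from Lemma~\ref{lem: hamming subset levinshtein balls}, obtained exactly as you do by taking cardinalities, and your optional unfolding (substitution $=$ deletion followed by insertion, hence $d_\ell\le d_H$) is precisely the paper's own justification of that lemma. Nothing further is needed.
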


\begin{lemma}
\label{lem: hamming and ell minimal ball}
If $n>t\ge0$ are integers, then $\cH_t(\bfx) = \cL_t(\bfx)$ if and only if $\bfx=\sigma^n$ for $\sigma\in\mathbb{Z}_q$.
\end{lemma}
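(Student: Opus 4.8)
The statement is a biconditional, and the easy direction is $\bfx=\sigma^n\Rightarrow\cH_t(\bfx)=\cL_t(\bfx)$. If $\bfx=\sigma^n$, then for an arbitrary $\bfy\in\Z_q^n$ the longest common subsequence of $\sigma^n$ and $\bfy$ is $\sigma$ repeated as often as $\sigma$ occurs in $\bfy$, so $\mathsf{LCS}(\sigma^n,\bfy)=n-d_H(\sigma^n,\bfy)$. Substituting this into Corollary~\ref{cor: LCS length} gives $d_\ell(\sigma^n,\bfy)=n-\mathsf{LCS}(\sigma^n,\bfy)=d_H(\sigma^n,\bfy)$ for every $\bfy$, so the FLL metric and the Hamming metric agree at $\sigma^n$, and hence $\cL_t(\sigma^n)=\cH_t(\sigma^n)$ for every radius. (This direction does not use $t<n$.)

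For the converse I would argue the contrapositive: assuming $\bfx$ is not a constant word, equivalently $\rho(\bfx)\ge 2$, I would exhibit an explicit $\bfy\in\cL_t(\bfx)\setminus\cH_t(\bfx)$ for every $1\le t\le n-1$ (for $t=0$ both balls are $\{\bfx\}$, so there is nothing to do). Since $\rho(\bfx)\ge 2$, fix $j\in[n-1]$ with $x_j\ne x_{j+1}$. Because $t-1\le n-2=|[n]\setminus\{j,j+1\}|$ and $q\ge 2$, choose a set $P$ of $t-1$ coordinates disjoint from $\{j,j+1\}$ and, for each $p\in P$, a symbol $x'_p\ne x_p$. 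Let $\bfy$ be obtained from $\bfx$ by transposing the entries in positions $j$ and $j+1$, replacing $x_p$ by $x'_p$ for $p\in P$, and leaving all other coordinates unchanged. Then (i) $\bfy$ differs from $\bfx$ in exactly the $t+1$ coordinates $\{j,j+1\}\cup P$, so $d_H(\bfx,\bfy)=t+1>t$ and $\bfy\notin\cH_t(\bfx)$; and (ii) deleting from $\bfx$ the $t$ coordinates $\{j\}\cup P$ yields a word $\bfz$ of length $n-t$ that is still a subsequence of $\bfy$, obtained by reading $\bfy$ at all coordinates outside $\{j,j+1\}\cup P$ together with coordinate $j$ of $\bfy$ (which holds $x_{j+1}$), in increasing order and with matching symbols. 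Hence $\mathsf{LCS}(\bfx,\bfy)\ge n-t$, so $d_\ell(\bfx,\bfy)\le t$ by Corollary~\ref{cor: LCS length}, i.e. $\bfy\in\cL_t(\bfx)$. Therefore $\cH_t(\bfx)\subsetneq\cL_t(\bfx)$, as required.

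The only genuine idea is the one underlying (ii): an adjacent transposition changes two coordinates but is realizable with a single deletion and a single insertion, so the remaining $t-1$ deletion/insertion pairs can be spent on genuine substitutions, each contributing one more Hamming mismatch — yielding Hamming distance $t+1$ while the FLL distance stays $\le t$. The remaining pitfalls are bookkeeping: choosing the $t-1$ substitution positions disjoint from $\{j,j+1\}$ is exactly where $t\le n-1$ enters, and one must verify carefully that $\bfz$ embeds into $\bfy$ through the shifted alignment at position $j$ (the $j+1$-entry of $\bfz$ is picked up at position $j$ of $\bfy$, and all other picked positions are unaffected and lie on the correct side of $j$). I expect checking this embedding to be the most delicate, though still entirely elementary, point.
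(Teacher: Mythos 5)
Your proposal is correct and follows essentially the same route as the paper: the forward direction rests on the observation that every word within FLL distance $t$ of $\sigma^n$ has at most $t$ non-$\sigma$ entries (the paper phrases this via the Hamming weight, you via $\mathsf{LCS}(\sigma^n,\bfy)=n-d_H(\sigma^n,\bfy)$, which is the same fact), and the converse uses the identical construction of an adjacent transposition at the first descent $x_j\ne x_{j+1}$ combined with $t-1$ substitutions elsewhere, giving Hamming distance $t+1$ but FLL distance at most $t$. Your explicit verification of the subsequence embedding of $\bfz$ into $\bfy$ is a welcome bit of extra care that the paper leaves implicit, and the $t=0$ caveat you flag is a defect of the lemma's hypothesis ($n>t\ge 0$ should be $n>t\ge 1$ for the "only if" direction) that the paper's own proof shares.
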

\begin{proof}
Assume first w.l.o.g. that $\bfx=0^n$ and let $\bfy\in \cL_t(\bfx)$ be a sequence obtained from $\bfx$ by at most~$t$ insertions and $t$ deletions. Hence, $\text{wt}(\bfy)\le t$ and $\bfy\in \cH_t(\bfx)$, which implies that ${\cL_t(\bfx)\subseteq \cH_t(\bfx)}$.
Therefore, Lemma~\ref{lem: hamming subset levinshtein balls} implies that $\cH_t(\bfx) = \cL_t(\bfx)$.

For the other direction, assume that $\cH_t(\bfx) = \cL_t(\bfx)$ and let $\bfx\in\Z_q^n$ were ${\bfx\ne \sigma^n}$ for all~$\sigma\in\Z_q$.
Since by Lemma~\ref{lem: hamming subset levinshtein balls}, $\cH_t(\bfx)\subseteq \cL_t(\bfx)$, to complete the proof, it is sufficient to show that there exists a sequence $\bfy\in \cL_t(\bfx)$\textbackslash $\cH_t(\bfx)$.
Denote $\bfx=(x_1,x_2,\ldots,x_n)$ and let $i$ be the smallest index for which $x_i\ne x_{i+1}$. Let $\bfy$ be the sequence defined by
$$ \bfy \triangleq \left(y_1,y_2,\ldots,y_{i-1},x_{i+1},x_{i}, y_{i+2},\ldots,y_{n}\right),$$
where $y_j\ne x_j$ for the first $t-1$ indices (for which ${j\notin\{ i,i+1\}}$) and $y_j=x_j$ otherwise.
Clearly, $\bfy$  differs from~$\bfx$ in $t+1$ indices and therefore $\bfy\notin \cH_t(\bfx)$.
On the other hand, $\bfy$ can be obtained from $\bfx$ by first deleting $x_i$ and inserting it to the right
of $x_{i+1}$ and then applying $t-1$ deletions and $t-1$ insertions whenever $y_j\ne x_j$ (where $j\notin\{i,i+1\}$).
Thus, $\bfy\in  \cL_t(\bfx)$\textbackslash $\cH_t(\bfx)$ which completes the proof.
\end{proof}

The following simple corollary is a direct result of Corollary~\ref{cor:BsubsetL}, Lemma~\ref{lem: hamming and ell minimal ball} and (\ref{eq: hamming ball size}). 
\begin{corollary}\label{cor: min l-ball}
If $n>t\ge 0$ and $m>1$ are integers, then the size of the minimum FLL $t$-ball is
$$\min_{\bfx\in\Z_q^n}\left|\cL_t(\bfx)\right| = \sum_{i=0}^t\binom{n}{i}(q-1)^i,$$
and the minimum is obtained only by the balls centered at $\bfx=\sigma^n$ for any $\sigma\in\Z_q$.
\end{corollary}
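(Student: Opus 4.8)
The plan is to simply assemble the three ingredients that the corollary explicitly cites. The statement we must prove is that $\min_{\bfx\in\Z_q^n}|\cL_t(\bfx)| = \sum_{i=0}^t\binom{n}{i}(q-1)^i$ and that this minimum is attained exactly at the constant words $\sigma^n$. First I would invoke Corollary~\ref{cor:BsubsetL}, which gives $|\cH_t(\bfx)|\le|\cL_t(\bfx)|$ for every $\bfx\in\Z_q^n$; combined with the fact that $|\cH_t(\bfx)|$ does not depend on $\bfx$ and equals the closed-form expression in \eqref{eq: hamming ball size}, this already yields the lower bound $|\cL_t(\bfx)|\ge\sum_{i=0}^t\binom{n}{i}(q-1)^i$ uniformly in $\bfx$.

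Next I would show this lower bound is achieved: for $\bfx=\sigma^n$, Lemma~\ref{lem: hamming and ell minimal ball} gives $\cH_t(\sigma^n)=\cL_t(\sigma^n)$, so $|\cL_t(\sigma^n)| = |\cH_t(\sigma^n)| = \sum_{i=0}^t\binom{n}{i}(q-1)^i$ by \eqref{eq: hamming ball size}. Hence the minimum over all $\bfx$ equals this quantity and is attained at every constant word. Finally, for the uniqueness claim, suppose $\bfx$ is not constant, i.e. $\bfx\ne\sigma^n$ for all $\sigma\in\Z_q$. Then Lemma~\ref{lem: hamming and ell minimal ball} tells us $\cH_t(\bfx)\subsetneq\cL_t(\bfx)$ is strict (the ``only if'' direction shows equality forces $\bfx$ constant), so the containment $\cH_t(\bfx)\subseteq\cL_t(\bfx)$ from Lemma~\ref{lem: hamming subset levinshtein balls} is proper, giving $|\cL_t(\bfx)| > |\cH_t(\bfx)| = \sum_{i=0}^t\binom{n}{i}(q-1)^i$. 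Therefore no non-constant word attains the minimum.

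There is essentially no obstacle here — the work has all been front-loaded into Lemmas~\ref{lem: hamming subset levinshtein balls} and~\ref{lem: hamming and ell minimal ball} and Corollary~\ref{cor:BsubsetL}, and the present corollary is a two-line bookkeeping exercise combining them with the Hamming-ball count \eqref{eq: hamming ball size}. The only minor point to be careful about is the hypothesis $n>t$ (strict), which is exactly what Lemma~\ref{lem: hamming and ell minimal ball} requires for the characterization of equality; if $n=t$ the ball is the whole space and the ``only constant words'' conclusion would fail, so keeping the strict inequality in the statement is essential. I would also note in passing that the condition $m>1$ (equivalently $q>1$, presumably a typo for $q$) is needed only so that there is more than one symbol and the formula is nontrivial; for $q=1$ everything collapses. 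Overall I expect the proof to be three or four sentences long, citing the two lemmas, the corollary, and equation \eqref{eq: hamming ball size}.
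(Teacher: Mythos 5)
Your proposal is correct and is exactly the argument the paper intends: the paper gives no explicit proof, stating only that the corollary is a direct consequence of Corollary~\ref{cor:BsubsetL}, Lemma~\ref{lem: hamming and ell minimal ball}, and \eqref{eq: hamming ball size}, which is precisely the assembly you carry out (lower bound from the inclusion, attainment and uniqueness from the equality characterization at constant words). Your side remarks on the necessity of $n>t$ and on $m>1$ being a typo for $q>1$ are both accurate.
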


\section{The Maximum FLL Balls with Radius One}
\label{sec:max_size}

The goal of this section is to compute the size of a ball with maximum size and its centre.
For this purpose it is required first to compute the size of a ball.
The size of the FLL $1$-ball centered at $\bfx\in\mathbb{Z}_q^n$ was proved
in~\cite{SaDo13} and given in~(\ref{eq:L1size}).
In the analysis of the maximum ball we distinguish between the binary case and the non-binary case. Surprisingly, the computation of the non-binary case is not a generalization of the binary case. That is, the binary case is not a special case of the non-binary case.
Even more surprising is that the analysis of the non-binary case is much simpler than the analysis of the binary case.
Hence, we start with the analysis of the non-binary case which is relatively simple.

\subsection{The Non-Binary Case}
\label{sec:max_non_binary}


By (\ref{eq:L1size}), the size of a ball with radius one centered at $\bfx$ depends on $\rho(\bfx)$, the number of runs in~$\bfx$. For a given number of runs $1\le r\le n$, the size of a ball depends on the lengths of the maximal alternating segments in $\bfx$. The following lemma is an immediate consequence of (\ref{eq:L1size}).
\begin{lemma}
\label{cla:argminmax}
If $n>0$ and $1\le r\le n$, then
$$\argmax_{\substack{\bfx\in\mathbb{Z}_q^n\\ \rho(\bfx)=r}}|\cL_1(\bfx)| =  \argmin_{\substack{\bfx\in\mathbb{Z}_q^n\\ \rho(\bfx)=r}} \left\{\sum_{i=1}^{A(\bfx)} \frac{(s_i-1)(s_i-2)}{2}\right\}.$$
\end{lemma}

\begin{proof}
Let $\bfx\in\mathbb{Z}_q^n$ be a sequence with exactly $r$ runs.
Since $r (n(q-1)-1) + 2$ is a constant and
$$\sum_{i=1}^{A(\bfx)} \frac{(s_i-1)(s_i-2)}{2}\ge 0,$$
the claim follows immediately from (\ref{eq:L1size}).
\end{proof}

\begin{corollary}
\label{cor: max L1 for fix num of runs}
If $n>0$ and $1\le r\le n$, then
 $$\max_{\substack{\bfx\in\mathbb{Z}_q^n\\ \rho(\bfx)=r}}|\cL_1(\bfx)| = r(n(q-1)-1)+2 - \min_{\substack{\bfx\in\mathbb{Z}_q^n\\ \rho(\bfx)=r}} \left\{\sum_{i=1}^{A(\bfx)} \frac{(s_i-1)(s_i-2)}{2}\right\}.$$
\end{corollary}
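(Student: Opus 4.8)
The plan is to read the claim off directly from the closed-form expression~(\ref{eq:L1size}) for $|\cL_1(\bfx)|$, in the same spirit as Lemma~\ref{cla:argminmax}. First I would fix integers $n>0$ and $1\le r\le n$ and restrict attention to the finite, nonempty set of sequences $\bfx\in\mathbb{Z}_q^n$ with $\rho(\bfx)=r$; nonemptiness is clear since any such $r$ is realized, e.g.\ by a sequence with exactly $r-1$ symbol changes. For every such $\bfx$, substituting $\rho(\bfx)=r$ into~(\ref{eq:L1size}) gives
$$|\cL_1(\bfx)| = r\bigl(n(q-1)-1\bigr) + 2 - \sum_{i=1}^{A(\bfx)}\frac{(s_i-1)(s_i-2)}{2},$$
so $|\cL_1(\bfx)|$ depends on $\bfx$ only through the quantity $\sum_{i=1}^{A(\bfx)}(s_i-1)(s_i-2)/2$, which is nonnegative (each factor $(s_i-1)(s_i-2)\ge 0$ for an integer $s_i\ge 1$) and enters with a minus sign, while $r(n(q-1)-1)+2$ is a constant independent of $\bfx$.

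Next I would take the maximum over this set. Since the additive constant does not depend on $\bfx$, it can be pulled out of the maximum, and maximizing $-\sum_{i=1}^{A(\bfx)}(s_i-1)(s_i-2)/2$ is the same as minimizing $\sum_{i=1}^{A(\bfx)}(s_i-1)(s_i-2)/2$; using $\max(-f)=-\min(f)$ then yields
$$\max_{\substack{\bfx\in\mathbb{Z}_q^n\\ \rho(\bfx)=r}}|\cL_1(\bfx)| = r\bigl(n(q-1)-1\bigr)+2 - \min_{\substack{\bfx\in\mathbb{Z}_q^n\\ \rho(\bfx)=r}}\left\{\sum_{i=1}^{A(\bfx)}\frac{(s_i-1)(s_i-2)}{2}\right\},$$
which is exactly the asserted identity. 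Alternatively, one can evaluate~(\ref{eq:L1size}) at a minimizer of the alternating-segment sum, whose existence is guaranteed by Lemma~\ref{cla:argminmax}, and observe that it attains the right-hand side.

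I do not expect any real obstacle: the corollary is a bookkeeping reformulation of~(\ref{eq:L1size}) once the number of runs is fixed, and the only facts to invoke are nonnegativity of $(s_i-1)(s_i-2)/2$ and the trivial algebra of moving a constant through a $\max$. The genuine difficulty — identifying the configuration of alternating-segment lengths that minimizes $\sum_i (s_i-1)(s_i-2)/2$ for a given $r$, and thereby turning the right-hand side into an explicit closed form — is exactly what the later analysis in Sections~\ref{sec:max_non_binary} and~\ref{sec:max_binary} carries out; this corollary merely records that the maximum-ball problem for fixed $r$ reduces to that minimization.
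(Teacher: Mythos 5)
Your proposal is correct and matches the paper's treatment: the corollary is an immediate consequence of substituting $\rho(\bfx)=r$ into~(\ref{eq:L1size}) and pulling the constant $r(n(q-1)-1)+2$ through the maximum, exactly as in Lemma~\ref{cla:argminmax}. The paper states the corollary without a separate proof for precisely this reason, so your bookkeeping argument is all that is needed.
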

Note that
\begin{align}
\label{obs:minsum}
\sum_{i=1}^{A(\bfx)} \frac{(s_i-1)(s_i-2)}{2} = 0 \iff \text{for each } 1\le i\le A(\bfx):\  s_i\in \{1,2\}.
\end{align}
The following claim is a straightforward result from the definitions of a run and an alternating segment.
\begin{lemma}
\label{cla:runandalt}
Let $n>0$ and let $\bfx\in\mathbb{Z}_q^n$ be a sequence. For $1\le i\le \rho(\bfx)$, denote by $r_i$ the length of the $i$-th run and by $\sigma_i\in \mathbb{Z}_q$ the symbol of the $i$-th run. Then all the maximal alternating segments of $\bfx$ have lengths at most two ($s_i\le 2$ for each $i$) if and only if for each $1\le i\le \rho(\bfx)-2$,  $\sigma_i\ne \sigma_{i+2}$ or  $r_{i+1}>1$.
\end{lemma}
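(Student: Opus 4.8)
The plan is to prove the biconditional by unpacking what a maximal alternating segment of length $\ge 3$ means in terms of the run structure. Recall that an alternating segment uses exactly two distinct symbols $\sigma,\sigma'$ in the pattern $\sigma\sigma'\sigma\sigma'\cdots$, so every run it touches has length $1$, except possibly the two runs at its ends which it may enter only partially. First I would set up notation: write $\bfx$ as a concatenation of its runs, with the $i$-th run being $\sigma_i^{r_i}$ for $1\le i\le\rho(\bfx)$, where consecutive run symbols differ, i.e.\ $\sigma_i\ne\sigma_{i+1}$.

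Next I would prove the contrapositive of the ``only if'' direction, or equivalently argue directly: suppose there is an index $i$ with $1\le i\le\rho(\bfx)-2$ such that $\sigma_i=\sigma_{i+2}$ and $r_{i+1}=1$. Then the three consecutive runs look like $\sigma_i^{r_i}\,\sigma_{i+1}\,\sigma_i^{r_{i+2}}$ (using $\sigma_{i+2}=\sigma_i$), and picking the last symbol of run $i$, the single symbol of run $i+1$, and the first symbol of run $i+2$ gives the length-$3$ alternating block $\sigma_i\,\sigma_{i+1}\,\sigma_i$. This sits inside some maximal alternating segment, which therefore has length $s\ge 3$. Hence not all maximal alternating segments have length $\le 2$; this establishes one direction.

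For the converse, I would suppose some maximal alternating segment $\bfx_{[a,b]}$ has length $s=b-a+1\ge 3$, so it equals $\sigma\sigma'\sigma\cdots$ with $\sigma\ne\sigma'$. Any ``internal'' position of this segment — any index strictly between $a$ and $b$ — forms a run of length exactly $1$, because its neighbors on both sides carry the other symbol. Pick three consecutive positions $a'\le a'+1\le a'+2$ inside $[a,b]$ (possible since $s\ge 3$); then position $a'+1$ is an entire run of length $1$, say run $i+1$, and positions $a'$ and $a'+2$ lie in runs $i$ and $i+2$ respectively with $\sigma_i=\sigma_{i+2}$ (the two share the symbol opposite to that at $a'+1$) and $r_{i+1}=1$. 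One must check $1\le i\le\rho(\bfx)-2$, which holds because runs $i$, $i+1$, $i+2$ all exist. This contradicts the right-hand side of the biconditional, completing the proof.

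The main obstacle, though a mild one, is bookkeeping at the boundaries of the alternating segment: the first and last runs meeting $\bfx_{[a,b]}$ may extend beyond it or have length $>1$, so I must be careful to extract the three consecutive positions from the \emph{interior} of the segment rather than its endpoints, and to translate ``positions'' back into ``run indices'' cleanly. I expect no genuine difficulty beyond making the correspondence between the index-level description and the run-level description precise; once that dictionary is fixed, both implications are short. I would keep the writeup brief, essentially two short paragraphs mirroring the two directions above.
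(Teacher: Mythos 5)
Your proof is correct. Note that the paper itself offers no proof of this lemma --- it is dismissed as ``a straightforward result from the definitions of a run and an alternating segment'' --- so there is no argument of record to compare against; your two-directional verification (a length-$3$ alternating block $\sigma_i\,\sigma_{i+1}\,\sigma_i$ arises exactly from three consecutive runs with $\sigma_i=\sigma_{i+2}$ and $r_{i+1}=1$, and conversely any maximal alternating segment of length $\ge 3$ contains an interior position forming a singleton run flanked by equal symbols) is precisely the routine check the authors are implicitly relying on. The one point you rightly flag --- making sure the middle of the three chosen positions is strictly interior to the segment so that it is a whole run of length $1$ --- is handled correctly by taking the first three positions of a segment of length $s\ge 3$, so there is no gap.
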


The maximum value of $|{\cL_1(\bfx)}|$ for non-binary alphabet was given in~\cite{SGD14} without a proof.
For $q=2$ the value of $|{\cL_1(\bfx)}|$ given in~\cite{SGD14} without a proof is not accurate and we will give the
exact value with a complete proof.

\begin{theorem}
\label{the: maximal non-binary ell-ball}
The maximum FLL $1$-balls
 are the balls centered at ${\bfx\in\mathbb{Z}_q^n}$,
such that the number of runs in $\bfx$ is $n$ (i.e., any two consecutive symbols are different) and ${x_i\ne x_{i+2}}$  for all $1\le i\le n-2$.
In addition, the maximum size of an FLL $1$-ball is,
$$\max_{\bfx\in\Z_q^n}|{\cL_1(\bfx)}| = n^2(q-1) - n + 2.$$
\end{theorem}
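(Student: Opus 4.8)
The plan is to combine Corollary~\ref{cor: max L1 for fix num of runs} with an analysis over all possible values of the number of runs $r$, and to show that $r=n$ together with the stronger ``no repeated symbol at distance two'' condition is simultaneously optimal. First I would observe that by~(\ref{eq:L1size}) and Corollary~\ref{cor: max L1 for fix num of runs}, for each fixed $r$ with $1\le r\le n$ the maximum ball size among sequences with exactly $r$ runs is $r(n(q-1)-1)+2-M(r)$, where $M(r)\triangleq \min\{\sum_i (s_i-1)(s_i-2)/2 : \rho(\bfx)=r\}$. The term $r(n(q-1)-1)$ is increasing in $r$ (since $n(q-1)-1\ge 1$ for $q\ge 2$, $n\ge 1$), so intuitively we want $r$ as large as possible, i.e. $r=n$. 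When $r=n$ every run has length one, so by Lemma~\ref{cla:runandalt} the alternating segments all have length at most two precisely when $x_i\ne x_{i+2}$ for all $1\le i\le n-2$; for $q\ge 3$ such sequences exist (e.g. the periodic word $012012\cdots$), so $M(n)=0$ by~(\ref{obs:minsum}). This already gives a sequence achieving $n(n(q-1)-1)+2 = n^2(q-1)-n+2$, and identifies the claimed centres.

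The remaining work is to prove this value is actually the \emph{global} maximum, i.e. that no sequence with $r<n$ runs can do better. For that I would show that for every $r\le n-1$,
\[
r(n(q-1)-1)+2-M(r) \;<\; n^2(q-1)-n+2 .
\]
The cleanest route is to bound $M(r)$ from below. If $\bfx$ has $r$ runs with run lengths $r_1,\dots,r_r$ summing to $n$, a run of length $r_j$ contributes at least one alternating segment containing a block of $r_j-1$ ``excess'' repeated symbols; one can check (by summing the quadratic penalty over the maximal alternating segments, which partition overlaps of consecutive runs) a bound of the shape $M(r)\ge f(n-r)$ for some convex increasing function $f$ with $f(0)=0$ — crudely, each unit of $n-r$ forces the penalty up, and in fact $\sum_i (s_i-1)(s_i-2)/2 \ge \binom{n-r}{2}$ type estimates hold when all excess symbols pile into one segment, but we need the reverse: the minimum penalty is still at least linear, roughly $M(r)\ge (n-r)$ when $n-r\ge 1$ and grows. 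Then since $n(n(q-1)-1) - r(n(q-1)-1) = (n-r)(n(q-1)-1)\ge (n-r)$, we would need $M(r) \ge (n-r)\cdot(\text{something})$ comparing against the gap $(n-r)(n(q-1)-1)$; because $n(q-1)-1$ can be as small as $1$ (when $q=2$, but that case is excluded here), for $q\ge 3$ we have $n(q-1)-1\ge 2n-1$, which dominates any polynomial penalty bound easily, so the inequality is comfortable. I would package this as: for $q\ge 3$, $(n-r)(n(q-1)-1) \ge (n-r)(2n-1) \ge \binom{n}{2} \ge M(r)$ with strict inequality when $r<n$, using the trivial upper bound $M(r)\le \binom{n-1}{2}$ on the penalty.

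The main obstacle I anticipate is getting a clean, provably correct bound on $M(r)$ — or rather, on the maximum possible penalty, since what we really need is that the \emph{gain} in the linear term from increasing $r$ to $n$ always outweighs whatever penalty reduction $M(r)$ might offer. Since $M(r)\ge 0$ always, it actually suffices to show $r(n(q-1)-1) + 2 \le n(n(q-1)-1)+2$ for $r<n$ with the right-hand side attained with zero penalty; but $M(n)=0$ requires $q\ge 3$ (for existence of a $0$-penalty $n$-run word), and for $r<n$ we just use $M(r)\ge 0$ to get $r(n(q-1)-1)+2-M(r)\le r(n(q-1)-1)+2 < n(n(q-1)-1)+2$. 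So in fact the argument collapses to the triviality that the linear coefficient is positive, plus the existence of a zero-penalty word with $n$ runs over $\Z_q$, $q\ge 3$ — which is exactly Lemma~\ref{cla:runandalt} applied to a periodic ternary pattern. I would therefore structure the proof as: (i) cite Corollary~\ref{cor: max L1 for fix num of runs}; (ii) note $n(q-1)-1\ge 1$ so the per-run gain is positive and $|\cL_1(\bfx)|\le \rho(\bfx)(n(q-1)-1)+2\le n(n(q-1)-1)+2$ with the last inequality strict unless $\rho(\bfx)=n$; (iii) among $\rho(\bfx)=n$ sequences, invoke~(\ref{obs:minsum}) and Lemma~\ref{cla:runandalt} to see the penalty is $0$ exactly for the claimed centres and such centres exist since $q\ge 3$; (iv) conclude equality $n^2(q-1)-n+2$ and exactly that this is attained at, and only at, those centres.
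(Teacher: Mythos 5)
Your final argument (steps (i)--(iv)) is correct and is essentially the paper's own proof: maximize the positive linear term $r(n(q-1)-1)$ at $r=n$, use the nonnegativity of the penalty $\sum_i (s_i-1)(s_i-2)/2$ together with Lemma~\ref{cla:runandalt} and the fact that $q\ge 3$ guarantees a zero-penalty $n$-run word, and read off both the value $n^2(q-1)-n+2$ and the characterization of the centres. The long detour about lower-bounding $M(r)$ (which contains a false chain of inequalities such as $(n-r)(2n-1)\ge\binom{n}{2}$) is unnecessary, as you yourself conclude, since $M(r)\ge 0$ suffices.
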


\begin{proof}
Corollary~\ref{cor: max L1 for fix num of runs} implies that
\begin{footnotesize}
\begin{align*}
\max_{\bfx\in\mathbb{Z}_q^n}|\cL_1(\bfx)| & =
\max_{r\in\{1,\ldots, n\}}\left\{\max_{\substack{\bfx\in\mathbb{Z}_q^n \\ \rho(\bfx)=r}}|\cL_1(\bfx)|\right\}= \max_{r\in\{1,\ldots, n\}}\left\{ r(n(q-1)-1)+2 - \min_{\substack{\bfx\in\mathbb{Z}_q^n\\ \rho(\bfx)=r}} \left\{\sum_{i=1}^{A(\bfx)} \frac{(s_i-1)(s_i-2)}{2}\right\}\right\}.
\end{align*}\end{footnotesize}\\
Clearly, $r(n(q-1)-1)+2$ is maximized for $r=n$ and
therefore, using (\ref{obs:minsum}), we conclude that $\max_{\bfx\in\mathbb{Z}_q^n}|\cL_1(\bfx)|$ can be obtained for each $\bfx\in\mathbb{Z}_q^n$ such that $\rho(\bfx)=n$ and $s_i\le 2$ for each $i$. Note that $\sigma_i = x_i$ since $r=n$.
By Lemma~\ref{cla:runandalt}, it implies that $x_i\ne x_{i+2}$ or $r_{i+1}>1$ for each $1\le i\le n-2$.
Since $q>2$, it follows that there exists such an assignment for the symbols of each run such that $x_i\ne x_{i+2}$ for each $1\le i\le r-2$. It follows that
\begin{align*}
\max_{\bfx\in\mathbb{Z}_q^n}|\cL_1(\bfx)| & =   n^2(q-1)-n+2.
\end{align*}
\end{proof}

\subsection{The Binary Case}
\label{sec:max_binary}

The analysis to find the maximum ball for binary sequences is more difficult, since by definition of a run, there is no sequence $\bfx$ with $n$ runs such that $x_i\ne x_{i+2}$ (see Theorem~\ref{the: maximal non-binary ell-ball}) for some~$i$. Note also that since in the binary case two maximal alternating segments can not overlap it holds that $\sum_{i=1}^{A({\bfx})}s_i = n$ for any binary sequence $\bfx$.

For a sequence $\bfx\in\mathbb{Z}_2^n$, the \emph{alternating segments profile} of $\bfx$ is $(s_1,s_2,\ldots,s_{A(\bfx)})$. Note that each alternating segments profile defines exactly two binary sequences.

\begin{lemma}
\label{lem: q=2 runs and segments}
If $\bfx\in\mathbb{Z}_2^n$ then $\rho(\bfx) = n + 1 - A(\bfx)$.
\end{lemma}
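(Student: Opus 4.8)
The plan is to relate the two quantities by tracking how consecutive maximal alternating segments overlap. First I would recall that in a binary sequence every maximal alternating segment $\bfx_{[i,j]}$ is a block of strictly alternating symbols, i.e., a maximal substring of the form $\sigma\sigma'\sigma\sigma'\cdots$ with $\sigma\ne\sigma'$. Two such maximal segments can overlap in at most one position, since if they shared two consecutive positions they would share the alternation pattern there and could be merged, contradicting maximality; and in fact consecutive maximal alternating segments share exactly one position, the place where a run of length $\ge 2$ occurs (the doubled symbol that breaks the alternation belongs to the tail of one segment and the head of the next). As the excerpt already notes, this gives $\sum_{i=1}^{A(\bfx)} s_i = n$ when we count each position once—wait, more carefully: if we sum the lengths $s_i$ counting overlaps, each internal overlap position is counted twice, so $\sum_i s_i = n + (\text{number of overlap positions})$. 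I would make this bookkeeping precise.

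The key step is then to count runs against segment boundaries. Between segment $i$ and segment $i+1$ there is exactly one overlap position, and each such overlap position sits inside a run of length $\ge 2$ (it is the spot where $x_k = x_{k+1}$). Conversely, I would argue that the number of runs $\rho(\bfx)$ and the number of maximal alternating segments $A(\bfx)$ are linked because each new run either starts a new alternating segment or continues the current one, and each new maximal alternating segment begins precisely when the alternation pattern is broken by a repeated symbol. The cleanest route is induction on $n$: check the base case $n=1$, where $\rho = A = 1$ and the formula reads $1 = 1 + 1 - 1$. For the inductive step, append one symbol $x_{n+1}$ to $\bfx\in\mathbb{Z}_2^n$ and consider two cases. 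If $x_{n+1} \ne x_n$, then $\rho$ increases by one, and the last alternating segment extends by one symbol so $A$ is unchanged; the identity is preserved. If $x_{n+1} = x_n$, then $\rho$ is unchanged, but a new maximal alternating segment starts (the new segment is the two-symbol block $x_n x_{n+1}$, since $x_n x_{n+1}$ cannot extend leftward past the repeat), so $A$ increases by one; again $\rho = (n+1) + 1 - A$ continues to hold, since the right-hand side's $+1$ from $n\mapsto n+1$ is cancelled by the $+1$ in $A$.

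I expect the main obstacle to be stating the inductive step about $A(\bfx)$ cleanly: one has to be careful that when $x_{n+1}=x_n$ the appended symbol genuinely opens a \emph{new} maximal alternating segment of length exactly two rather than somehow lengthening the previous one, and that when $x_{n+1}\ne x_n$ it does \emph{not} open a new segment. Both facts follow from the definition of maximality—a length-two alternating block $\sigma'\sigma$ at the end cannot be preceded by $\sigma$ (that would force $x_{n-1}=x_{n+1}$, i.e., $\sigma=\sigma$... actually in binary $x_{n-1}=x_n$ would contradict that $x_{n-1}x_n$ was part of an alternating run, so one must instead observe $x_{n-1}\ne x_n$ forces the previous segment to end at $n$)—so the delicate point is just handling the boundary between the old last segment and the newly created one. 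Once that case analysis is pinned down, the formula $\rho(\bfx) = n + 1 - A(\bfx)$ drops out immediately. An alternative non-inductive proof: since consecutive maximal alternating segments overlap in exactly one position and there are $A(\bfx)-1$ such overlaps, and each overlap corresponds bijectively to a position $k$ with $x_k = x_{k+1}$, while the number of such positions is exactly $n - \rho(\bfx)$ (total adjacent pairs minus the $\rho(\bfx)-1$ pairs that straddle a run boundary, giving $n-1-(\rho(\bfx)-1) = n-\rho(\bfx)$), we get $A(\bfx) - 1 = n - \rho(\bfx)$, which rearranges to the claim.
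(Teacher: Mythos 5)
Your overall strategy---induction on $n$ by appending one symbol, rather than the paper's induction on $A(\bfx)$ by peeling off the last maximal alternating segment---is sound, and your closing count ($A(\bfx)-1$ segment breaks versus the $n-\rho(\bfx)$ positions $k$ with $x_k=x_{k+1}$) is arguably the cleanest route of all. However, the write-up leans repeatedly on a false structural claim: that consecutive maximal alternating segments of a \emph{binary} word overlap in exactly one position. They do not overlap at all. If $x_k=x_{k+1}$, the segment ending there ends at position $k$ and the next one begins at position $k+1$; the two positions of the ``doubled symbol'' belong to two different segments, one each. This is precisely why the paper can assert $\sum_{i=1}^{A(\bfx)}s_i=n$ for binary $\bfx$ --- a fact you quote and then immediately contradict with ``$\sum_i s_i = n + (\text{number of overlap positions})$.'' Overlaps occur only for $q\ge 3$, e.g.\ the segments $12$ and $202$ of $1120212$ share the symbol $2$ in position $3$.

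The error propagates into your inductive step: when $x_{n+1}=x_n$ you identify the new maximal alternating segment as the two-symbol block $x_nx_{n+1}$, but that block is not an alternating segment at all, since its two symbols are equal. The new segment is the single symbol $x_{n+1}$, of length one; the old last segment $[i,n]$ remains maximal because $[i,n+1]$ is not alternating. With that correction the bookkeeping $A\mapsto A+1$, $\rho\mapsto\rho$ is right, and the other case ($x_{n+1}\ne x_n$: the last segment extends, $A$ unchanged, $\rho\mapsto\rho+1$) is fine as written, so the identity is preserved in both cases. In short, the proof is repairable --- replace ``overlap in one position'' everywhere by ``are adjacent, with the break occurring exactly at a position $k$ where $x_k=x_{k+1}$,'' which also makes your non-inductive bijection argument correct --- but as currently stated the combinatorial fact your argument rests on is wrong, and it is exactly the binary-specific fact (non-overlapping segments) that the paper isolates before giving its own induction.
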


\begin{proof}
Let $\bfx\in\mathbb{Z}_2^n$ be a sequence and let $\bfx_{[i,j]}$ and $\bfx_{[i',j']}$ be two consecutive maximal alternating segments such that $i <  i'$. Since $\bfx$ is a binary sequence, it follows that two maximal alternating segments cannot overlap, and hence $i'=j+1$.  Now, let $\alpha=A(\bfx)$ and we continue to prove the claim of the lemma by induction on $\alpha$ for any given $n\ge 1$. For $\alpha=1$, there is one maximal alternating segment whose length is clearly $n$ which consists of alternating symbols, i.e.,  there are $\rho(\bfx)=n$ runs as required. Assume the claim holds for any $\alpha'$ such that $1\le \alpha' <  \alpha$ and let $\bfx\in\mathbb{Z}_2^n$ be a sequence with exactly $\alpha$ maximal alternating segments.
Denote by  $\bfx'$ the sequence that is obtained from $\bfx$ by deleting its last maximal alternating  segment $\bfx''$.
By the induction hypothesis
$$\rho(\bfx')=(n-s_\alpha) + 1 - (\alpha-1) = n + 2 - s_\alpha  - t ,$$
where $s_\alpha$ is the length of $\bfx''$.
Clearly, the first symbol of $\bfx''$ is equal to the last symbol in $\bfx'$. Thus, 
$$\rho(\bfx) = \rho(\bfx'\bfx'') = \rho(\bfx') + s_\alpha - 1 = n + 2 - s_\alpha  - \alpha +s_\alpha - 1 = n + 1 - \alpha.$$
\end{proof}


Notice that Lemma~\ref{lem: q=2 runs and segments} does not hold for alphabet size $q> 2$.
To clarify, consider the sequences $\bfx_1 = 0120,\ \bfx_2 = 0101$ and $\bfx_3=0102 $, each of the sequences has four runs even though they differ in the number of maximal alternating segments; $A(\bfx_1) = 3,\  A(\bfx_2 ) = 1$ and $A(\bfx_3) = 2$.

\begin{definition}
For a  positive integer $\alpha$,  $\bfx^{(\alpha)}\in\mathbb{Z}_2^n$ is an {\bf\emph{$\alpha$-balanced sequence}} if ${A(\bfx)=\alpha}$
and ${s_i\in \{\lceil\frac{n}{\alpha}\rceil, \lceil\frac{n}{\alpha}\rceil - 1\}}$ for all $i\in\{1,\ldots,\alpha\}$. 
\end{definition}

\begin{lemma}
\label{lem: q=2 max ball for fix k}
If $n$ is a positive integer and $\alpha\in\{1,\ldots,n\}$ then
$$\argmax_{\substack{\bfx\in\mathbb{Z}_2^n \\ A(\bfx)=\alpha}}|\cL_1(\bfx)| = \left\{\bfx\in\mathbb{Z}_2^n:
\bfx\text{ is an } \alpha\text{-balanced sequence}  \right\}.$$
\end{lemma}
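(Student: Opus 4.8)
The plan is to fix $\alpha$ and reduce the problem to a pure optimization over the alternating segments profile $(s_1,\dots,s_\alpha)$ with the constraint $\sum_{i=1}^\alpha s_i = n$ and $s_i \ge 1$. By Lemma~\ref{lem: q=2 runs and segments}, every binary sequence $\bfx$ with $A(\bfx)=\alpha$ has exactly $\rho(\bfx) = n+1-\alpha$ runs, so the first term $\rho(\bfx)\cdot(n(q-1)-1)+2$ in~(\ref{eq:L1size}) is the same constant for all such sequences (with $q=2$ it is $(n+1-\alpha)(n-1)+2$). Hence by the same reasoning as in Lemma~\ref{cla:argminmax}, maximizing $|\cL_1(\bfx)|$ over $\{\bfx : A(\bfx)=\alpha\}$ is equivalent to minimizing $f(s_1,\dots,s_\alpha) = \sum_{i=1}^\alpha \frac{(s_i-1)(s_i-2)}{2}$ subject to $\sum_i s_i = n$, $s_i \ge 1$ integers. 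Conversely, I will need to check that every profile satisfying these constraints is actually realized by a binary sequence with $A(\bfx)=\alpha$; this is where Lemma~\ref{cla:runandalt} (or a direct construction: concatenate the alternating segments, flipping the starting symbol of each new segment relative to where the previous one ended, so that a genuine run boundary is created) comes in, and it is essentially bookkeeping.

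The heart of the argument is thus the claim: among integer vectors $(s_1,\dots,s_\alpha)$ with $s_i\ge 1$ and $\sum s_i = n$, the function $f$ is minimized exactly by the balanced vectors, i.e. those with all $s_i \in \{\lceil n/\alpha\rceil, \lceil n/\alpha\rceil - 1\}$. I would prove this by a standard exchange/smoothing argument: note $g(s) := \frac{(s-1)(s-2)}{2}$ is strictly convex on the integers in the sense that its second difference $g(s+1) - 2g(s) + g(s-1) = 1 > 0$, equivalently $g(s+1) - g(s) = s-1$ is strictly increasing. So if some profile has two coordinates $s_i, s_j$ with $s_i \ge s_j + 2$, then replacing $(s_i, s_j)$ by $(s_i - 1, s_j + 1)$ strictly decreases $f$, since the change is $\bigl(g(s_i-1) - g(s_i)\bigr) + \bigl(g(s_j+1) - g(s_j)\bigr) = -(s_i - 1) + s_j = s_j - s_i + 1 \le -1 < 0$. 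Therefore any minimizer has all coordinates within $1$ of each other, which forces exactly the balanced profiles; and conversely all balanced profiles have the same multiset of values (the number of coordinates equal to $\lceil n/\alpha\rceil$ is determined by $n \bmod \alpha$), hence the same value of $f$, so all of them are minimizers. This pins down $\argmax$ exactly as the set of $\alpha$-balanced sequences.

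The main obstacle I anticipate is not the convexity/smoothing step, which is routine, but the realizability direction — making sure that the set of binary sequences with $A(\bfx)=\alpha$ is in bijective-enough correspondence with profiles $(s_1,\dots,s_\alpha)$, $s_i\ge1$, $\sum s_i = n$, so that the optimization over profiles genuinely matches the optimization over sequences. One must verify that consecutive maximal alternating segments in a binary word do not overlap (already noted in the proof of Lemma~\ref{lem: q=2 runs and segments}) and that the boundary between them is forced (two equal consecutive symbols), so that an arbitrary prescribed profile can be realized by choosing, for each segment after the first, its leading symbol to equal the trailing symbol of the previous segment — this creates the required run of length $\ge 2$ straddling the boundary and guarantees the segments are exactly the maximal alternating segments we intended. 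I would also record that each profile yields exactly two sequences (the two choices of global starting symbol), matching the remark before the lemma, so nothing is lost. With realizability in hand, the corollary of the smoothing argument is immediate: $\argmax_{A(\bfx)=\alpha}|\cL_1(\bfx)|$ equals the set of $\alpha$-balanced sequences.
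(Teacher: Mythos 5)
Your proposal is correct and follows essentially the same route as the paper: fix $\rho(\bfx)=n+1-\alpha$ via Lemma~\ref{lem: q=2 runs and segments} so that the first term of~(\ref{eq:L1size}) is constant, then minimize the remaining separable convex sum over profiles with $\sum_i s_i=n$ by an exchange/smoothing argument (the paper first reduces to minimizing $\sum_i s_i^2$ and then performs the same $(s_i,s_j)\mapsto(s_i+1,s_j-1)$ swap). Your added attention to realizability of profiles is handled in the paper by the remark that each alternating segments profile defines exactly two binary sequences, so the two arguments coincide in substance.
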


\begin{proof}
For a sequence $\bfx\in\mathbb{Z}_2^n$ such that $A(\bfx)=\alpha$, Lemma~\ref{lem: q=2 runs and segments} implies that $\rho(\bfx) =n+1-\alpha$.
Hence, by Lemma~\ref{cla:argminmax},
\begin{align*}
	\argmax_{\substack{\bfx\in\mathbb{Z}_2^n\\ A(\bfx)=\alpha}}|\cL_1(\bfx)| & = \argmin_{\substack{\bfx\in\mathbb{Z}_2^n\\ A(\bfx)=\alpha}} \sum_{i=1}^{\alpha} \frac{(s_i-1)(s_i-2)}{2}  \\ &
	= \argmin_{\substack{\bfx\in\mathbb{Z}_2^n\\ A(\bfx)=\alpha}}
	\sum_{i=1}^\alpha (s_i^2 - 3s_i + 2) \\
	& = \argmin_{\substack{\bfx\in\mathbb{Z}_2^n\\ A(\bfx)=\alpha}}\left(\sum_{i=1}^\alpha s_i^2 - 3 \sum_{i=1}^\alpha s_i + 2\alpha\right) \\ &
	 \stackrel{{(a)}}{=} \argmin_{\substack{\bfx\in\mathbb{Z}_2^n\\ \alpha(\bfx)=t}}\left(\sum_{i=1}^t s_i^2 - 3n + 2\alpha \right) \\
	& = \argmin_{\substack{\bfx\in\mathbb{Z}_2^n\\ A(\bfx)=\alpha}}\sum_{i=1}^\alpha s_i^2,
\end{align*}
where $(a)$ holds since alternating segments cannot overlap for binary sequences and therefore ${\sum_{i=1}^\alpha s_i=n}$.

Assume $\bfx\in\mathbb{Z}_2^n$ is a sequence such that $A(\bfx)=\alpha$, $(s_1,\ldots,s_\alpha)$ is the alternating segments profile of $\bfx$ and $\sum_{i=1}^\alpha s_i^2$ is minimal among all sequences in $\mathbb{Z}_2^n$. Assume to the contrary that $\bfx$ is not an $\alpha$-balanced sequence. Then there exist indices $i\ne j$
such that $s_i\le \left\lceil\frac{n}{\alpha}\right\rceil -1$ and $s_j>\left\lceil\frac{n}{\alpha}\right\rceil$ or there exist indices $i\ne j$
such that $s_i< \left\lceil\frac{n}{\alpha}\right\rceil -1$ and $s_j\ge \left\lceil\frac{n}{\alpha}\right\rceil$. Consider a sequence $\bfx'$ with the alternating segments profile $(\nu_1,\ldots,\nu_\alpha)$ where $$\nu_k=\begin{cases} s_i + 1 & \text{if } k=i\\
				 s_j - 1 & \text{if } k=j\\
				 s_k    & \text{otherwise}.
\end{cases}$$
Therefore,
\begin{align*}
	\sum_{k=1}^\alpha \nu_k^2 - \sum_{k=1}^\alpha s_k^2 & = \sum_{k=1}^\alpha \left( \nu_k^2 - s_k^2  \right) =  (\nu_i^2 - s_i^2) + (\nu_j^2 - s_j^2) \\
	& =\left((s_i+1)^2-s_i^2\right) +  \left((s_j-1)^2-s_j^2\right)  \\
	& = \left(s_i^2+2s_i+1-s_i^2\right) + \left(s_j^2-2s_j+1-s_j^2\right) \\
	& = 2(s_i - s_j +1) \\
	& < 2\left(\left\lceil\frac{n}{\alpha}\right\rceil -1-\left\lceil\frac{n}{\alpha}\right\rceil +1\right) = 0,
\end{align*}
and hence $\sum_{k=1}^\alpha \nu_k^2 < \sum_{k=1}^\alpha s_k^2$. This implies that if $\bfx$ is not an $\alpha$-balanced sequence, then $ \sum_{k=1}^\alpha s_k^2$ is not minimal, a contradiction.
Thus,
$$\argmax_{\substack{\bfx\in\mathbb{Z}_2^n\\ A(\bfx)=\alpha}}|\cL_1(\bfx)| =  \argmin_{\substack{\bfx\in\mathbb{Z}_2^n\\ A(\bfx)=\alpha}}\sum_{i=1}^\alpha s_i^2
 = \left\{\bfx\in\mathbb{Z}_2^n\ :\  \bfx\text{ is an } \alpha\text{-balanced sequence}  \right\}.$$
\end{proof}

\begin{lemma}
\label{lem: k balanced ball size}
Let $\bfx^{(\alpha)}$ be an $\alpha$-balanced sequence of length $n$. Then,
\begin{small}
\begin{align*}
\left|\cL_1\left(\bfx^{(\alpha)}\right)\right| & = (n+1-\alpha)(n-1) +2
 - \frac{k}{2}\left(\left\lceil\frac{n}{\alpha}\right\rceil-1\right)
\left(\left\lceil\frac{n}{\alpha}\right\rceil-2\right)
 - \frac{\alpha-k}{2}\left(\left\lceil\frac{n}{\alpha}\right\rceil-2\right)
 \left(\left\lceil\frac{n}{\alpha}\right\rceil-3\right),
 \end{align*}
 \end{small}
where $k\equiv n\pmod \alpha$ and $1\le k\le \alpha$.
\end{lemma}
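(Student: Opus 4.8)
The plan is to compute $\left|\cL_1\left(\bfx^{(\alpha)}\right)\right|$ by direct substitution into the formula~(\ref{eq:L1size}), using the structural facts about $\alpha$-balanced sequences established in the preceding lemmas. First I would recall that for a binary $\alpha$-balanced sequence $\bfx^{(\alpha)}$ of length $n$, Lemma~\ref{lem: q=2 runs and segments} gives $\rho\left(\bfx^{(\alpha)}\right) = n+1-\alpha$, so the first term of~(\ref{eq:L1size}) becomes $(n+1-\alpha)\cdot(n(q-1)-1) + 2 = (n+1-\alpha)(n-1) + 2$ since $q=2$. This matches the leading part of the claimed expression, so the entire content of the lemma is the evaluation of the sum $\sum_{i=1}^{\alpha}\frac{(s_i-1)(s_i-2)}{2}$.

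Next I would pin down the alternating segments profile. By definition of an $\alpha$-balanced sequence, each $s_i\in\left\{\left\lceil\frac{n}{\alpha}\right\rceil, \left\lceil\frac{n}{\alpha}\right\rceil-1\right\}$, and since $\sum_{i=1}^{\alpha}s_i = n$ (alternating segments cannot overlap in the binary case), a counting argument fixes exactly how many $s_i$ take each value. Writing $n = \alpha\left(\left\lceil\frac{n}{\alpha}\right\rceil-1\right) + k$ for the unique $k$ with $1\le k\le \alpha$, one sees that exactly $k$ of the segments have length $\left\lceil\frac{n}{\alpha}\right\rceil$ and the remaining $\alpha-k$ have length $\left\lceil\frac{n}{\alpha}\right\rceil-1$. (One should double-check the boundary convention: when $\alpha\mid n$ we want $k=\alpha$, i.e.\ all segments equal, which is consistent with $n = \alpha\left(\left\lceil\frac{n}{\alpha}\right\rceil-1\right)+\alpha$ in that case.) Substituting, the sum splits as
$$\sum_{i=1}^{\alpha}\frac{(s_i-1)(s_i-2)}{2} = \frac{k}{2}\left(\left\lceil\tfrac{n}{\alpha}\right\rceil-1\right)\left(\left\lceil\tfrac{n}{\alpha}\right\rceil-2\right) + \frac{\alpha-k}{2}\left(\left\lceil\tfrac{n}{\alpha}\right\rceil-2\right)\left(\left\lceil\tfrac{n}{\alpha}\right\rceil-3\right),$$
and plugging this into Corollary~\ref{cor: max L1 for fix num of runs} (or directly into~(\ref{eq:L1size})) yields the stated formula.

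This argument is essentially a bookkeeping computation, so there is no real obstacle; the one place that requires care is the edge-case handling of $k$ and $\left\lceil\frac{n}{\alpha}\right\rceil$. Specifically, I would verify that the two displayed product terms remain correct (and nonnegative, or at least that their contributions cancel appropriately) when $\left\lceil\frac{n}{\alpha}\right\rceil \in \{1,2,3\}$ — for instance when $\alpha = n$ every segment has length $1$, $k=n$, and both product terms vanish, recovering $\left|\cL_1\left(\bfx^{(n)}\right)\right| = 0\cdot(n-1)+2 = 2$, which is correct since the only sequences with $n$ alternating segments are $0^n$ and $1^n$ whose FLL $1$-ball has size... actually here one should be careful, as $\alpha=n$ forces $\rho = 1$, and indeed~(\ref{eq:L1size}) gives $1\cdot(n-1)+2 = n+1$ — I would recheck this small case against the formula to make sure the claimed identity is stated for the intended range of $\alpha$. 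Apart from confirming these boundary values, the proof is a one-line substitution once the profile count is justified.
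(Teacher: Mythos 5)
Your proposal matches the paper's proof essentially verbatim: both substitute $\rho(\bfx^{(\alpha)})=n+1-\alpha$ from Lemma~\ref{lem: q=2 runs and segments} into~(\ref{eq:L1size}), determine from $\sum_i s_i=n$ that exactly $k=n-\alpha(\lceil n/\alpha\rceil-1)$ segments have length $\lceil n/\alpha\rceil$ (so $k\equiv n\pmod\alpha$ with $1\le k\le\alpha$), and split the sum accordingly. Your closing sanity check for $\alpha=n$ (after correcting the momentary slip where you wrote $0\cdot(n-1)+2$ instead of $1\cdot(n-1)+2=n+1$) is consistent with the stated formula, so the argument is correct.
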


\begin{proof}
By (\ref{eq:L1size}) we have that
\begin{align}
\label{eq:L1ofxk}
\left|\cL_1\left(\bfx^{(\alpha)}\right)\right| = \rho\left(\bfx^{(\alpha)}\right)\cdot (n-1)+2 - \sum_{i=1}^{\alpha}\frac{(s_i-1)(s_i-2)}{2},
\end{align}
and Lemma~\ref{lem: q=2 runs and segments} implies that
$ \rho\left(\bfx^{(\alpha)}\right)=n+1-\alpha$.
Let $k$ be the number of entries in the alternating segments profile of $\bfx^{(\alpha)}$ such that $s_i = \lceil\frac{n}{\alpha}\rceil$. Note forther that $\sum_{i=1}^\alpha s_i = n$ and $s_i\in\{\lceil\frac{n}{\alpha}\rceil,\lceil\frac{n}{\alpha}\rceil-1 \}$ for $1\le i\le \alpha$. Hence,
$$k \left\lceil\frac{n}{\alpha}\right\rceil + (\alpha-k) \left(\left\lceil\frac{n}{\alpha}\right\rceil-1\right) = n,$$
which is equivalent to
$$k = n- \alpha\left(\left\lceil\frac{n}{\alpha}\right\rceil - 1\right).$$
Therefore, $k$ is the value between $1$ to $\alpha$ such that $k\equiv n\pmod \alpha$. Thus, by (\ref{eq:L1ofxk}) we have that

\begin{small}
\begin{align*}
\left|\cL_1\left(\bfx^{(\alpha)}\right)\right| & = (n+1-\alpha)(n-1) +2
 - \frac{k}{2}\left(\left\lceil\frac{n}{\alpha}\right\rceil-1\right)
\left(\left\lceil\frac{n}{\alpha}\right\rceil-2\right)
 - \frac{\alpha-k}{2}\left(\left\lceil\frac{n}{\alpha}\right\rceil-2\right)
 \left(\left\lceil\frac{n}{\alpha}\right\rceil-3\right).
 \end{align*}
 \end{small}\end{proof}
By Lemma~\ref{lem: q=2 max ball for fix k} we have that
\begin{align*}
    \max_{x\in\mathbb{Z}_2^n}|\cL_1(\bfx)| &
    = \max_{1\le \alpha\le n}\left\{ \max_{\substack{\bfx\in\mathbb{Z}_2^n \\ A(\bfx)=\alpha}}|\cL_1(\bfx)|\right\}
    = \max_{1\le \alpha\le n}\left\{\left|\cL_1\left(\bfx^{(\alpha)}\right)\right|\right\},
\end{align*}
\noindent
and the size $\left|\cL_1\left(\bfx^{(\alpha)}\right)\right|$  for $1\le \alpha\le n$ is given in Lemma~\ref{lem: k balanced ball size}.
Hence, our goal is to find the set
$$\mathsf{A} \triangleq  \argmax_{1\le \alpha\le n}\left\{\left|\cL_1\left(\bfx^{(\alpha)}\right)\right|\right\},$$
i.e., for which values of $\alpha$ the maximum of $|\cL_1\left(\bfx^{(\alpha)} \right)|$ is obtained. The answer for this question is given in the following lemma whose proof can be found in the Appendix.


\begin{lemma}
\label{lam : max t}
Let $\bfx^{(\alpha)}$ be an $\alpha$-balanced sequence of length $n>1$. Then,
 $$\left|\cL_1\left(\bfx^{(\alpha)}\right)\right|> \left|\cL_1\left(\bfx^{(\alpha-1)}\right)\right|$$
if and only if  $n>2(\alpha-1)\alpha$.
\end{lemma}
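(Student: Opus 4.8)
The plan is to work directly from the closed form for $\left|\cL_1\left(\bfx^{(\alpha)}\right)\right|$ given in Lemma~\ref{lem: k balanced ball size} and to form the difference $\Delta(\alpha) \triangleq \left|\cL_1\left(\bfx^{(\alpha)}\right)\right| - \left|\cL_1\left(\bfx^{(\alpha-1)}\right)\right|$, showing that $\Delta(\alpha) > 0$ if and only if $n > 2(\alpha-1)\alpha$. The first term of the formula, $(n+1-\alpha)(n-1)+2$, contributes a clean $-(n-1)$ to $\Delta(\alpha)$, so the whole game is in controlling the two subtracted quadratic-in-$\lceil n/\alpha\rceil$ terms. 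The main annoyance is that $\lceil n/\alpha\rceil$ and $\lceil n/(\alpha-1)\rceil$ are floor/ceiling expressions that need not change predictably as $\alpha$ decreases, and the residues $k \equiv n \pmod \alpha$ and $k' \equiv n \pmod{\alpha-1}$ (each taken in $[1,\alpha]$ resp. $[1,\alpha-1]$) vary as well. So the first step I would take is to eliminate the ceilings: write $n = \alpha q_\alpha + (k_\alpha - \alpha)$... more cleanly, since $1\le k \le \alpha$ and $k\equiv n\pmod\alpha$, one has $\lceil n/\alpha\rceil = (n - k)/\alpha + 1$, i.e. $\alpha\lceil n/\alpha\rceil = n - k + \alpha$. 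Substituting this everywhere turns the two subtracted terms into genuine polynomials in $n$, $\alpha$, $k$ only (after clearing the factor $1/\alpha^2$), which is the decisive simplification.

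Concretely, I would define $S(\alpha) \triangleq \sum_{i=1}^{\alpha}\frac{(s_i-1)(s_i-2)}{2} = \frac12\sum_i s_i^2 - \frac32 n + \alpha$ (using $\sum_i s_i = n$, valid in the binary case), so that $\left|\cL_1(\bfx^{(\alpha)})\right| = (n+1-\alpha)(n-1) + 2 - S(\alpha)$ and $\Delta(\alpha) = -(n-1) - \big(S(\alpha) - S(\alpha-1)\big)$. For an $\alpha$-balanced sequence, $\sum_i s_i^2 = k_\alpha\lceil n/\alpha\rceil^2 + (\alpha - k_\alpha)(\lceil n/\alpha\rceil - 1)^2$; using $\alpha\lceil n/\alpha\rceil = n + \alpha - k_\alpha$ one computes this sum in closed form as a rational function of $n,\alpha,k_\alpha$ — in fact a short calculation should give $\sum_i s_i^2 = \frac{n^2 + (\alpha - k_\alpha)k_\alpha}{\alpha}$ (this is the standard "sum of squares is minimized by the balanced partition, with the well-known value" identity, and I would verify it directly). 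Plugging that in, $S(\alpha) = \frac{n^2 + (\alpha-k_\alpha)k_\alpha}{2\alpha} - \frac{3n}{2} + \alpha$, and then $\Delta(\alpha) = -(n-1) - S(\alpha) + S(\alpha-1)$ becomes an explicit expression whose only non-polynomial ingredients are $k_\alpha$ and $k_{\alpha-1}$.

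The heart of the argument is then to bound this explicit $\Delta(\alpha)$. The dominant part is $-(n-1) - \frac{n^2}{2\alpha} + \frac{n^2}{2(\alpha-1)} - 1 = -(n-1) + \frac{n^2}{2\alpha(\alpha-1)} - 1 = \frac{n^2 - 2\alpha(\alpha-1)n}{2\alpha(\alpha-1)} + (\text{small correction terms})$, and the threshold $n > 2(\alpha-1)\alpha$ is exactly the sign change of the leading numerator $n^2 - 2\alpha(\alpha-1)n = n\big(n - 2\alpha(\alpha-1)\big)$. So the real work is showing that the correction terms coming from $k_\alpha, k_{\alpha-1}$ (each of which lies in a bounded range: $1\le k_\alpha\le\alpha$, $1\le k_{\alpha-1}\le\alpha-1$, and the $(\alpha-k)k$ terms are $O(\alpha^2)$) are too small to flip the sign across the integer boundary $n = 2\alpha(\alpha-1)$ and $n = 2\alpha(\alpha-1)+1$. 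I expect this to be the main obstacle: one must check that when $n \le 2\alpha(\alpha-1)$ the leading negative term already dominates the (bounded, $O(\alpha^2/\alpha)$) positive corrections, and when $n \ge 2\alpha(\alpha-1)+1$ the leading positive term dominates the negative corrections — i.e., a careful case analysis near the boundary, possibly splitting on whether $\alpha \mid n$ or $\alpha - 1 \mid n$.

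A slightly different route that may be cleaner for the obstacle: instead of the exact $k$-dependent formula, bound $S(\alpha)$ between $\frac{n^2}{2\alpha} - \frac{3n}{2} + \alpha$ (when $\alpha\mid n$, all $s_i$ equal) and $\frac{n^2}{2\alpha} + \frac{\alpha}{8} - \frac{3n}{2} + \alpha$ (the max of $\frac{(\alpha-k)k}{2\alpha}$ over integer $k$), then show the gap $\frac{\alpha}{8}$ cannot swamp the $\Theta\big((n - 2\alpha(\alpha-1))/\alpha^2\big)$ main term once $|n - 2\alpha(\alpha-1)| \ge 1$. Either way, I would finish by separately handling the two directions of the "if and only if": assume $n > 2\alpha(\alpha-1)$ and show $\Delta(\alpha)>0$, then assume $n \le 2\alpha(\alpha-1)$ and show $\Delta(\alpha) \le 0$ — with the edge case $n = 2\alpha(\alpha-1)$ (where one should get $\Delta(\alpha) \le 0$, consistent with the strict inequality in the statement) checked by hand. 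Since the lemma's proof is deferred to the Appendix, this level of computational care is exactly what one would expect there.
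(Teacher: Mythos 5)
Your primary route is essentially the paper's proof: the appendix likewise rewrites $\left|\cL_1\left(\bfx^{(\alpha)}\right)\right|$ as $(n+1-\alpha)(n-1)+2-\frac{n^2}{2\alpha}+\frac{3n}{2}+\frac{k_\alpha^2}{2\alpha}-\frac{k_\alpha}{2}-\alpha$ (your identity $\sum_i s_i^2=\frac{n^2+(\alpha-k_\alpha)k_\alpha}{\alpha}$ is correct and is exactly this formula in disguise), obtains $\mathsf{diff}=\frac{n^2}{2\alpha(\alpha-1)}-n+\bigl(\frac{k_\alpha^2}{2\alpha}-\frac{k_\alpha}{2}\bigr)+\bigl(\frac{k_{\alpha-1}}{2}-\frac{k_{\alpha-1}^2}{2(\alpha-1)}\bigr)$, and then splits into cases near and far from the threshold. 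One warning, though: your ``slightly different route'' (sandwiching $S(\alpha)$ between its extremes and arguing that the $\frac{\alpha}{8}$ gap cannot flip the sign once $|n-2\alpha(\alpha-1)|\ge 1$) does not work. For $n=2\alpha(\alpha-1)+j$ the main term equals $j+\frac{j^2}{2\alpha(\alpha-1)}$, so for $1\le j<\alpha/8$ it is smaller than the worst-case correction $\frac{\alpha}{8}$, and the crude bounds are inconclusive. The rescue, which is what the paper's Cases 2 and 4 exploit, is that in the window $n=2\alpha(\alpha-1)\pm k$ with $1\le k\le\alpha-2$ one has $k_\alpha\equiv k_{\alpha-1}\pmod{\text{both moduli}}$ determined explicitly (since $\alpha$ and $\alpha-1$ both divide $2\alpha(\alpha-1)$), the two correction terms collapse to $-\frac{k^2}{2\alpha(\alpha-1)}$, and $\mathsf{diff}$ evaluates exactly to $\pm k$; the interval bounds with extrema $\mp\frac{\alpha}{8},\pm\frac{\alpha-1}{8}$ are only sufficient once $|n-2\alpha(\alpha-1)|\ge\alpha-1$. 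So the exact residue computation near the boundary is indispensable, not an optional refinement, and your final plan should commit to it rather than to the interval-bound shortcut.
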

\begin{theorem}
\label{the: q=2 max ball}
If $n$ is an integer, then
$$\mathsf{A} = \argmin_{\alpha\in\mathbb{N}}\left\{\left|\alpha-\frac{1}{2}\sqrt{1+2n}\right|\right\},$$
and the maximum FLL $1$-balls are the balls centered at the $\alpha$-balanced sequences of length $n$,
for $\alpha\in\mathsf{A}$. In addition, the size of the maximum FLL $1$-balls is given by
\begin{small}
\begin{align*}
& \max_{\bfx\in\mathbb{Z}_2^n}  \left\{|\cL_1(\bfx)|\right\}  = n^2 -n\alpha +\alpha+ 1 - \frac{k}{2}\left(\left\lceil\frac{n}{\alpha}\right\rceil-1\right)
\left(\left\lceil\frac{n}{\alpha}\right\rceil-2\right)   - \frac{\alpha-k}{2}\left(\left\lceil\frac{n}{\alpha}\right\rceil-2\right)
\left(\left\lceil\frac{n}{\alpha}\right\rceil-3\right)
,
\end{align*}
\end{small}
where $k\equiv n\pmod \alpha$ and $1\le k\le \alpha$.
\end{theorem}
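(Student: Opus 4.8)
The plan is to combine Lemma~\ref{lem: q=2 max ball for fix k}, Lemma~\ref{lem: k balanced ball size}, and Lemma~\ref{lam : max t} to pin down the set $\mathsf{A}$ of values of $\alpha$ maximizing $|\cL_1(\bfx^{(\alpha)})|$, and then simply read off the stated size from Lemma~\ref{lem: k balanced ball size}. Write $f(\alpha) \triangleq |\cL_1(\bfx^{(\alpha)})|$. By Lemma~\ref{lem: q=2 max ball for fix k} we already know $\max_{\bfx\in\mathbb{Z}_2^n}|\cL_1(\bfx)| = \max_{1\le\alpha\le n} f(\alpha)$ and that the maximizers among sequences with $A(\bfx)=\alpha$ are exactly the $\alpha$-balanced ones, so everything reduces to understanding the one-variable function $f$. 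Lemma~\ref{lam : max t} states precisely that $f(\alpha) > f(\alpha-1)$ if and only if $n > 2(\alpha-1)\alpha$. Viewing $n > 2(\alpha-1)\alpha$ as $2\alpha^2 - 2\alpha - n < 0$ and using $\alpha\ge 1 > \tfrac{1-\sqrt{1+2n}}{2}$ (which holds since $\sqrt{1+2n}\ge\sqrt 3>1$), this is equivalent to $\alpha < c$, where $c \triangleq \tfrac{1+\sqrt{1+2n}}{2}$.

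Because the condition ``$\alpha<c$'' is monotone in $\alpha$, the finite sequence $f(1),f(2),\dots,f(n)$ is strictly increasing and then strictly decreasing, with at most one equality step (the step at $\alpha=c$, which occurs only when $c\in\mathbb{Z}$, i.e. $n=2(c-1)c$). Hence I would argue: if $c\notin\mathbb{Z}$, then $f$ is strictly increasing on $\{1,\dots,\lfloor c\rfloor\}$ and strictly decreasing on $\{\lfloor c\rfloor,\dots,n\}$, so $\mathsf{A}=\{\lfloor c\rfloor\}$; if $c\in\mathbb{Z}$, then in addition $f(c)=f(c-1)$, so $\mathsf{A}=\{c-1,c\}$. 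I would then record the two elementary arithmetic facts that $d\triangleq\tfrac12\sqrt{1+2n}=c-\tfrac12$ is never an integer (since $1+2n$ is odd, $\sqrt{1+2n}$ cannot be an even integer), and that $d$ is a half-integer precisely when $1+2n$ is an odd perfect square, i.e. $n=2j(j+1)$, which is exactly the case $c\in\mathbb{Z}$.

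It remains to match the description of $\mathsf{A}$ with $\argmin_{\alpha\in\mathbb{N}}|\alpha-d|$. In the generic case $c\notin\mathbb{Z}$, write $c=m+\theta$ with $m=\lfloor c\rfloor$ and $\theta\in(0,1)\setminus\{\tfrac12\}$; then $d=m+\theta-\tfrac12$ lies strictly within distance $\tfrac12$ of $m$ and strictly farther from $m-1$ and from $m+1$, so the nearest integer to $d$ is $m=\lfloor c\rfloor$, matching $\mathsf{A}$. In the case $c\in\mathbb{Z}$, $d=c-\tfrac12$ is the midpoint of $c-1$ and $c$, so $\argmin_{\alpha\in\mathbb{N}}|\alpha-d|=\{c-1,c\}$, again matching $\mathsf{A}$; note that $\mathsf{A}$ as defined (argmax over $1\le\alpha\le n$) agrees with the argmin over all of $\mathbb{N}$ because $\tfrac{\sqrt 3}{2}\le d<\tfrac{n+1}{2}$ for $n\ge1$ forces the nearest integer to lie in $\{1,\dots,n\}$. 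Finally, the size formula is just Lemma~\ref{lem: k balanced ball size} evaluated at any $\alpha\in\mathsf{A}$, using $(n+1-\alpha)(n-1)+2 = n^2-n\alpha+\alpha+1$; when $|\mathsf{A}|=2$ the two resulting values coincide automatically since both elements of $\mathsf{A}$ maximize $f$. The degenerate case $n=1$ (excluded from Lemma~\ref{lam : max t}) is dispatched by the direct check $|\cL_1(\bfx)|=2$ and $\alpha=1=\argmin_{\alpha\in\mathbb{N}}|\alpha-\tfrac{\sqrt 3}{2}|$.

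I expect the only real obstacle to be the bookkeeping around the threshold $c$: one has to cleanly separate the integer case $c\in\mathbb{Z}$ (where $\mathsf{A}$ genuinely has two elements and $d$ is a half-integer, so the ``nearest integer'' description correctly returns a two-element set) from the generic case, and in the generic case verify that the nearest integer to $d=c-\tfrac12$ is $\lfloor c\rfloor$ and not $\lceil c\rceil$. Everything else -- solving the quadratic, the unimodality argument, and the final substitution into Lemma~\ref{lem: k balanced ball size} -- is mechanical once Lemmas~\ref{lem: q=2 max ball for fix k},~\ref{lem: k balanced ball size}, and~\ref{lam : max t} are in hand.
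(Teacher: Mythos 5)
Your proposal is correct and follows essentially the same route as the paper: reduce to the one-variable function $\alpha\mapsto|\cL_1(\bfx^{(\alpha)})|$ via Lemma~\ref{lem: q=2 max ball for fix k}, use Lemma~\ref{lam : max t} to locate the turning point by solving $n=2(\alpha-1)\alpha$, and read off the size from Lemma~\ref{lem: k balanced ball size}. Your treatment is in fact slightly more careful than the paper's, since you make the unimodality explicit, cleanly separate the tie case $n=2(c-1)c$ (where $\mathsf{A}$ has two elements) from the generic case, and check $n=1$ separately.
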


\begin{proof}
Let $n$ be a positive integer. By Lemma~\ref{lem: q=2 max ball for fix k} we have that
\begin{align*}
    \max_{x\in\mathbb{Z}_2^n}|\cL_1(\bfx)| &
    = \max_{1\le \alpha\le n}\left\{ \max_{\substack{\bfx\in\mathbb{Z}_2^n \\ A(\bfx)=\alpha}}|\cL_1(\bfx)|\right\}
    = \max_{1\le \alpha\le n}\left\{\left|\cL_1\left(\bfx^{(\alpha)}\right)\right|\right\}.
\end{align*}
If there exists an integer $\alpha$, $1\le \alpha\le n$ such that  $n=2(\alpha-1)\alpha$, then by Lemma~\ref{lem: k balanced ball size}, $\left|\cL_1\left(\bfx^{(\alpha)}\right)\right|=\left|\cL_1\left(\bfx^{(\alpha-1)}\right)\right|$.
Additionally, by Lemma~\ref{lam : max t} we have that $\left|\cL_1\left(\bfx^{(\alpha)}\right)\right|>\left|\cL_1\left(\bfx^{(\alpha-1)}\right)\right|$ for $n>2(\alpha-1)\alpha$ which implies that $\left|\cL_1\left(\bfx^{(\alpha)}\right)\right|$ is maximized for $\alpha\in\{1,\ldots,n\}$ such that
\begin{align}\label{eq: alpha max ball}
    	2\alpha\left(\alpha+1\right) \ge n \ge 2\left(\alpha-1\right)\alpha.
\end{align}
To find $\alpha$ we have to solve the two quadratic equations from (\ref{eq: alpha max ball}). The solution for $\alpha$ must satisfies both equations and hence
$- \frac{1}{2} + \frac{\sqrt{1+2n}}{2}\le \alpha\le \frac{1}{2} + \frac{\sqrt{1+2n}}{2}$.
Namely, for $\alpha\in\mathsf{A}$,
$$\max_{\bfx\in\mathbb{Z}_2^n}\left\{|\cL_1(\bfx)|\right\} = \left|\cL_1\left(\bfx^{(\alpha)}\right)\right|$$
The size of $\cL_1\left(\bfx^{(\alpha)}\right)$ was derived in Lemma~\ref{lem: k balanced ball size}, which completes the proof.
\end{proof}

\begin{corollary}
\label{cor: q=2 max ball}
Let $n$ be an integer. Assuming $n$ is sufficiently large, we have that
$$\max_{\bfx\in\mathbb{Z}_2^n}\left\{|\cL_1(\bfx)|\right\} =  n^2 - \sqrt{2}n^{\frac{3}{2}}+O(n).$$
\end{corollary}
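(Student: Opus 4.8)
The plan is to read off the exact value of $\max_{\bfx\in\mathbb{Z}_2^n}|\cL_1(\bfx)|$ from Theorem~\ref{the: q=2 max ball} and to expand each of its terms in powers of $n$. The first step is to locate the optimal parameter. Any $\alpha\in\mathsf{A}$ is a nearest integer to $\tfrac12\sqrt{1+2n}$, hence $\alpha=\tfrac12\sqrt{1+2n}+O(1)=\sqrt{n/2}+O(1)$; if two integers tie for the argmin, Lemmas~\ref{lem: k balanced ball size} and~\ref{lam : max t} ensure they give equal ball sizes, so the ambiguity is harmless. Consequently $n\alpha=\tfrac{1}{\sqrt2}\,n^{3/2}+O(n)$ and $\alpha+1=O(\sqrt n)=O(n)$, which takes care of every term of the formula in Theorem~\ref{the: q=2 max ball} except the subtracted alternating-segment penalty.

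The second step is to control $m\triangleq\lceil n/\alpha\rceil$. From $\alpha=\sqrt{n/2}\bigl(1+O(n^{-1/2})\bigr)$ we get $n/\alpha=\sqrt{2n}\bigl(1+O(n^{-1/2})\bigr)=\sqrt{2n}+O(1)$, so $m=\sqrt{2n}+O(1)$. I would also record the exact identity $\alpha m=n+\alpha-k$, obtained by rearranging $k\lceil n/\alpha\rceil+(\alpha-k)(\lceil n/\alpha\rceil-1)=n$ from the proof of Lemma~\ref{lem: k balanced ball size}; together with $1\le k\le\alpha=O(\sqrt n)$ this gives $\alpha m=n+O(\sqrt n)$.

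The third step is to expand the penalty
\[
S\ \triangleq\ \frac{k}{2}\Bigl(\Bigl\lceil\tfrac{n}{\alpha}\Bigr\rceil-1\Bigr)\Bigl(\Bigl\lceil\tfrac{n}{\alpha}\Bigr\rceil-2\Bigr)+\frac{\alpha-k}{2}\Bigl(\Bigl\lceil\tfrac{n}{\alpha}\Bigr\rceil-2\Bigr)\Bigl(\Bigl\lceil\tfrac{n}{\alpha}\Bigr\rceil-3\Bigr).
\]
Multiplying out and collecting like terms gives $S=\tfrac12\alpha m^{2}-\tfrac52\alpha m+km+3\alpha-2k$. Since $\alpha m^{2}=(\alpha m)\,m=(n+O(\sqrt n))(\sqrt{2n}+O(1))=\sqrt2\,n^{3/2}+O(n)$, while $\alpha m=O(n)$, $km=O(n)$ and $\alpha,k=O(\sqrt n)$, we conclude $S=\tfrac{1}{\sqrt2}\,n^{3/2}+O(n)$. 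Plugging this and the first step into Theorem~\ref{the: q=2 max ball},
\[
\max_{\bfx\in\mathbb{Z}_2^n}|\cL_1(\bfx)|=n^2-n\alpha+\alpha+1-S=n^2-\tfrac{1}{\sqrt2}n^{3/2}-\tfrac{1}{\sqrt2}n^{3/2}+O(n)=n^2-\sqrt2\,n^{3/2}+O(n).
\]

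The point to watch — and the main obstacle — is that the two dominant negative contributions, $n\alpha$ (from $\rho(\bfx^{(\alpha)})(n-1)$) and $S$ (the alternating-segment penalty), are \emph{both} of order $n^{3/2}$, so the constant $\sqrt2$ is the sum of two equal $\tfrac{1}{\sqrt2}$ contributions and no term may be discarded before the coefficients are pinned down. The identity $\alpha m=n+\alpha-k$ is exactly what is needed to certify that the ceiling and the residue $k$ perturb $S$ only at the $O(n)$ level, so everything below $n^{3/2}$ is safely absorbed into the error term; the remaining computations are routine.
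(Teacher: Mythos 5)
Your proposal is correct and follows essentially the same route as the paper: both read the exact maximum from Theorem~\ref{the: q=2 max ball} (equivalently Lemma~\ref{lem: k balanced ball size}) with $\alpha=\tfrac12\sqrt{1+2n}+O(1)$ and $\lceil n/\alpha\rceil=\sqrt{2n}+O(1)$, and expand asymptotically so that the two $\tfrac{1}{\sqrt2}n^{3/2}$ contributions from $n\alpha$ and from the alternating-segment penalty combine to give $\sqrt2\,n^{3/2}$. Your use of the exact identity $\alpha\lceil n/\alpha\rceil=n+\alpha-k$ is a slightly tidier piece of bookkeeping than the paper's explicit $\epsilon_1,\epsilon_2$ error terms, but it is not a different argument.
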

\begin{proof}
By Theorem~\ref{the: q=2 max ball} we have that
$\max_{\bfx\in\mathbb{Z}_2^n}\left\{|\cL_1(\bfx)|\right\} = \left|\cL_1\left(\bfx^{(\alpha)}\right)\right|$ for $\alpha=\left[\frac{1}{2}\sqrt{1+2n}\right]$. By Lemma~\ref{lem: k balanced ball size} we have that
\begin{small}
\begin{align*}
\left|\cL_1\left(\bfx^{(\alpha)}\right)\right| & = (n+1-\alpha)(n-1) +2
 - \frac{k}{2}\left(\left\lceil\frac{n}{\alpha}\right\rceil-1\right)
\left(\left\lceil\frac{n}{\alpha}\right\rceil-2\right)
 - \frac{\alpha-k}{2}\left(\left\lceil\frac{n}{\alpha}\right\rceil-2\right)
 \left(\left\lceil\frac{n}{\alpha}\right\rceil-3\right).
 \end{align*}
 \end{small}
Notice that
$$\frac{1}{2}\left(\sqrt{1+2n}-2\right)\le \alpha\le \frac{1}{2}\left(\sqrt{1+2n}+2\right)$$
and hence,
$\alpha = \frac{\sqrt{1+2n}}{2} + \epsilon_1$, where $|\epsilon_1|\le 1$.
Similarly,
$$
\frac{2n}{\sqrt{1+2n}+2}\le \left\lceil\frac{2n}{\sqrt{1+2n}+2}\right\rceil\le \left\lceil\frac{n}{\alpha}\right\rceil \le \left\lceil\frac{2n}{\sqrt{1+2n}-2}\right\rceil\le \frac{2n}{\sqrt{1+2n}-2}+1.
$$
which implies that
$$
\left\lceil\frac{n}{\alpha}\right\rceil = \frac{2n}{\sqrt{1+2n}} + \epsilon_2,
$$
where by simple calculation we can find that $|\epsilon_2|\le 3$. Thus,
\begin{align*}
\max_{\bfx\in\mathbb{Z}_2^n}& |\cL_1(\bfx)| =  (n+1-\alpha)(n-1) +2
 - \frac{k}{2}\left(\left\lceil\frac{n}{\alpha}\right\rceil-1\right)
\left(\left\lceil\frac{n}{\alpha}\right\rceil-2\right)
 - \frac{\alpha-k}{2}\left(\left\lceil\frac{n}{\alpha}\right\rceil-2\right)
 \left(\left\lceil\frac{n}{\alpha}\right\rceil-3\right) \\
 & = (n+1-\alpha)(n-1) +2 - \frac{k}{2}\left(\left\lceil\frac{n}{\alpha}\right\rceil-2\right) \left(\left\lceil\frac{n}{\alpha}\right\rceil-1 -\left\lceil\frac{n}{\alpha}\right\rceil+3 \right)
 -\frac{\alpha}{2} \left(\left\lceil\frac{n}{\alpha}\right\rceil-2\right)
 \left(\left\lceil\frac{n}{\alpha}\right\rceil-3\right)  \\
 & =  (n+1-\alpha)(n-1) +2
  - k\left(\left\lceil\frac{n}{\alpha}\right\rceil-2\right)  -\frac{\alpha}{2} \left(\left\lceil\frac{n}{\alpha}\right\rceil-2\right)
 \left(\left\lceil\frac{n}{\alpha}\right\rceil-3\right)   \\
 & =  (n+1-\frac{\sqrt{1+2n}}{2} - \epsilon_1)(n-1) +2
  - k\left(\frac{2n}{\sqrt{1+2n}} + \epsilon_2-2\right) \\
  & \ \ \  -\frac{{\sqrt{1+2n}} + 2\epsilon_1}{4} \left(\frac{2n}{\sqrt{1+2n}} + \epsilon_2-2\right)
 \left(\frac{2n}{\sqrt{1+2n}} + \epsilon_2-3\right) \\
 & = n^2 +1 -\left(\frac{\sqrt{1+2n}}{2} + \epsilon_1\right)(n-1) \\
 & \ \ \ - \left(\frac{2n}{\sqrt{1+2n}} + \epsilon_2-2\right)\left(k + \frac{{\sqrt{1+2n}} + 2\epsilon_1}{4}
 \left(\frac{2n}{\sqrt{1+2n}} + \epsilon_2-3\right) \right).
  \end{align*}
  Note that $1\le k \le \alpha\le \frac{1}{2}\left(\sqrt{1+2n}+2\right)$, which implies that
  \begin{align*}
  \max_{\bfx\in\mathbb{Z}_2^n} |\cL_1(\bfx)| 
 & = n^2 - \frac{n\sqrt{1+2n}}{2} - \frac{n^2}{\sqrt{1+2n}} + O(n)\\
 & = n^2 - \sqrt{2}n^{\frac{3}{2}}+O(n).
  \end{align*}
\end{proof}

\section{The Expected Size of an FLL $1$-Ball}
\label{sec:expect_size}

Let $n$ and $q>1$ be integers and let $\bfx\in\mathbb{Z}_q^n$ be a sequence.
By (\ref{eq:L1size}), for every $\bfx\in\mathbb{Z}_q^n$, we have
\begin{align*}
 |\cL_1(\bfx)|& = \rho(\bfx) (n(q-1)-1) + 2 - \sum_{i=1}^{A(\bfx)} \frac{(s_i-1)(s_i-2)}{2}
	& 
	\\ & =  \rho(\bfx)(nq-n-1) + 2 -\frac{1}{2} \sum_{i=1}^{A(\bfx)} s_i^2 + \frac{3}{2} \sum_{i=1}^{A(\bfx)} s_i-  A(\bfx).
 \end{align*}
Thus, the average size of an FLL $1$-ball is
\begin{align}
\label{eq: avg L1}
   	\mathop{{}\mathbb{E}}_{\bfx\in\mathbb{Z}_q^n}\left[\left|\cL_1(\bfx)\right|\right]
	& = \mathop{{}\mathbb{E}}_{\bfx\in\mathbb{Z}_q^n}\left[\rho(\bfx)(n(q-1)-1) + 2
	-\frac{1}{2} \sum_{i=1}^{A(\bfx)} s_i^2 + \frac{3}{2} \sum_{i=1}^{A(\bfx)} s_i-  A(\bfx)\right].
\end{align}

\begin{lemma}
\label{lam: avg sum si}
For any two integers $n, q>1$,
$$\mathop{{}\mathbb{E}}_{\bfx\in\mathbb{Z}_q^n}\left[\sum_{i=1}^{A(\bfx)}s_i\right] = n + (n-2)\cdot \frac{(q-1)(q-2)}{q^2}.$$
\end{lemma}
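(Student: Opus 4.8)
The quantity $\sum_{i=1}^{A(\bfx)} s_i$ counts the total length of all maximal alternating segments of $\bfx$, counted with multiplicity. The natural approach is to rewrite this sum as a sum over positions (or over pairs of adjacent positions) and then use linearity of expectation on a uniformly random $\bfx \in \Z_q^n$. The key combinatorial observation I would establish first is a local characterization: a position $j$ (for $2 \le j \le n-1$) lies in the ``interior'' of a maximal alternating segment — equivalently, it fails to be the last symbol of one segment and simultaneously the first of the next — precisely when the local pattern $x_{j-1} x_j x_{j+1}$ is not itself ``blocking'', i.e. when it is neither a run ($x_{j-1}=x_j$ or $x_j = x_{j+1}$) nor of the form $x_{j-1} = x_{j+1} \ne x_j$. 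More precisely, the overlap structure tells us that consecutive maximal alternating segments share exactly one symbol in the non-binary case, so $\sum_i s_i = n + (\text{number of shared boundary positions})$, and a position $j$ is a shared boundary exactly when $x_{j-1} x_j x_{j+1}$ is an alternating triple with $x_{j-1} = x_{j+1}$, OR when it sits between two segments that don't overlap at all. I would nail down the exact local condition by a short case analysis on $x_{j-1}, x_j, x_{j+1}$.

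**Carrying it out via indicator variables.**

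Once the local characterization is pinned down, write $\sum_{i=1}^{A(\bfx)} s_i = n + \sum_{j=2}^{n-1} \mathbf{1}[E_j]$ where $E_j$ is the event that position $j$ is an internal boundary point contributing an extra count. Taking expectations, $\mathbb{E}[\sum_i s_i] = n + (n-2)\Pr[E_j]$, using that $\Pr[E_j]$ is the same for every interior $j$ by symmetry. The remaining task is the probability computation: I expect $E_j$ to be exactly the event ``$x_{j-1} \ne x_j$, $x_j \ne x_{j+1}$, and $x_{j-1} = x_{j+1}$'' (the alternating triple of the form $aba$ with $a \ne b$), because that is precisely the configuration where a maximal alternating segment must end at position $j$ and a new one must begin there, forcing $j$ to be double-counted. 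For a uniformly random string this event has probability: choose $x_j$ freely, then $x_{j-1}$ must be one of the $q-1$ other symbols, and $x_{j+1}$ must equal $x_{j-1}$, giving $\frac{q-1}{q} \cdot \frac{1}{q} = \frac{(q-1)}{q^2}$ — but this does not match the target $\frac{(q-1)(q-2)}{q^2}$, so the correct event must instead be the pattern $x_{j-1}\ne x_j$, $x_j \ne x_{j+1}$, $x_{j-1}\ne x_{j+1}$ (three pairwise-distinct symbols $abc$), which has probability $\frac{q-1}{q}\cdot\frac{q-2}{q} = \frac{(q-1)(q-2)}{q^2}$. This matches exactly, confirming that the extra count at position $j$ happens precisely when the triple around $j$ uses three distinct symbols.

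**The main obstacle.**

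The genuine work is the combinatorial lemma identifying $E_j$ correctly — verifying that $\sum_i s_i - n$ equals the number of interior positions $j$ with $x_{j-1}, x_j, x_{j+1}$ pairwise distinct. I would prove this by induction on $n$ (or on $A(\bfx)$, mirroring the proof of Lemma~\ref{lem: q=2 runs and segments}): peel off the last maximal alternating segment $\bfx''$ of length $s_\alpha$, note it shares its first symbol with the last symbol of $\bfx'$, and check how the boundary position between $\bfx'$ and $\bfx''$ behaves — it is a ``three distinct symbols'' position exactly when the segment boundary forces a fresh pair of alternating symbols, i.e. when the symbol two-before the junction differs from the symbol at the junction and from the first symbol of $\bfx''$. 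The bookkeeping of the endpoints $j=1$ and $j=n$ (which are never double-counted) must be handled, and one should double-check the small cases $n=1,2$ where the formula still reads $\mathbb{E}[\sum_i s_i] = n$ since $n-2 \le 0$ — indeed for $n \le 2$ there are no interior positions, consistent with $\sum_i s_i = n$ always. After the lemma, the expectation computation is the one-line application of linearity and the elementary probability $\Pr[E_j] = (q-1)(q-2)/q^2$ stated above.
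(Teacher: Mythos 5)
Your proposal is correct and follows essentially the same route as the paper: write $\sum_{i=1}^{A(\bfx)}s_i = n + \zeta(\bfx)$ where $\zeta(\bfx)$ counts the positions lying in two maximal alternating segments, characterize those as the interior positions $j$ with $x_{j-1},x_j,x_{j+1}$ pairwise distinct, and conclude by linearity of expectation with the per-position probability $\binom{q}{3}3!/q^3=(q-1)(q-2)/q^2$. Your local characterization (which you reach after discarding the $aba$ pattern) is the right one and is exactly what the paper asserts, so the remaining induction you sketch is just a more formal justification of a step the paper treats as immediate.
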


\begin{proof}
If $\bfx\in\mathbb{Z}_q^n$, then by the definition of an alternating segment, we have that for each ${1\le i\le n}$, $x_i$ is contained in at
least one maximal alternating segment and not more than two maximal alternating segments. Hence,
\begin{align}
\label{eq:chi}
	\sum_{i=1}^{A(\bfx)}s_i = n + \zeta(\bfx),
\end{align}
where $\zeta(\bfx)$ denotes the number of entries in $\bfx$ which are contained in exactly two alternating segments.
Define, for each $1 \leq i \leq n$
\begin{align}
	 \zeta_i(\bfx) \triangleq \begin{cases}\label{eq:chii}
	1 & x_i \text{ is contained in two maximal alternating segments} \\
	0 & \text{otherwise}
	\end{cases}
\end{align}
Thus,
\begin{small}
$$\mathop{{}\mathbb{E}}_{\bfx\in\mathbb{Z}_q^n}\left[\sum_{i=1}^{A(\bfx)}s_i\right]
= n + \mathop{{}\mathbb{E}}_{\bfx\in\mathbb{Z}_q^n}\left[\zeta(\bfx)\right]
= n + \frac{1}{q^n}\sum_{\bfx\in\mathbb{Z}_q^n}\zeta(\bfx) = n + \frac{1}{q^n}\sum_{\bfx\in\mathbb{Z}_q^n}\sum_{i=1}^n\zeta_i(\bfx)
= n + \frac{1}{q^n}\sum_{i=1}^n\sum_{x\in\mathbb{Z}_q^n}\zeta_i(\bfx).$$
\end{small}\\
Clearly, if $i\in\{1,n\}$ then $\zeta_i(\bfx)=0$ for all $\bfx\in\mathbb{Z}_q^n$.
Otherwise, $\zeta_i(\bfx)=1$ if and only if $x_{i-1},x_i$ and $x_{i+1}$ are all different.
Therefore, for $2\le i\le n-1$, there are $\binom{q}{3}\cdot 3!$ distinct ways  to select values for $x_{i-1},x_i$, and $x_{i+1}$ and $q^{n-3}$ distinct ways to select values for the other entries of $\bfx$. That is,
$$\mathop{{}\mathbb{E}}_{\bfx\in\mathbb{Z}_q^n}\left[\sum_{i=1}^{A(\bfx)}s_i\right]
= n + \frac{1}{q^n}\sum_{i=1}^n\sum_{\bfx\in\mathbb{Z}_q^n}\zeta_i(\bfx)
= n + \frac{1}{q^n}\sum_{i=2}^{n-1}\binom{q}{3}3!q^{n-3} = n + (n-2)\cdot \frac{(q-1)(q-2)}{q^2}.$$
\end{proof}

\begin{corollary}
\label{cor: si sum for q=2}
For $q=2$, we have that
$$\mathop{{}\mathbb{E}}_{\bfx\in\mathbb{Z}_2^n}\left[\sum_{i=1}^{A(\bfx)}s_i\right] = n .$$
\end{corollary}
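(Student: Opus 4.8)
The plan is to obtain this immediately by specializing Lemma~\ref{lam: avg sum si} to $q=2$: the correction term $(n-2)\cdot\frac{(q-1)(q-2)}{q^2}$ contains the factor $(q-2)$, which vanishes at $q=2$, leaving exactly $n$. No further work is needed, so there is no real obstacle here.

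For completeness I would also note the stronger (deterministic) fact behind it, which makes the corollary transparent without invoking the general lemma: as already observed in the text preceding Lemma~\ref{lem: q=2 runs and segments}, two maximal alternating segments of a binary sequence cannot overlap, so for every $\bfx\in\mathbb{Z}_2^n$ the alternating segments partition the coordinates $[n]$ and hence $\sum_{i=1}^{A(\bfx)} s_i = n$ identically. In the notation of the proof of Lemma~\ref{lam: avg sum si}, this is simply the statement that $\zeta_i(\bfx)=0$ for all $i$ and all $\bfx$ when $q=2$, since three consecutive entries of a binary word can never be pairwise distinct. Taking expectation of a constant then trivially gives $\mathop{{}\mathbb{E}}_{\bfx\in\mathbb{Z}_2^n}\big[\sum_{i=1}^{A(\bfx)} s_i\big] = n$. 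Either route — plugging $q=2$ into Lemma~\ref{lam: avg sum si}, or the one-line direct argument — suffices, and I would present the substitution as the primary proof and mention the deterministic identity as the reason it holds.
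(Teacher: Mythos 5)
Your proposal is correct and matches the paper's (implicit) argument exactly: the corollary follows by setting $q=2$ in Lemma~\ref{lam: avg sum si}, where the factor $(q-2)$ kills the correction term. Your supplementary deterministic observation that $\sum_{i=1}^{A(\bfx)} s_i = n$ holds identically for every binary $\bfx$ is also consistent with the paper, which states this fact explicitly at the start of Section~\ref{sec:max_binary}.
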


\begin{definition}
For a sequence $\bfx=(x_1,\ldots,x_n)\in\mathbb{Z}_q^n$,  denote by ${\bfx'\in\mathbb{Z}_q^{n-1}}$ the difference vector of $\bfx$, which is defined by
$$ \bfx' \triangleq (x_2-x_1,x_3-x_2,\ldots,x_n-x_{n-1}).$$
\end{definition}

\begin{claim}
\label{cla:kAndSumSi}
For integers $n$ and $q>1$ and a sequence $\bfx\in\mathbb{Z}_q^n$,
$$\sum_{i=1}^{A(\bfx)}s_i = n + A(\bfx)-1-\mathsf{Zeros}(\bfx'),$$
where $\mathsf{Zeros}(\bfy)$ denotes the number of zeros in $\bfy$.
\end{claim}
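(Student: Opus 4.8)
The plan is to combine two facts. The first, established inside the proof of Lemma~\ref{lam: avg sum si}, is the identity $\sum_{i=1}^{A(\bfx)} s_i = n + \zeta(\bfx)$ (equation~(\ref{eq:chi})), where $\zeta(\bfx)$ counts the positions of $\bfx$ that lie in exactly two maximal alternating segments; moreover (equation~(\ref{eq:chii})) $\zeta(\bfx) = \#\{\, p\in\{2,\dots,n-1\} : x_{p-1},x_p,x_{p+1} \text{ pairwise distinct}\,\}$. The second fact, which I would prove, is $A(\bfx) = 1 + \mathsf{Zeros}(\bfx') + \zeta(\bfx)$. Since $\mathsf{Zeros}(\bfx') = \#\{\, j\in\{1,\dots,n-1\} : x_j = x_{j+1}\,\}$ by the definition of $\bfx'$, substituting $\zeta(\bfx) = A(\bfx)-1-\mathsf{Zeros}(\bfx')$ into~(\ref{eq:chi}) immediately gives the claim, so the whole task is the identity for $A(\bfx)$.

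To prove $A(\bfx) = 1 + \mathsf{Zeros}(\bfx') + \zeta(\bfx)$, I would count the maximal alternating segments by their left endpoints: listing them in order, their starting positions are distinct (a position can be simultaneously the right endpoint of one segment and the left endpoint of the next, but it is then counted once as a start). The heart of the argument is the local characterization: for $p\in\{2,\dots,n\}$, some maximal alternating segment starts at $p$ if and only if \emph{either} $x_{p-1}=x_p$, \emph{or} $p\le n-1$ and $x_{p-1},x_p,x_{p+1}$ are pairwise distinct; and these two conditions are mutually exclusive. For the ``if'' direction: when $x_{p-1}=x_p$, no alternating segment can contain both $p-1$ and $p$, so the (unique) maximal alternating segment through $p$ starts at $p$; when $x_{p-1},x_p,x_{p+1}$ are pairwise distinct, take $\bfx_{[p,c]}$ with $c\ge p+1$ maximal so that it is alternating (it then uses exactly the symbols $x_p,x_{p+1}$), and observe that prepending $x_{p-1}$ introduces a third distinct symbol, so $\bfx_{[p,c]}$ is a maximal alternating segment starting at $p$. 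For the ``only if'' direction: if $\bfx_{[p,c]}$ is a maximal alternating segment with $p\ge 2$, then $\bfx_{[p-1,c]}$ is not alternating; a short case split on $c=p$ versus $c\ge p+1$ (using again that an alternating block of length $\ge 2$ uses exactly the two symbols of its first two coordinates) shows $\bfx_{[p-1,c]}$ can fail to be alternating only because $x_{p-1}=x_p$ or because $x_{p-1},x_p,x_{p+1}$ are pairwise distinct.

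Given this characterization, I would finish by counting: position $1$ always starts a segment; each index $j$ with $x_j=x_{j+1}$ contributes a start at $p=j+1$, contributing $\mathsf{Zeros}(\bfx')$ starts; each position $p\in\{2,\dots,n-1\}$ with $x_{p-1},x_p,x_{p+1}$ pairwise distinct contributes a start at $p$, contributing $\zeta(\bfx)$ starts; and these three families are pairwise disjoint (position $1$ is not of either other type; a ``repeat'' start has $x_{p-1}=x_p$ while a ``distinct-triple'' start has $x_{p-1}\ne x_p$) and exhaustive by the ``only if'' direction. Hence $A(\bfx)=1+\mathsf{Zeros}(\bfx')+\zeta(\bfx)$, and the claim follows. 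The main obstacle is the careful verification of the local characterization of starting positions, in particular handling the boundary positions $p\in\{1,n\}$ and the length-one maximal segments (which occur exactly where a symbol equals both of its neighbours); the rest is bookkeeping.
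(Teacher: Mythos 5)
Your proposal is correct and follows essentially the same route as the paper: starting from $\sum_{i=1}^{A(\bfx)}s_i=n+\zeta(\bfx)$ and then counting the $A(\bfx)$ starting positions of maximal alternating segments, partitioned into position $1$, the positions $p$ with $x_{p-1}=x_p$ (counted by $\mathsf{Zeros}(\bfx')$), and the overlap positions (counted by $\zeta(\bfx)$). The paper phrases this via the subset $\mathsf{Ind}_1(\bfx)$ of starts lying in exactly one segment and asserts the classification without proof, whereas you verify the local characterization of starting positions in detail; the content is the same.
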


\begin{proof}
By (\ref{eq:chi}) we have that
$$\sum_{i=1}^{A(\bfx)}s_i = n + \zeta(\bfx).$$
Since there are $A(\bfx)$ alternating segments, it follows that there are $A(\bfx)$ entries that start with a maximal alternating segment. Denote this set of entries by $\mathsf{Ind}(\bfx)$ and let $\mathsf{Ind}_1(\bfx)\subseteq\mathsf{Ind}(\bfx)$ be the set of entries $i\in\mathsf{Ind}(\bfx)$ that are contained in exactly one maximal alternating segment.
This implies that
$$\sum_{i=1}^{A(\bfx)}s_i = n + |\mathsf{Ind}(\bfx)| - |\mathsf{Ind}_1(\bfx)|.$$
Clearly, $1\in\mathsf{Ind}_1(\bfx)$. For any other index $i\in\mathsf{Ind}(\bfx)$, $x_i$ is contained in exactly one maximal alternating segment if and only if $x_i=x_{i-1}$, i.e., $x'_{i-1}=0$.
Thus,
$$\sum_{i=1}^{A(\bfx)}s_i = n + A(\bfx) - 1 - \mathsf{Zeros}(\bfx').$$
\end{proof}

\begin{claim}
\label{cla:DiffZeros}
Given two integers $n$ and $q>1$, we have that
$$\mathop{{}\mathbb{E}}_{\bfx\in\mathbb{Z}_q^n}\left[\mathsf{Zeros}(\bfx')\right] = \frac{n-1}{q}.$$

\end{claim}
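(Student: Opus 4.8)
The plan is a direct application of linearity of expectation over the $n-1$ coordinates of the difference vector. First I would introduce, for each $1\le i\le n-1$, the indicator $Z_i(\bfx)$ defined to be $1$ when $x_i=x_{i+1}$ (equivalently when $x'_i=0$) and $0$ otherwise, so that $\mathsf{Zeros}(\bfx')=\sum_{i=1}^{n-1}Z_i(\bfx)$ for every $\bfx\in\mathbb{Z}_q^n$.

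Next I would compute $\mathop{{}\mathbb{E}}_{\bfx\in\mathbb{Z}_q^n}\!\left[Z_i(\bfx)\right]$ for a fixed $i$. Since $\bfx$ is drawn uniformly from $\mathbb{Z}_q^n$, one can count directly: there are exactly $q\cdot q^{\,n-2}=q^{\,n-1}$ sequences with $x_i=x_{i+1}$ (choose the common value in $q$ ways and the remaining $n-2$ coordinates freely) out of $q^n$ sequences in total, so $\mathop{{}\mathbb{E}}\left[Z_i(\bfx)\right]=\Pr[x_i=x_{i+1}]=q^{\,n-1}/q^n=1/q$, independently of $i$.

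Finally, by linearity of expectation, $\mathop{{}\mathbb{E}}_{\bfx\in\mathbb{Z}_q^n}\!\left[\mathsf{Zeros}(\bfx')\right]=\sum_{i=1}^{n-1}\mathop{{}\mathbb{E}}\left[Z_i(\bfx)\right]=(n-1)/q$, which is the claimed value. There is essentially no obstacle here; the only point worth noting is that the events $\{x_i=x_{i+1}\}$ for consecutive $i$ are not independent, but linearity of expectation does not require independence, so the argument goes through without modification. (This mirrors, in a simpler form, the counting used in the proof of Lemma~\ref{lam: avg sum si}.)
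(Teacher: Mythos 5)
Your proof is correct and follows essentially the same route as the paper: both arguments decompose $\mathsf{Zeros}(\bfx')$ into coordinate indicators and apply linearity of expectation, with each coordinate contributing $1/q$. The only cosmetic difference is that the paper first observes that $\bfx\mapsto\bfx'$ is a $q$-to-$1$ map so that $\bfx'$ is uniform on $\mathbb{Z}_q^{n-1}$ and then counts zeros of $\bfy$, whereas you count the event $x_i=x_{i+1}$ directly in $\mathbb{Z}_q^n$; the underlying computation is identical.
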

\begin{proof}
By the definition of the difference vector, given $\bfy\in\mathbb{Z}_q^{n-1}$, the sequence $\bfx\in\Sigma_q^n$ such that $\bfx'=\bfy$ is defined uniquely by the selection of the first entry of $\bfx$ from $\Z_q$.  Hence, we have that for each $\bfy\in\mathbb{Z}_q^{n-1}$ there are exactly $q$ sequences $\bfx\in\mathbb{Z}_q^n$ such that $\bfx'=\bfy$. In other words, the function $f(\bfx)=\bfx'$ is a $q$ to $1$ function.
Define,
$$\mathsf{zero}_i(\bfy)\triangleq \begin{cases}
1 & y_i=0\\
0 & \text{otherwise}.
\end{cases}$$
It follows that,
\begin{align*}
   	 \mathop{{}\mathbb{E}}_{\bfx\in\mathbb{Z}_q^n}\left[\mathsf{Zeros}(\bfx')\right]
	 & = \mathop{{}\mathbb{E}}_{\bfy\in\mathbb{Z}_q^{n-1}} \left[\mathsf{Zeros}(\bfy)\right]
	 =  \frac{1}{q^{n-1}} \sum_{\bfy\in\mathbb{Z}_q^{n-1}}\mathsf{Zeros}(\bfy)
	 = \frac{1}{q^{n-1}} \sum_{\bfy\in\mathbb{Z}_q^{n-1}}\sum_{i=1}^{n-1}\mathsf{zero}_i(\bfy)\\
   	& = \frac{1}{q^{n-1}}\sum_{i=1}^{n-1}\sum_{\bfy\in\mathbb{Z}_{q}^{n-1}}\mathsf{zero}_i(\bfy).
\end{align*}
For  each $i$, the set $\{\bfy\in\Z_q^{n-1}: y_i=0\}$ is of size $\frac{q^{n-1}}{q}=q^{n-2}$. Thus,
\begin{align*}
\mathop{{}\mathbb{E}}_{\bfx\in\mathbb{Z}_q^n}\left[\mathsf{Zeros}(\bfx')\right] = \frac{1}{q^{n-1}}\sum_{i=1}^{n-1}\sum_{\bfy\in\mathbb{Z}_{q}^{n-1}}\mathsf{zero}_i(\bfy)= \frac{1}{q^{n-1}}\cdot\sum_{i=1}^{n-1} q^{n-2} = \frac{n-1}{q}.
\end{align*}
\end{proof}
By combining the results from Lemma~\ref{lam: avg sum si} and Claims~\ref{cla:kAndSumSi} and~\ref{cla:DiffZeros} we infer the following result.
\begin{corollary}
\label{cor: E[k(x)]}
For two integers $n$ and $q>1$, the average number of alternating segments of a sequence $\bfx\in\mathbb{Z}_q^n$ is
\begin{align*}
       \mathop{{}\mathbb{E}}_{\bfx\in\mathbb{Z}_q^n}\left[A(\bfx)\right] =  1 +  \frac{(n-2)(q-1)(q-2)}{q^2} + \frac{n-1}{q},
\end{align*}
and in particular for $q=2$
$$\mathop{{}\mathbb{E}}_{\bfx\in\mathbb{Z}_2^n}\left[A(\bfx)\right] = \frac{n+1}{2}.$$
\end{corollary}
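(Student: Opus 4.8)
The plan is to rearrange the pointwise identity of Claim~\ref{cla:kAndSumSi} so that $A(\bfx)$ stands alone, and then take expectations term by term, inserting the two averages that have already been computed. Concretely, Claim~\ref{cla:kAndSumSi} states that for every $\bfx\in\mathbb{Z}_q^n$,
$$\sum_{i=1}^{A(\bfx)}s_i = n + A(\bfx)-1-\mathsf{Zeros}(\bfx'),$$
which is equivalent to $A(\bfx) = \sum_{i=1}^{A(\bfx)}s_i - (n-1) + \mathsf{Zeros}(\bfx')$. Applying linearity of expectation over a uniformly random $\bfx\in\mathbb{Z}_q^n$ then yields
$$\mathop{{}\mathbb{E}}_{\bfx\in\mathbb{Z}_q^n}\left[A(\bfx)\right] = \mathop{{}\mathbb{E}}_{\bfx\in\mathbb{Z}_q^n}\left[\sum_{i=1}^{A(\bfx)}s_i\right] - (n-1) + \mathop{{}\mathbb{E}}_{\bfx\in\mathbb{Z}_q^n}\left[\mathsf{Zeros}(\bfx')\right].$$

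Next I would substitute the value $\mathop{{}\mathbb{E}}_{\bfx\in\mathbb{Z}_q^n}\!\left[\sum_{i=1}^{A(\bfx)}s_i\right] = n + (n-2)\cdot\frac{(q-1)(q-2)}{q^2}$ from Lemma~\ref{lam: avg sum si}, and the value $\mathop{{}\mathbb{E}}_{\bfx\in\mathbb{Z}_q^n}\!\left[\mathsf{Zeros}(\bfx')\right] = \frac{n-1}{q}$ from Claim~\ref{cla:DiffZeros}. The constant terms $n$ and $-(n-1)$ collapse to $1$, leaving exactly
$$\mathop{{}\mathbb{E}}_{\bfx\in\mathbb{Z}_q^n}\left[A(\bfx)\right] = 1 + \frac{(n-2)(q-1)(q-2)}{q^2} + \frac{n-1}{q}.$$

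Finally, for the binary specialization I would set $q=2$, so that the factor $(q-1)(q-2)$ vanishes and the middle term drops out, giving $1 + \frac{n-1}{2} = \frac{n+1}{2}$. There is essentially no obstacle in this step: the argument is purely linearity of expectation followed by arithmetic simplification, since all of the combinatorial substance has already been absorbed into Lemma~\ref{lam: avg sum si} and Claims~\ref{cla:kAndSumSi} and~\ref{cla:DiffZeros}. (One may optionally also double-check consistency of the binary case against Corollary~\ref{cor: si sum for q=2}, which states $\mathop{{}\mathbb{E}}_{\bfx\in\mathbb{Z}_2^n}[\sum s_i]=n$, confirming the middle term indeed contributes nothing when $q=2$.)
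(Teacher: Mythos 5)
Your proposal is correct and follows exactly the same route as the paper: rearrange the identity of Claim~\ref{cla:kAndSumSi} to isolate $A(\bfx)$, apply linearity of expectation, and substitute the values from Lemma~\ref{lam: avg sum si} and Claim~\ref{cla:DiffZeros}, with the $q=2$ case following by direct substitution. No differences worth noting.
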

\begin{proof}
For each $q>1$ we have that
\begin{align*}
    	\mathop{{}\mathbb{E}}_{\bfx\in\mathbb{Z}_q^n}\left[A(\bfx)\right]
	& = \mathop{{}\mathbb{E}}_{\bfx\in\mathbb{Z}_q^n}\left[\sum_{i=1}^{A(\bfx)}s_i\right]
	+ \mathop{{}\mathbb{E}}_{\bfx\in\mathbb{Z}_q^n}\left[\mathsf{Zeros}(\bfx')\right] - n + 1 & \text{by Claim~\ref{cla:kAndSumSi}} \\
    	& = n + \frac{(n-2)(q-1)(q-2)}{q^2} + \frac{n-1}{q} - n + 1 & \text{by Lemma~\ref{lam: avg sum si} and Claim~\ref{cla:DiffZeros}}\\
    	& = 1 +  \frac{(n-2)(q-1)(q-2)}{q^2} + \frac{n-1}{q}.&
\end{align*}
When $q=2$ the latter implies that
$$\mathop{{}\mathbb{E}}_{\bfx\in\mathbb{Z}_2^n}\left[A(\bfx)\right] = \frac{n+1}{2}.$$
\end{proof}

\begin{lemma}
\label{lem: avg num of runs}
For any two integers $n$ and $q>1$,  the average number of runs in a sequence $\bfx\in\mathbb{Z}_q^n$ is
$$\mathop{{}\mathbb{E}}_{\bfx\in\mathbb{Z}_q^n}\left[\rho(\bfx)\right] = n - \frac{n-1}{q}.$$
\end{lemma}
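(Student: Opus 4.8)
The plan is to express the number of runs via the difference vector $\bfx'$ introduced just above, and then invoke Claim~\ref{cla:DiffZeros}. The key observation is that a new run in $\bfx=(x_1,\ldots,x_n)$ begins exactly at position $1$ and at every position $i\ge 2$ for which $x_i\neq x_{i-1}$, i.e. for which $x'_{i-1}\neq 0$. Hence $\rho(\bfx) = 1 + \#\{\,i\in[n-1] : x'_i\neq 0\,\} = 1 + (n-1) - \mathsf{Zeros}(\bfx')$, a deterministic identity valid for every $\bfx\in\mathbb{Z}_q^n$.

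From here the proof is a one-line computation: taking expectations over a uniformly random $\bfx\in\mathbb{Z}_q^n$ and using linearity,
$$\mathop{{}\mathbb{E}}_{\bfx\in\mathbb{Z}_q^n}\left[\rho(\bfx)\right] = 1 + (n-1) - \mathop{{}\mathbb{E}}_{\bfx\in\mathbb{Z}_q^n}\left[\mathsf{Zeros}(\bfx')\right] = n - \frac{n-1}{q},$$
where the last equality is exactly Claim~\ref{cla:DiffZeros}. (Alternatively, one could run the indicator argument directly: write $\rho(\bfx) = 1 + \sum_{i=2}^n \mathbf{1}[x_i\neq x_{i-1}]$, note $\Pr[x_i\neq x_{i-1}] = (q-1)/q$ for each $i$, and sum; this reproduces the same answer and in fact re-derives Claim~\ref{cla:DiffZeros} along the way.)

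There is essentially no obstacle here — the statement follows immediately from the deterministic run/difference-vector identity combined with the already-established expected number of zeros in the difference vector. The only point worth stating carefully is the identity $\rho(\bfx) = n - \mathsf{Zeros}(\bfx')$, which should be justified by the definition of a run (a maximal block of equal consecutive symbols) together with the definition of $\bfx'$, observing the bijective correspondence between "positions where a new run starts" (other than position $1$) and "nonzero coordinates of $\bfx'$". As a sanity check one may verify the boundary cases $n=1$ (giving $\mathop{{}\mathbb{E}}[\rho]=1$, consistent with $\bfx'$ being empty) and the constant-sequence contribution, and also confirm consistency with Lemma~\ref{lem: q=2 runs and segments} and Corollary~\ref{cor: E[k(x)]} in the case $q=2$, where $\mathop{{}\mathbb{E}}[\rho(\bfx)] = n - \frac{n-1}{2} = \frac{n+1}{2} = \mathop{{}\mathbb{E}}[A(\bfx)]$, as it must since $\rho(\bfx)+A(\bfx)=n+1$ for binary sequences.
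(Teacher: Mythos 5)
Your proposal is correct and follows essentially the same route as the paper: both establish the deterministic identity $\rho(\bfx) = n - \mathsf{Zeros}(\bfx')$ by noting that a run starts at position $1$ and at each $i\ge 2$ with $x'_{i-1}\neq 0$, and then apply Claim~\ref{cla:DiffZeros} with linearity of expectation. No gaps.
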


\begin{proof}
For a sequence $\bfx\in\mathbb{Z}_q^n$,
the number of runs in $\bfx$ is equal to the number of entries which begin a run in $\bfx$.
Clearly, $x_1$ is the beginning of the first run and by the definition of the difference vector, we have that for each $i$, $2\le i\le n$,  $x_i$ starts a run if and only if $x_{i-1}'\ne 0$.
Thus,
$$\rho(\bfx) = n - \mathsf{Zeros}(\bfx'),$$
and, by Claim~\ref{cla:DiffZeros},
$$\mathop{{}\mathbb{E}}_{\bfx\in\mathbb{Z}_q^n}\left[\rho(\bfx)\right] =  n - \mathop{{}\mathbb{E}}_{\bfx\in\mathbb{Z}_q^n}\left[\mathsf{Zeros}(\bfx')\right] = n - \frac{n-1}{q}.$$
\end{proof}

Our current goal is to evaluate $\mathop{{}\mathbb{E}}_{\bfx\in\mathbb{Z}_q^n}\left[\sum_{i=1}^{A(\bfx)}s_i^2\right]$.
Denote by $\chi(s)$ the number of maximal alternating segments of length $s$ over all the sequences $\bfx\in\mathbb{Z}_q^n$, i.e.,
$$\chi(s) = \sum_{\bfx\in\mathbb{Z}_q^n}\left|\left\{1\le i\le A(\bfx)\ : \ s_i=s\right\}\right|.$$
 It holds that
$$\mathop{{}\mathbb{E}}_{\bfx\in\mathbb{Z}_q^n}\left[\sum_{i=1}^{A(\bfx)}s_i^2\right] = \frac{1}{q^n}\sum_{\bfx\in\mathbb{Z}_2^n} \sum_{i=1}^{A(\bfx)}s_i^2 =
 \frac{1}{q^n} \sum_{s=1}^n s^2 \chi(s),$$
and the values of $\chi(s)$ for $1\le s\le n$ are given in the following lemmas.

\begin{lemma}~\label{lem: chi(1)}
If $n$ and $q>1$ are two positive integers then
$$ \chi(1) = 2q^{n-1}+(n-2)q^{n-2}.$$
\end{lemma}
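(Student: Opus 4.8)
The plan is to count, over all sequences $\bfx \in \mathbb{Z}_q^n$, the number of maximal alternating segments of length exactly $1$, by classifying them according to whether they occur at the beginning of $\bfx$, at the end of $\bfx$, or strictly in the interior. A maximal alternating segment of length $1$ is a single position $x_i$ that cannot be extended to an alternating segment in either direction. First I would set up indicator random variables: for each position $i$, let $\xi_i(\bfx)$ be the indicator that $x_i$ is itself a maximal alternating segment of length $1$, so that $\chi(1) = \sum_{\bfx}\sum_{i=1}^n \xi_i(\bfx) = \sum_{i=1}^n \sum_{\bfx} \xi_i(\bfx)$.

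Next I would determine the local condition for $\xi_i(\bfx)=1$ in each of the three cases. For an interior position $2 \le i \le n-1$: the segment $x_i$ fails to extend left means $x_{i-1} = x_i$ (otherwise $x_{i-1}x_i$ would be an alternating segment of length $2$ containing $x_i$); similarly it fails to extend right means $x_{i+1} = x_i$. Hence $\xi_i(\bfx)=1$ iff $x_{i-1} = x_i = x_{i+1}$, which happens for $q \cdot q^{n-3} = q^{n-2}$ sequences. There are $n-2$ such interior positions, contributing $(n-2)q^{n-2}$. For $i=1$: there is nothing to the left, so $x_1$ is a maximal alternating segment of length $1$ iff $x_2 = x_1$, giving $q^{n-1}$ sequences; symmetrically $i=n$ contributes $q^{n-1}$. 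Summing, $\chi(1) = 2q^{n-1} + (n-2)q^{n-2}$, as claimed. (Here I am implicitly using the fact that a maximal alternating segment of length $1$ starting at position $i$ and a maximal alternating segment starting at an adjacent position do not interfere — the "maximal" condition is purely local, depending only on $x_{i-1}, x_i, x_{i+1}$ — so the three cases are exhaustive and the count over positions is clean.)

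I would double-check the small-case boundaries: for $n=1$ the formula should give $\chi(1) = 2q^0 + (-1)q^{-1}$, which requires care, but the intended regime has $n \ge 2$ and for $n=2$ one gets $\chi(1) = 2q$, matching the fact that every $\bfx = x_1x_2$ with $x_1 = x_2$ (there are $q$ of them) has two length-$1$ segments while every $\bfx$ with $x_1 \ne x_2$ (there are $q^2-q$ of them) has one segment of length $2$ and none of length $1$; total $2q$. The main (very mild) obstacle is simply being careful that a single position can be a maximal alternating segment of length one for the correct local reason, and that the endpoint positions are genuinely different cases from interior ones; once the local characterization $x_{i-1}=x_i=x_{i+1}$ (interior) and $x_i = x_{i\pm 1}$ (endpoints) is pinned down, the count is immediate.
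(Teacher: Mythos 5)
Your proposal is correct and follows essentially the same argument as the paper: it splits into the two endpoint cases (where maximality of a length-one segment at $x_1$ or $x_n$ means $x_1=x_2$ or $x_{n-1}=x_n$, each giving $q^{n-1}$ sequences) and the interior case (where $x_{i-1}=x_i=x_{i+1}$ gives $q^{n-2}$ sequences for each of the $n-2$ positions), then sums. The indicator-variable framing and the sanity check at $n=2$ are nice additions but do not change the substance.
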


\begin{proof}
Let us count the number of maximal alternating segments of length one over all the sequences $\bfx\in\mathbb{Z}_q^n$.
Consider the following two cases:  \\
    \textbf{Case $\bf 1$ - } If the alternating segment is at $x_1$, we can choose the symbols of $x_1$ in $q$ different ways.
    Since the alternating segment's length is one, i.e., $x_1=x_2$, it follows that the value of $x_2$ is determined.
    The symbols at $x_3,\ldots,x_n$ can be selected in $q^{n-2}$ different ways.
    Therefore, there are $q^{n-1}$ distinct sequences with such an alternating segment. The same arguments hold for an alternating segment at $x_n$.\\
    \textbf{Case $\bf 2$ - } If the alternating segment is at index $i, 2\le  i\le  n-1$, it must be that $x_{i-1}=x_i=x_{i+1}$. The symbol at $x_i$ can be selected in $q$ different ways and the symbols of $x_{i-1}, x_{i+1}$ are fixed. In addition.
    we can set the symbols of $\bfx$ at indices $j\notin \{i-1, i,i+1\}$ in $q^{n-3}$ different ways.
    Therefore, there are $q^{n-2}$ distinct sequences with such an alternating segment.

Thus,
$$\chi(1) = 2q^{n-1} + (n-2)q^{n-2}.$$
\end{proof}

\begin{lemma}~\label{lem: chi(n)}
For any two integers $n$ and $q>1$,
$$ \chi(n) = q(q-1).$$
\end{lemma}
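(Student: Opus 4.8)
The plan is to observe that, in a word of length $n$, a maximal alternating segment of length $n$ can only be the word itself, so computing $\chi(n)$ reduces to counting how many words in $\Z_q^n$ are themselves a single alternating segment, each of which contributes exactly $1$ to the sum defining $\chi(n)$.

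First I would argue that if $\bfx_{[i,j]}$ is an alternating segment of $\bfx\in\Z_q^n$ with length $j-i+1=n$, then necessarily $i=1$ and $j=n$; such a segment is automatically maximal, since there is no coordinate to the left of $x_1$ nor to the right of $x_n$. Hence the multiset of maximal alternating segments of length $n$, collected over all $\bfx\in\Z_q^n$, is in bijection with the set of words $\bfx\in\Z_q^n$ that constitute a single alternating segment, and $\chi(n)$ equals the cardinality of this set.

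Next I would count those words directly from the definition of an alternating segment: for $n\ge 2$, $\bfx$ is an alternating segment if and only if there is an ordered pair of distinct symbols $(\sigma,\sigma')\in\Z_q^2$ with $x_i=\sigma$ for odd $i$ and $x_i=\sigma'$ for even $i$; this pair is determined by $\bfx$ and, conversely, determines $\bfx$. There are $q$ choices for $\sigma$ and $q-1$ choices for $\sigma'\neq\sigma$, hence $q(q-1)$ such words, giving $\chi(n)=q(q-1)$.

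The statement is essentially immediate, so there is no real obstacle; the only point requiring a word of care is the boundary behaviour at $n=1$, where every length-one word is already a maximal alternating segment and one would instead get $\chi(1)=q$. As with Lemma~\ref{lem: chi(1)}, the formula is understood for $n$ in the relevant range ($n\ge 2$), which I would state explicitly.
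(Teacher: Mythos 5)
Your proposal is correct and follows essentially the same route as the paper: a length-$n$ alternating segment must be the whole word, it is determined by its first two (distinct) symbols, and there are $q(q-1)$ ordered choices for them. Your extra remark about the degenerate case $n=1$ is a fair observation but does not change the argument.
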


\begin{proof}
Any alternating segment of length $n$ is defined by the first two symbols which must be distinct (the rest of the symbols are determined by the first two symbols). There are $q(q-1)$ different ways to select the first two symbols and hence the claim follows.
\end{proof}

For $2\le s\le n-1$ we need to consider whether the alternating segment overlaps with the preceding or the succeeding segment, or not. To this end, we distinguish between the maximal alternating segments of length $s$ as follows
\begin{enumerate}
\item[] $\chi_1(s) $ - The number of alternating segments that do not overlap with the preceding segment and the succeeding segments.
\item[] $\chi_2(s) $ - The number of alternating segments that overlap with the preceding segment and the succeeding segments.
\item[] $\chi_3(s) $ - The number of alternating segments that overlap only with the succeeding segment.
\item[] $\chi_4(s) $ - The number of alternating segments that overlap only with the preceding segment.
\end{enumerate}
\begin{claim}~\label{cla: alt chi(s) calculation}
If $n, q>1$ are integers and $2\le s\le n-1$ then,
\begin{enumerate}
\item $\chi_1(s)  = 2(q-1)q^{n-s} + (n-s-1)(q-1)q^{n-s-1}.$
\item $\chi_2(s)  = (n-s-1)(q-1)(q-2)^2q^{n-s-1}.$
\item $\chi_3(s)  =  (q-1)(q-2)q^{n-s} + (q-1)(q-2)(n-s-1)q^{n-s-1}.$
\item $\chi_4(s)  =  (q-1)(q-2)q^{n-s} + (q-1)(q-2)(n-s-1)q^{n-s-1}.$
\end{enumerate}
\end{claim}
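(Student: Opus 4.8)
The plan is to count, for each of the four overlap types, the number of (sequence, position-of-maximal-alternating-segment-of-length-$s$) pairs by a direct combinatorial enumeration, exactly as was done in the proofs of Lemma~\ref{lem: chi(1)} and Lemma~\ref{lem: chi(n)}. A maximal alternating segment of length $s$ occupying positions $[i,i+s-1]$ is determined by: (i) the choice of location $i$; (ii) the alternating pattern on $[i,i+s-1]$, which is fixed once we know $x_i$ and $x_{i+1}$ (they must differ, giving $q(q-1)$ choices for the pair, but we will have to be careful because maximality and overlap conditions constrain the neighbours $x_{i-1}$ and $x_{i+s}$); (iii) the free symbols outside the segment and its immediate neighbours. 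The key structural fact to use repeatedly: if positions $[i,i+s-1]$ carry the alternating values $\sigma,\sigma',\sigma,\sigma',\dots$, then the segment is \emph{not} maximal to the left iff $x_{i-1}\neq x_{i+1}$ (so that $x_{i-1}x_i x_{i+1}$ is still alternating, i.e.\ three distinct symbols since $x_{i-1}\neq x_i$ by being a different run and $x_{i-1}\neq x_{i+1}$), and similarly the segment extends to the right iff $x_{i+s}\neq x_{i+s-2}$. Overlap with the preceding maximal segment is exactly the event that the segment is non-maximal-to-the-left in this sense, which requires $x_{i-1}$ to be a third distinct symbol; overlap with the succeeding one is the analogous event on the right.

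First I would handle the boundary effect on the location $i$: for a segment of length $s$ with $2\le s\le n-1$ there is room at the two ends (positions $i=1$ and $i=n-s+1$) where there is only one neighbour to worry about, and $n-s-1$ interior locations where there are two neighbours. This is the source of the ``$2\cdot q^{n-s}$ plus $(n-s-1)\cdot q^{n-s-1}$'' shape in $\chi_1$: at an end location the segment pattern costs a factor $q(q-1)$ (wait — one must double-check the stated $2(q-1)q^{n-s}$; with one neighbour forced to equal $x_{i+1}$ (to keep non-overlap, i.e.\ to force maximality, $x_{i-1}$ must equal $x_{i+1}$, not be free) the count is: $q(q-1)$ for the segment, $1$ for the forced single neighbour, $q^{n-s-1}$ for the rest, giving $(q-1)q^{n-s}$ per end, hence $2(q-1)q^{n-s}$ over the two ends), and at an interior location both neighbours are forced to equal their ``inner'' counterparts, costing $q(q-1)\cdot q^{n-s-2}=(q-1)q^{n-s-1}$, giving the $(n-s-1)(q-1)q^{n-s-1}$ term. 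Then $\chi_2(s)$, the doubly-overlapping case, can only occur at interior locations: both $x_{i-1}$ and $x_{i+s}$ must be third symbols distinct from the two segment symbols, so each contributes a factor $(q-2)$, the segment contributes $q(q-1)$, and the remaining $n-s-2$ positions are free, giving $q(q-1)(q-2)^2 q^{n-s-2}=(q-1)(q-2)^2 q^{n-s-1}$ per interior location, i.e.\ the stated $(n-s-1)(q-1)(q-2)^2 q^{n-s-1}$. Finally $\chi_3$ (overlap only on the right) and $\chi_4$ (only on the left) are symmetric; each splits into an end-location piece (the one end where the ``overlap side'' is the interior side) contributing $(q-1)(q-2)q^{n-s}$ and an interior-location piece contributing $(q-1)(q-2)(n-s-1)q^{n-s-1}$, matching the claim.

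The main obstacle I anticipate is bookkeeping the maximality/overlap dichotomy cleanly at the two boundary positions without double counting or missing cases — in particular making sure that ``the number of maximal alternating segments of length $s$'' is counted with multiplicity (a single sequence may contain several such segments, all of which must be counted), that the four types $\chi_1,\dots,\chi_4$ are genuinely a partition of all length-$s$ maximal segments for $2\le s\le n-1$, and that the edge positions $i=1$ and $i=n-s+1$ are assigned to the right type. For $s=n-1$ one should also sanity-check that $i\in\{1,2\}$, the two end locations coincide with the only two locations, $n-s-1=0$, so the interior terms vanish and the formulas degenerate correctly. A good final cross-check is to verify $\chi(s)=\chi_1(s)+\chi_2(s)+\chi_3(s)+\chi_4(s)$ is consistent with the total count $\sum_{s=1}^n \chi(s) = \mathbb{E}_{\bfx}[A(\bfx)]\cdot q^n$ using Corollary~\ref{cor: E[k(x)]}, and that plugging $s=1$ style reasoning (not literally, since $s\ge 2$ here) or the limiting behaviour is coherent with Lemmas~\ref{lem: chi(1)} and~\ref{lem: chi(n)}.
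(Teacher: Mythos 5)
Your proposal follows essentially the same route as the paper's proof: for each overlap type you split the starting position of the segment into the two boundary locations (contributing the $q^{n-s}$ terms) and the $n-s-1$ interior locations (contributing the $q^{n-s-1}$ terms), charge $q(q-1)$ for the two symbols of the alternating pattern, a factor $q-2$ for each neighbour on an overlapping side, a single forced value for each neighbour on a non-overlapping side, and let the remaining symbols run free. All four final formulas and the case structure match the paper exactly, including the observation that $\chi_2$ has no boundary contribution and that $\chi_3$, $\chi_4$ each get exactly one boundary term.

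There is, however, one concretely wrong step in your bookkeeping of the non-overlap condition. You assert that the segment fails to be maximal to the left iff $x_{i-1}\neq x_{i+1}$, and accordingly force $x_{i-1}=x_{i+1}$ on a non-overlapping side. This is backwards: with the paper's definition an alternating segment uses exactly two symbols, so $x_{[i-1,i+s-1]}$ is again alternating precisely when $x_{i-1}=x_{i+1}$ --- that is the case in which $[i,i+s-1]$ is \emph{not} maximal and must be excluded from the count altogether. The correct trichotomy for an interior left neighbour of a maximal segment is: $x_{i-1}=x_i$ (maximal, no overlap with the preceding segment, one choice), $x_{i-1}\notin\{x_i,x_{i+1}\}$ (maximal, overlaps the preceding segment, $q-2$ choices), and $x_{i-1}=x_{i+1}$ (not maximal, contributes nothing). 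Your overlap factor $q-2$ is therefore right, but the forced value in the non-overlap case must be the adjacent segment symbol $x_i$ (resp.\ $x_{i+s}=x_{i+s-1}$ on the right), as in the paper's proof, not $x_{i+1}$. Since either condition pins the neighbour to a single value, your arithmetic and the stated formulas survive unchanged, but as written your enumeration for $\chi_1$ (and the non-overlapping side of $\chi_3$, $\chi_4$) counts configurations in which the segment is not maximal; the fix is a one-line correction of which symbol the neighbour is forced to equal.
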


\begin{proof}
\begin{enumerate}
\item To count the number of maximal alternating segments of length $s$ that do not overlap with the preceding segment and the succeeding segment we distinguish two distinct cases.\\
\textbf{Case $\bf 1$ - } If the alternating segment is at the beginning of the sequence, then there are $q(q-1)$ distinct ways to select the symbols of the segment. The symbol after the segment is determined (and is equal to the last symbol of the discussed alternating segment) in order to prevent an overlap and the other symbols can be chosen in $q^{n-s-1}$ different ways. Hence, the number of different sequences with such segments is $(q-1)q^{n-s}$. The same arguments hold for an alternating segment at the end of the sequence.\\
\textbf{Case $\bf 2$ - } If the alternating segment is not at the edges of the sequence, then there are $n-s-1$ possible positions to start the alternating segment, and  $q(q-1)$ ways to choose the two symbols of the alternating segment. The symbol preceding and the symbol succeeding the alternating segment are determined.  The other symbols can be chosen in $q^{n-s-2}$ distinct ways and hence the number of different alternating segments is $(n-s-1)(q-1)q^{n-s-1}$.

Thus,
$$\chi_1(s)  = 2(q-1)q^{n-s} + (n-s-1)(q-1)q^{n-s-1}.$$
\item A maximal alternating segment that overlaps with the preceding segment and the succeeding segment can not be at the sequence edges. Hence, there are $n-s-1$ possible positions to start the alternating segment and the symbols of the segment can be chosen in $q(q-1)$ different ways. In order to overlap with the preceding (succeeding, respectively)  segment, the symbol before (after, respectively) the segment must be different from the two symbols of the segment. Therefore, there are $(q-2)^2$ options to choose the symbol before and the symbol after the segment.  In addition, the rest of the sequence can be chosen in $q^{n-s-2}$ different ways and hence
$$\chi_2(s)  = (n-s-1)(q-1)(q-2)^2q^{n-s-1}.$$
\item Since the alternating segment must intersect with the succeeding segment, it can not be the last alternating segment, that is, the segment ends at index $j < n$.
To count the number of maximal alternating segments of length $s$ that overlap only with the succeeding segment we consider two distinct cases.\\
\textbf{Case $\bf 1$ - } If the alternating segment is at the beginning of the sequence then there are $q(q-1)$ different ways to choose the symbols for it and the symbol after the segment must be different from the two symbols of the alternating segment so there are $(q-2)$ options to select it. The other symbols can be chosen in $q^{n-s-1}$ different ways. Hence, the number of different segments is $(q-1)(q-2)q^{n-s}$.\\
\textbf{Case $\bf 2$ - }  If the alternating segment does not start  at the beginning of the sequence, since the segment ends at index $j<n$, it follows that there are $(n-s-1)$ possible locations to start the segment. There are  $q(q-1)$ different ways to select the symbols for the alternating segment. The symbol before the alternating segment is determined in order to prevent an overlap with the previous segment and the symbol after the segment must be different from the two symbols of the alternating segment and hence  there are $(q-2)$ ways to choose it. The other symbols can be chosen in $q^{n-s-2}$ different ways and hence the number of different segments is $q^{n-s-1}(q-1)(q-2)(n-s-1)$.\\
Thus,
$$\chi_3(s) = (q-1)(q-2)q^{n-s} + (q-1)(q-2)(n-s-1)q^{n-s-1}.$$
\item Clearly, the number of maximal alternating segments of length $s$ that overlap only with the succeeding segment is equal to the number alternating segments of length $s$ that overlap only with the preceding segment.
 \end{enumerate}
\end{proof}

\begin{lemma}~\label{lem: chi(s)}
In $n,q>1$ are integers and $2\le s \le n-1$ then
$$\chi(s) = 2(q-1)^2q^{n-s} + (n-s-1)(q-1)^3q^{n-s-1}.$$
\end{lemma}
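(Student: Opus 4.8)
The plan is to show that, for $2\le s\le n-1$, the count $\chi(s)$ splits exactly as $\chi(s)=\chi_1(s)+\chi_2(s)+\chi_3(s)+\chi_4(s)$, where the four quantities on the right are the ones already evaluated in Claim~\ref{cla: alt chi(s) calculation}, and then to add these four closed forms and simplify. Thus the lemma reduces to a bookkeeping observation plus a short algebraic identity.

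For the splitting, fix any $\bfx\in\mathbb{Z}_q^n$ and any maximal alternating segment of $\bfx$ of length $s$ with $2\le s\le n-1$. Because $s\le n-1$, the segment does not occupy all of $\bfx$, and it cannot touch both ends of $\bfx$ at once; hence it has a preceding segment, a succeeding segment, or both. Classifying the segment by whether it overlaps each existing neighbour (a missing neighbour being treated as ``no overlap'' on that side) places it in exactly one of: overlaps neither side ($\chi_1$), overlaps both sides ($\chi_2$), overlaps only the succeeding segment ($\chi_3$), or overlaps only the preceding segment ($\chi_4$). These four cases are mutually exclusive and exhaustive, so summing over all $\bfx$ yields $\chi(s)=\sum_{j=1}^4\chi_j(s)$.

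It then remains to substitute the expressions from Claim~\ref{cla: alt chi(s) calculation} and collect terms. Grouping by powers of $q$, the coefficient of $q^{n-s}$ (contributed by $\chi_1,\chi_3,\chi_4$) is $2(q-1)+2(q-1)(q-2)=2(q-1)^2$, and the coefficient of $(n-s-1)q^{n-s-1}$ (contributed by all four) is $(q-1)\bigl(1+2(q-2)+(q-2)^2\bigr)=(q-1)^3$, using the elementary identity $1+2(q-2)+(q-2)^2=(q-1)^2$. This gives $\chi(s)=2(q-1)^2q^{n-s}+(n-s-1)(q-1)^3q^{n-s-1}$, proving the lemma. I do not anticipate any genuine difficulty here, since all of the nontrivial counting is already carried out in Claim~\ref{cla: alt chi(s) calculation}; the only step needing a moment of attention is confirming that the four sub-counts partition $\chi(s)$ throughout the range $2\le s\le n-1$.
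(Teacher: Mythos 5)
Your proposal is correct and follows essentially the same route as the paper: the paper's proof of Lemma~\ref{lem: chi(s)} simply writes $\chi(s)=\chi_1(s)+\chi_2(s)+\chi_3(s)+\chi_4(s)$ and carries out the same algebraic simplification, using the identity $1+2(q-2)+(q-2)^2=(q-1)^2$ exactly as you do. Your explicit justification that the four cases partition the count (treating a missing neighbour as ``no overlap'') is a point the paper leaves implicit, but it matches the case structure of Claim~\ref{cla: alt chi(s) calculation}.
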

\begin{proof}
By Claim~\ref{cla: alt chi(s) calculation},
\begin{align*}
\chi(s) & = \chi_1(s) + \chi_2(s) + \chi_3(s) + \chi_4(s)\\
 & = 2(q-1)q^{n-s} + (n-s-1)(q-1)q^{n-s-1} + (n-s-1)(q-1)(q-2)^2q^{n-s-1} \\
 & + 2(q-1)(q-2)q^{n-s} + 2(n-s-1)(q-1)(q-2)q^{n-s-1}\\
 & = 2(q-1)^2q^{n-s} + (n-s-1)(q-1)q^{n-s-1}\left(1+(q-2)^2 + 2(q-2)\right) \\
 & =  2(q-1)^2q^{n-s} + (n-s-1)(q-1)q^{n-s-1}\left(q^2-2q+1)\right) \\
& =  2(q-1)^2q^{n-s} + (n-s-1)(q-1)^3q^{n-s-1}.\
\end{align*}
\end{proof}

\begin{lemma}
\label{lem: avg sum of si^2}
If $n, q>1$ are integers then,
$$\mathop{{}\mathbb{E}}_{\bfx\in\mathbb{Z}_q^n}\left[ \sum_{i=1}^{A(\bfx)}s_i^2\right]
=   \frac{n(4q^2-3q+2)}{q^2}+
   \frac{6 q - 4}{q^2}
   - 4
   -\frac{2}{q - 1} \left( 1 - \frac{1}{q^n}\right).$$
\end{lemma}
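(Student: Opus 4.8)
The plan is to compute the expectation directly from the identity $\mathop{{}\mathbb{E}}_{\bfx\in\mathbb{Z}_q^n}\left[\sum_{i=1}^{A(\bfx)}s_i^2\right]=\frac{1}{q^n}\sum_{s=1}^{n}s^2\chi(s)$, plugging in the values of $\chi(s)$ from Lemmas~\ref{lem: chi(1)}, \ref{lem: chi(n)} and~\ref{lem: chi(s)}. First I would peel off the two boundary terms: by Lemma~\ref{lem: chi(1)}, $\frac{1^2\chi(1)}{q^n}=\frac{2}{q}+\frac{n-2}{q^2}$, and by Lemma~\ref{lem: chi(n)}, $\frac{n^2\chi(n)}{q^n}=\frac{n^2(q-1)}{q^{n-1}}$. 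It then remains to evaluate the bulk sum $\frac{1}{q^n}\sum_{s=2}^{n-1}s^2\chi(s)$.

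Substituting $\chi(s)=2(q-1)^2q^{n-s}+(n-s-1)(q-1)^3q^{n-s-1}$ from Lemma~\ref{lem: chi(s)}, the bulk sum becomes
$$2(q-1)^2\sum_{s=2}^{n-1}\frac{s^2}{q^s}+\frac{(q-1)^3}{q}\sum_{s=2}^{n-1}\frac{s^2(n-1-s)}{q^s},$$
and since $\sum_{s=2}^{n-1}s^2(n-1-s)q^{-s}=(n-1)\sum_{s=2}^{n-1}s^2q^{-s}-\sum_{s=2}^{n-1}s^3q^{-s}$, everything reduces to the two finite power sums $\sum_{s=2}^{n-1}s^2q^{-s}$ and $\sum_{s=2}^{n-1}s^3q^{-s}$. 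I would obtain closed forms for these by the standard device: start from $\sum_{s=0}^{n-1}x^s=\frac{1-x^{n}}{1-x}$, apply the operator $x\frac{d}{dx}$ twice, respectively three times, specialize to $x=1/q$ (so that $1-x=(q-1)/q$), and subtract the $s=0$ and $s=1$ terms.

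Finally I would collect all the pieces and simplify to the stated closed form. The main obstacle is entirely the bookkeeping in this last step: the closed forms for the power sums carry several $q^{-n}$-weighted, polynomial-in-$n$ tails, and one must verify that these, together with the $\frac{n^2(q-1)}{q^{n-1}}$ contribution from the $s=n$ term, cancel so that no term growing with $n$ survives alongside a factor $q^{-n}$, leaving only the residual $q^{-n}$ part of the statement while the $n$-linear and constant parts come from the dominant terms. As a sanity check one can verify the $q=2$ specialization against small $n$, or use an alternative derivation that bypasses the $\chi$-machinery entirely: since a pair of positions $a<b$ lies in a (necessarily unique) common maximal alternating segment exactly when $\bfx_{[a,b]}$ is alternating, one has $\sum_{i=1}^{A(\bfx)}s_i^2=n+\zeta(\bfx)+2\sum_{a<b}\mathbf{1}[\bfx_{[a,b]}\text{ is alternating}]$, and taking expectations, together with $\mathop{{}\mathbb{E}}_{\bfx}[\zeta(\bfx)]=(n-2)\frac{(q-1)(q-2)}{q^2}$ (from the proof of Lemma~\ref{lam: avg sum si}) and $\Pr_{\bfx}[\bfx_{[a,b]}\text{ alternating}]=\frac{q-1}{q^{\,b-a}}$, reduces the whole computation to the single sum $\sum_{\ell=2}^{n}(n-\ell+1)\frac{q-1}{q^{\ell-1}}$.
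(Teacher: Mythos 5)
Your main route is exactly the paper's proof: split off the $s=1$ and $s=n$ terms via Lemmas~\ref{lem: chi(1)} and~\ref{lem: chi(n)}, substitute Lemma~\ref{lem: chi(s)} into $\frac{1}{q^n}\sum_{s=2}^{n-1}s^2\chi(s)$, and evaluate the resulting geometric-type sums in closed form; on that score the proposal matches the paper step for step. Your closing alternative is genuinely different and cleaner: writing $s_i^2=s_i+2\binom{s_i}{2}$ and observing that $\sum_i\binom{s_i}{2}$ counts the pairs $a<b$ for which $\bfx_{[a,b]}$ is alternating (each such pair lies in a unique maximal segment, since two maximal segments overlap in at most one position) reduces the whole problem to one elementary sum and sidesteps the four-case count of Claim~\ref{cla: alt chi(s) calculation}.

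Be warned, though, that the bookkeeping you rightly identify as the main obstacle is precisely where the paper itself stumbles, and the closed form you are asked to reach is not what either computation actually yields. Carrying out your alternative route gives $\mathbb{E}\bigl[\sum_i s_i^2\bigr]=n+(n-2)\frac{(q-1)(q-2)}{q^2}+2n-\frac{2q}{q-1}+\frac{2}{(q-1)q^{n-1}}$, which equals the stated expression except that the last term is $-\frac{2}{q-1}\bigl(1-\frac{1}{q^{n-1}}\bigr)$ rather than $-\frac{2}{q-1}\bigl(1-\frac{1}{q^{n}}\bigr)$. Your own proposed sanity check exposes this: at $(n,q)=(2,2)$ the true mean of $\sum_i s_i^2$ over $\{00,01,10,11\}$ is $(2+4+4+2)/4=3$ while the stated formula gives $5/2$; at $(n,q)=(3,2)$ the true value is $44/8=11/2$ versus the formula's $21/4$. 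The slip in the paper occurs in the unjustified ``after rearranging'' step of its proof: the rearranged expression for $\sum_{s=2}^{n-1}s^2\chi(s)$ evaluates to $14$ at $(n,q)=(3,2)$, whereas the (correct) closed form displayed just before it evaluates to $16$. So your plan is sound and your two routes are each viable, but the final step ``simplify to the stated closed form'' cannot succeed as written; you should expect to land on the corrected $q^{-(n-1)}$ tail instead.
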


\begin{proof}
We have that
\begin{align*}
\mathop{{}\mathbb{E}}_{\bfx\in\mathbb{Z}_q^n}\left[\sum_{i=1}^{A(\bfx)}s_i^2\right] & = \frac{1}{q^n}\sum_{\bfx\in\mathbb{Z}_q^n} \sum_{i=1}^{A(\bfx)}s_i^2 =
 \frac{1}{q^n} \sum_{s=1}^n s^2 \chi(s) = \frac{\chi(1)}{q^n} + \frac{n^2\chi(n)}{q^n} + \frac{1}{q^n} \sum_{s=2}^{n-1} s^2 \chi(s).
 \end{align*}
Let us first calculate $\sum_{s=2}^{n-1} s^2 \chi(s)$. By Lemma~\ref{lem: chi(s)},
\begin{align*}
 \sum_{s=2}^{n-1} s^2 \chi(s) & = \sum_{s=2}^{n-1}s^2\left( 2(q-1)^2q^{n-s} + (n-s-1)(q-1)^3q^{n-s-1}\right) \\
 & = 2(q-1)^2\sum_{s=2}^{n-1}s^2q^{n-s} + (q-1)^3\sum_{s=2}^{n-1}(n-s-1)s^2q^{n-s-1}.
 \end{align*}
It can be verified that
 \begin{align*}
\sum_{s=2}^{n-1} s^2 \chi(s) =  \frac{2q^3-q^3n^2(q-1)^2+ q^n(2-2q(3+q(2q-3))+n(q-1)(1+q(4q-3)))}{(q-1)q^{2}}
 \end{align*}
 and after rearranging the latter, we obtain that
 \begin{align*}
  \sum_{s=2}^{n-1} s^2 \chi(s) =  nq^{n-2} (4q^2-3q+1)
   -n^2q(q-1)
    -2q^{n-2}\cdot \frac{(2q-1)(q^2-q+1)}{(q-1)} + \frac{2}{q-1}.
\end{align*}
Hence,
\begin{align*}
\mathop{{}\mathbb{E}}_{\bfx\in\mathbb{Z}_q^n}\left[\sum_{i=1}^{A(\bfx)}s_i^2\right] & = \frac{\chi(1)}{q^n} + \frac{n^2\chi(n)}{q^n} + \frac{1}{q^n} \sum_{s=2}^{n-1} s^2 \chi(s)\\
& =  \frac{2q^{n-1}+(n-2)q^{n-2}}{q^n} + \frac{n^2q(q-1)}{q^n}+ \frac{nq^{n-2} (4q^2-3q+1)}{q^n} \\
  & -\frac{n^2q(q-1)}{q^n}
    -2q^{n-2}\cdot \frac{(2q-1)(q^2-q+1)}{q^n(q-1)} + \frac{2}{q^n(q-1)} \\
    & =  \frac{n(4q^2-3q+2)}{q^2}+
     \frac{2}{q}
     -\frac{2}{q^2}\\
    &- \frac{2(2q-1)(q^2-q+1)}{q^2(q-1)} +
    \frac{2}{q^n(q-1)} \\
      & =  \frac{n(4q^2-3q+2)}{q^2}+
   \frac{6 q - 4}{q^2}
   - 4
   -\frac{2}{q - 1} \left( 1 - \frac{1}{q^n}\right).
    \end{align*}
\end{proof}

\begin{theorem}
\label{the: avg l-ball}
If $n,q>1$ are integers, then
$$\mathop{{}\mathbb{E}}_{\bfx\in\mathbb{Z}_q^n}\left[\left|\cL_1(\bfx)\right|\right]
= n^2\left(q+\frac{1}{q} -2\right) - \frac{n}{q} - \frac{(q-1)(q-2)}{q^2}
	 +3 - \frac{3}{q} + \frac{2}{q^2} + \frac{q^n-1}{q^n(q-1)}.$$
\end{theorem}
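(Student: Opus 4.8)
The plan is to derive the formula by direct substitution into the identity~(\ref{eq: avg L1}). By linearity of expectation,
$$\mathop{{}\mathbb{E}}_{\bfx\in\mathbb{Z}_q^n}\left[|\cL_1(\bfx)|\right] = \mathop{{}\mathbb{E}}[\rho(\bfx)]\,(n(q-1)-1) + 2 - \tfrac12\mathop{{}\mathbb{E}}\Big[\textstyle\sum_{i} s_i^2\Big] + \tfrac32\mathop{{}\mathbb{E}}\Big[\textstyle\sum_i s_i\Big] - \mathop{{}\mathbb{E}}[A(\bfx)],$$
and each of the four expectations on the right-hand side has already been computed: $\mathop{{}\mathbb{E}}[\rho(\bfx)]$ in Lemma~\ref{lem: avg num of runs}, $\mathop{{}\mathbb{E}}[\sum_i s_i^2]$ in Lemma~\ref{lem: avg sum of si^2}, $\mathop{{}\mathbb{E}}[\sum_i s_i]$ in Lemma~\ref{lam: avg sum si}, and $\mathop{{}\mathbb{E}}[A(\bfx)]$ in Corollary~\ref{cor: E[k(x)]}. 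So no new combinatorial input is needed; the task is purely the algebraic simplification of the resulting expression, and the claimed closed form is exactly what that simplification yields.

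First I would isolate the only quadratic-in-$n$ contribution. The three quantities $\mathop{{}\mathbb{E}}[\sum_i s_i^2]$, $\mathop{{}\mathbb{E}}[\sum_i s_i]$, and $\mathop{{}\mathbb{E}}[A(\bfx)]$ are all affine in $n$, so the $n^2$ term comes solely from $\mathop{{}\mathbb{E}}[\rho(\bfx)]\cdot n(q-1) = \big(n-\tfrac{n-1}{q}\big)n(q-1)$, whose leading coefficient is $(1-\tfrac1q)(q-1) = \tfrac{(q-1)^2}{q} = q+\tfrac1q-2$; this produces the $n^2(q+\tfrac1q-2)$ term of the theorem. Expanding $(n-\tfrac{n-1}{q})(n(q-1)-1)$ fully, one checks that its linear-in-$n$ part in fact cancels and it contributes only the constant $-\tfrac1q$ beyond the $n^2$ term.

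Next I would collect the remaining linear and constant terms. Subtracting $\tfrac12$ of the expression in Lemma~\ref{lem: avg sum of si^2}, adding $\tfrac32$ of the expression in Lemma~\ref{lam: avg sum si}, subtracting the expression in Corollary~\ref{cor: E[k(x)]}, and adding $2$, I would (i) gather the coefficient of $n$ and verify that it reduces to $-\tfrac1q$ (the $\tfrac{(q-1)(q-2)}{q^2}$ pieces coming from $\mathop{{}\mathbb{E}}[\sum_i s_i]$ and $\mathop{{}\mathbb{E}}[A(\bfx)]$ combine with the $\tfrac{4q^2-3q+2}{2q^2}$ piece from $\mathop{{}\mathbb{E}}[\sum_i s_i^2]$ to leave only $-\tfrac1q$), and (ii) gather the pure constants, which reduce to $-\tfrac{(q-1)(q-2)}{q^2}+3-\tfrac3q+\tfrac2{q^2}$. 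The single non-polynomial contribution is $-\tfrac12\cdot\big(-\tfrac{2}{q-1}(1-\tfrac1{q^n})\big) = \tfrac{1}{q-1}(1-\tfrac1{q^n}) = \tfrac{q^n-1}{q^n(q-1)}$, which is exactly the last term of the theorem. Assembling the $n^2$-, $n$-, $n^0$-, and $q^{-n}$-parts gives the stated formula.

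There is no conceptual obstacle here; the only real risk is an arithmetic slip while merging the several $\tfrac1{q^2}$-scale constants arising from four distinct sources. I would therefore carry out the simplification by splitting the expression into its $n^2$, $n$, constant, and $q^{-n}$ components and checking each component separately — that bookkeeping is the step I expect to require the most care.
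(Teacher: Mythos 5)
Your proposal is correct and follows exactly the paper's own route: the paper likewise applies linearity of expectation to the identity~(\ref{eq: avg L1}) and substitutes the values from Lemma~\ref{lem: avg num of runs}, Lemma~\ref{lem: avg sum of si^2}, Lemma~\ref{lam: avg sum si}, and Corollary~\ref{cor: E[k(x)]}, then simplifies. Your intermediate checks (the $n^2$ coefficient $\frac{(q-1)^2}{q}$, the cancellation of the linear part of $\bigl(n-\frac{n-1}{q}\bigr)\bigl(n(q-1)-1\bigr)$ leaving $-\frac1q$, the collapse of the $n$-coefficient to $-\frac1q$, and the $\frac{q^n-1}{q^n(q-1)}$ term) are all accurate.
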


\begin{proof}
By (\ref{eq: avg L1}) we have that
\begin{small}
\begin{align*}
    	\mathop{{}\mathbb{E}}_{\bfx\in\mathbb{Z}_q^n}\left[|\cL_1(\bfx)|\right]
	&  = \left(nq-n-1\right)\mathop{{}\mathbb{E}}_{\bfx\in\mathbb{Z}_q^n}\left[\rho(\bfx)\right]
	+ 2 - \frac{1}{2}\mathop{{}\mathbb{E}}_{\bfx\in\mathbb{Z}_q^n}\left[\sum_{i=1}^{A(\bfx)}s_i^2\right]
	+ \frac{3}{2}\mathop{{}\mathbb{E}}_{\bfx\in\mathbb{Z}_q^n}\left[\sum_{i=1}^{A(\bfx)}s_i\right]
	- \mathop{{}\mathbb{E}}_{\bfx\in\mathbb{Z}_q^n}\left[A(\bfx)\right].
\end{align*}
\end{small}
Using Corollary~\ref{cor: E[k(x)]} and Lemmas~\ref{lam: avg sum si},~\ref{lem: avg num of runs}, and~\ref{lem: avg sum of si^2} we infer that
\begin{align*}
    	\mathop{{}\mathbb{E}}_{\bfx\in\mathbb{Z}_q^n}\left[\left|\cL_1(\bfx)\right|\right] &  = \left(nq-n-1\right)\left(n-\frac{n-1}{q}\right) + 2 \\
    	& -\frac{1}{2}\left(\frac{n(4q^2-3q+2)}{q^2}+
   \frac{6 q - 4}{q^2}
   - 4
   -\frac{2}{q - 1} \left( 1 - \frac{1}{q^n}\right)\right)\\
    	& +\frac{3}{2}\left( n + (n-2)\cdot \frac{(q-1)(q-2)}{q^2}\right)
	 - 1 -  \frac{(n-2)(q-1)(q-2)}{q^2} - \frac{n-1}{q}\\
	 &= n^2\left(q+\frac{1}{q} -2\right) - \frac{n}{q} - \frac{(q-1)(q-2)}{q^2}
	 +3 - \frac{3}{q} + \frac{2}{q^2} + \frac{q^n-1}{q^n(q-1)}.
\end{align*}
\end{proof}

\section{Binary Anticodes with Diameter one}
\label{sec:anticode_size}

Before presenting the analysis of the anticodes under the FLL metric, we state the following lemma, which was proven in~\cite{L01} and will be used in some of the proofs in this section.
\begin{lemma}~\label{lem: del/ins intersection}
If $\bfx,\bfy \in \Z_2^n$ are distinct words, then
$$|\cD_1(\bfx)\cap \cD_1(\bfy)|\le 2\ \text{ and }\  |\cI_1(\bfx)\cap \cI_1(\bfy)|\le 2.$$
\end{lemma}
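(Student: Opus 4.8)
The plan is to reduce both inequalities to a finite case analysis on a short ``rotated'' pair of binary blocks. First, I would record that if either intersection is nonempty then $d_\ell(\bfx,\bfy)=1$: if $\bfz\in\cD_1(\bfx)\cap\cD_1(\bfy)$ then $\bfz$ is a common subsequence of length $n-1$, so $\mathsf{LCS}(\bfx,\bfy)\ge n-1$, and since $\bfx\ne\bfy$ this gives $\mathsf{LCS}(\bfx,\bfy)=n-1$, i.e.\ $d_\ell(\bfx,\bfy)=1$ by Corollary~\ref{cor: LCS length}; if $\bfw\in\cI_1(\bfx)\cap\cI_1(\bfy)$ then $\bfx,\bfy\in\cD_1(\bfw)$, and deleting from $\bfw$ the coordinate missing from $\bfx$ and then the one missing from $\bfy$ produces a common subsequence of length $\ge n-1$, so the same conclusion holds. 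Hence I may assume $\bfx\ne\bfy$ and $\mathsf{LCS}(\bfx,\bfy)=n-1$; then $\cD_1(\bfx)\cap\cD_1(\bfy)$ is exactly the set of longest common subsequences of $\bfx,\bfy$ and $\cI_1(\bfx)\cap\cI_1(\bfy)$ is exactly the set of shortest common supersequences.

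Second, I would strip the common prefix and suffix. Let $[p,q]$ be the shortest interval outside of which $\bfx$ and $\bfy$ agree, so $x_p\ne y_p$ and $x_q\ne y_q$, and set $\bfa=\bfx_{[p,q]}$, $\bfb=\bfy_{[p,q]}$, $L=q-p+1$. Concatenating the common prefix $\bfx_{[1,p-1]}$ on the left and the common suffix $\bfx_{[q+1,n]}$ on the right sends the length-$(L-1)$ common subsequences of $\bfa,\bfb$ injectively into the length-$(n-1)$ common subsequences of $\bfx,\bfy$, and likewise the length-$(L+1)$ common supersequences; the point that needs work is that both maps are onto, i.e.\ that any single deletion (resp.\ insertion) turning $\bfx$ (resp.\ $\bfy$) into a common object can be moved into the window $[p,q]$. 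If it could not, $\bfz$ would have $x_p\cdots x_n$ as a suffix or $x_1\cdots x_q$ as a prefix; a short length-comparison argument using $x_p\ne y_p$, $x_q\ne y_q$ (together with $\mathsf{LCS}(\bfa,\bfb)\le L-1$) shows this is incompatible with $\bfz$ also being a (sub/super)sequence of $\bfy$. This reduces the lemma to $|\cD_1(\bfa)\cap\cD_1(\bfb)|\le 2$ and $|\cI_1(\bfa)\cap\cI_1(\bfb)|\le 2$. I expect this sliding step to be the main obstacle.

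Third, I would identify $\bfa,\bfb$ exactly. The case $L=1$ is $\{\bfa,\bfb\}=\{0,1\}$, for which $|\cD_1(\bfa)\cap\cD_1(\bfb)|=1$ and $|\cI_1(\bfa)\cap\cI_1(\bfb)|=2$ by inspection. If $L\ge 2$, take any common subsequence $\bfz=z_1\cdots z_m$ of $\bfa,\bfb$, where $m=L-1$; both $\bfa$ and $\bfb$ arise from $\bfz$ by one insertion. Since $a_1\ne b_1$, at least one of the two, say $\bfa$, must begin with an inserted symbol, so $\bfa=\gamma z_1\cdots z_m$ and $a_L=z_m$; then $b_L\ne z_m$ forces $\bfb$ to end with an inserted symbol, $\bfb=z_1\cdots z_m\delta$; and $a_1\ne b_1$, $a_L\ne b_L$ give $\gamma=\overline{z_1}$, $\delta=\overline{z_m}$, where $\overline{\cdot}$ denotes binary complement. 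Thus, up to renaming $\bfa,\bfb$, we have $\bfa=\overline{z_1}z_1z_2\cdots z_m$ and $\bfb=z_1z_2\cdots z_m\overline{z_m}$.

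Finally I would enumerate for this pair. For deletions, $\bfz$ is common (delete the first symbol of $\bfa$, or the last of $\bfb$); any further length-$m$ common subsequence is $\bfa$ with a coordinate $\ge2$ deleted, hence starts with $\overline{z_1}$, so its embedding in $\bfb$ must skip $b_1=z_1$, which forces it to equal $z_2\cdots z_m\overline{z_m}$, and checking this against each ``delete coordinate $i$ of $\bfa$'' shows it can occur only when $\bfz$ is alternating, in which case it is uniquely the alternating word of length $m$ starting with $\overline{z_1}$; hence $|\cD_1(\bfa)\cap\cD_1(\bfb)|\le2$. For insertions, $\overline{z_1}z_1\cdots z_m\overline{z_m}$ is always a common supersequence of length $L+1$, and the same front/back dichotomy applied to where the two surplus symbols sit shows the only other candidate is $z_1\overline{z_1}z_1z_2\cdots z_m$, which is a genuine common supersequence again only when $\bfz$ is alternating; hence $|\cI_1(\bfa)\cap\cI_1(\bfb)|\le2$. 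Combining with the reduction yields $|\cD_1(\bfx)\cap\cD_1(\bfy)|\le2$ and $|\cI_1(\bfx)\cap\cI_1(\bfy)|\le2$.
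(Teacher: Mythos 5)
The paper does not actually prove this lemma: it is imported from Levenshtein's reconstruction paper~\cite{L01} ("which was proven in [L01]"), so there is no in-paper argument to compare yours against. Your self-contained proof is correct in outline. The step you yourself flag as the main obstacle --- surjectivity of the prefix/suffix-stripping map onto the window $[p,q]$ --- is indeed the only place needing real care, and the run-sliding argument you gesture at does close it: if every deletion position producing $\bfz$ from $\bfx$ lay in $[1,p-1]$, then comparing the two descriptions of $\bfz$ coming from the $\bfx$-deletion and the $\bfy$-deletion forces either $\bfx_{[p,n]}=\bfy_{[p,n]}$ (contradicting $x_p\ne y_p$) or $x_k=x_{k+1}=\cdots=x_p$ along the deleted run, which lets the deletion slide to position $p$ after all; the insertion case is symmetric. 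I would, however, point out that your steps 3--4 are much heavier than necessary. Once you know $a_1\ne b_1$, any $\bfz\in\cD_1(\bfa)\cap\cD_1(\bfb)$ either retains the first symbol of $\bfa$ (deletion at a coordinate $\ge 2$) or equals $\bfa_{[2,L]}$, and likewise for $\bfb$; since $\bfz$ cannot begin with both $a_1$ and $b_1$, it must equal $\bfa_{[2,L]}$ or $\bfb_{[2,L]}$, giving the bound $2$ with no need to pin down the exact shape of $\bfa,\bfb$ or to discuss alternating $\bfz$. Dually, writing $\bfw\in\cI_1(\bfa)\cap\cI_1(\bfb)$ as $\bfa=\bfw$ minus coordinate $i$ and $\bfb=\bfw$ minus coordinate $j$, the disagreements $a_1\ne b_1$ and $a_L\ne b_L$ force $\{i,j\}=\{1,L+1\}$, so $\bfw\in\{b_1\bfa,\ a_1\bfb\}$. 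If you adopt this shortcut you still need your step 2, since $x_1\ne y_1$ and $x_n\ne y_n$ need not hold for the original pair; alternatively, one can run the same first/last-coordinate argument directly on $\bfx,\bfy$ restricted to the first and last positions where they differ, which is essentially what your reduction accomplishes.
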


\begin{definition}
An \emph{anticode of diameter $t$} in $\Z_q^n$ is a subset $\cA\subseteq \Z_q^n$
such that for any $\bfx,\bfx'\in \cA$, $d_\ell(\bfx,\bfx')\le t$.
We say that $\cA$ is a \emph{maximal anticode} if
there is no other anticode of diameter $t$ in  $\Z_q^n$ which contains $\cA$.
\end{definition}
Next, we present tight lower and upper bounds on the size of maximal binary anticodes of diameter one in the FLL metric. To prove these bounds we need some useful properties of anticodes with diameter one in the FLL metric.

\begin{lemma}
\label{lem: suffix 00}
If an anticode $\cA$ of diameter one contains three distinct words with the suffix 00 then there is at most one word in $\cA$ with the suffix 01.
\end{lemma}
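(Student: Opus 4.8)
The idea is to suppose, for contradiction, that $\cA$ contains three distinct words $\bfa^{(1)}, \bfa^{(2)}, \bfa^{(3)}$ ending in $00$ and two distinct words $\bfb^{(1)}, \bfb^{(2)}$ ending in $01$, and then exhibit a pair among these words at FLL-distance at least two. First I would strip the common suffixes: write $\bfa^{(j)} = \bfu^{(j)}\,00$ with $\bfu^{(j)}\in\Z_2^{n-2}$ and $\bfb^{(i)} = \bfv^{(i)}\,01$ with $\bfv^{(i)}\in\Z_2^{n-2}$. The key quantitative handle is Corollary~\ref{cor: LCS length}: $d_\ell(\bfx,\bfy)\le 1$ for words of length $n$ is equivalent to $\mathsf{LCS}(\bfx,\bfy)\ge n-1$, i.e.\ deleting one symbol from each produces a common word of length $n-1$. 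So the hypothesis says every pair among the five words has an LCS of length at least $n-1$; I want to show some $\bfa$–$\bfb$ pair, or some pair within one family combined with a cross constraint, forces LCS $\le n-2$.

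The heart of the argument is a case analysis on how an LCS of length $n-1$ between $\bfu\,00$ and $\bfv\,01$ can arise, using the recursion~\eqref{eq: recursive LCS} on the last symbols. Since the last symbols differ ($0$ versus $1$), an optimal alignment either (i) matches the two $0$'s in positions $n-1$ (of $\bfa$) and $n$ (of $\bfb$) — impossible to be the *last* matched pair because then position $n$ of $\bfa$ (a $0$) must align past position $n$ of $\bfb$; more carefully, I would track that an LCS of length $n-1$ between $\bfu\,00$ and $\bfv\,01$ essentially forces $\mathsf{LCS}(\bfu\,0,\bfv)\ge n-2$ together with a matched trailing $0$, or $\mathsf{LCS}(\bfu,\bfv\,0)\ge n-2$, and in either reduced form the suffix structure is rigid. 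The point to extract is a statement of the form: for each $i$, the word $\bfv^{(i)}$ is obtained from each $\bfu^{(j)}$ by at most one edit of a very restricted type (essentially $\bfv^{(i)}\,1$ is a subsequence-supersequence neighbor of $\bfu^{(j)}\,0$), so $\bfv^{(i)}$ is nearly determined by the $\bfu^{(j)}$'s. Having three *distinct* $\bfu^{(j)}$'s pairwise within FLL-distance one of each other already constrains them to a small configuration (they are three distinct words pairwise sharing a length-$(n-1)$ common subsequence), and I would combine this with the constraint that $\bfv^{(1)}\ne\bfv^{(2)}$ must also be FLL-close to all three $\bfu^{(j)}$'s; the arithmetic of how many such near-common neighbors can exist — bounded via Lemma~\ref{lem: del/ins intersection}, which caps $|\cD_1\cap\cD_1|$ and $|\cI_1\cap\cI_1|$ at $2$ — yields the contradiction.

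Concretely, the plan is: (1) reduce to suffixes, (2) translate "diameter one" into "$\mathsf{LCS}\ge n-1$" via Corollary~\ref{cor: LCS length} and then into statements about $\cD_1\cap\cD_1\ne\varnothing$ via Claim~\ref{lem: deletion intersection and LCS}, (3) show that a common $(n-1)$-subsequence $\bfz$ of $\bfa^{(j)}=\bfu^{(j)}00$ and $\bfb^{(i)}=\bfv^{(i)}01$ must, by the rigidity of the $0,1$ suffix, actually end in a specific way ($\bfz$ ends in $0$ and its length-$(n-2)$ prefix is a common subsequence of $\bfu^{(j)}0$ and $\bfv^{(i)}$, forcing $\bfz$ to be $\bfw\,0$ where $\bfw\in\cD_1(\bfu^{(j)}\,0)\cap\{\text{supersequences forcing }\bfv^{(i)}\}$), (4) argue the three words $\bfu^{(1)}00,\bfu^{(2)}00,\bfu^{(3)}00$ share a common $(n-1)$-subsequence structure that leaves room for at most one $\bfv\,01$-type word, invoking Lemma~\ref{lem: del/ins intersection} on the relevant insertion/deletion spheres to bound the count.

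**Main obstacle.** The delicate part is step~(3)–(4): making the "rigidity of the $00$ versus $01$ suffix" precise enough that it pins down the common subsequence well enough to run a counting argument, while handling the sub-case where $\bfu^{(j)}$ itself ends in $0$ or $1$ (which changes whether the alignment can "borrow" a $0$ from within $\bfu^{(j)}$). I expect the cleanest route is to split on the last symbol of each $\bfu^{(j)}$ and each $\bfv^{(i)}$, reducing in each case to a short direct comparison of suffixes of length $2$ or $3$, and then to package the global count using Lemma~\ref{lem: del/ins intersection} rather than an ad hoc enumeration; getting the bookkeeping of these cases to collapse to "at most one $01$-suffixed word" is where the real work lies.
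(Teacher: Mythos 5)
Your plan follows essentially the same route as the paper's proof: assume three words ending in $00$ and two ending in $01$, translate diameter one into $\mathsf{LCS}\ge n-1$ via Corollary~\ref{cor: LCS length}, exploit the rigidity of the $00$-versus-$01$ suffixes, and close with the bound $|\cI_1(\cdot)\cap\cI_1(\cdot)|\le 2$ from Lemma~\ref{lem: del/ins intersection}. However, the step you flag as the ``main obstacle'' --- and propose to handle by splitting on the last symbols of the $\bfu^{(j)}$'s and $\bfv^{(i)}$'s --- is where your write-up stops short, and it is in fact much simpler than you anticipate; no case analysis is needed. Any length-$(n-1)$ subsequence of a word ending in $00$ is obtained by deleting one symbol and therefore still ends in $0$; and the only length-$(n-1)$ subsequence of $\bfb^{(i)}=\bfv^{(i)}01$ that ends in $0$ is the one obtained by deleting the final $1$, namely $\bfb^{(i)}_{[1,n-1]}=\bfv^{(i)}0$. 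Hence every $\bfy\in\mathcal{LCS}(\bfa^{(j)},\bfb^{(i)})$ equals $\bfb^{(i)}_{[1,n-1]}$, \emph{independently of $j$}, so all three words $\bfa^{(1)},\bfa^{(2)},\bfa^{(3)}$ lie in $\cI_1\bigl(\bfb^{(1)}_{[1,n-1]}\bigr)\cap\cI_1\bigl(\bfb^{(2)}_{[1,n-1]}\bigr)$; since $\bfb^{(1)}\ne\bfb^{(2)}$ share the last symbol, the two centers are distinct, and Lemma~\ref{lem: del/ins intersection} gives the contradiction. With that observation inserted, your steps (1)--(4) collapse to exactly the paper's argument; without it, step (4) as written (``leaves room for at most one $\bfv\,01$-type word'') is not yet a proof, because you have not established that the relevant common subsequence is the \emph{same} word for all three $\bfa^{(j)}$, which is what makes the intersection-of-insertion-spheres count applicable.
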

\begin{proof}
Let $\bfa,\bfa',\bfa''\in\cA$ be three words with the suffix {00} and assume to the contrary that there exist two distinct words $\bfb,\bfb'\in \cA$ with the suffix {01}. Let $\bfy \in \mathcal{LCS}(\bfa,\bfb)$; by Corollary~\ref{cor: LCS length} the length of $\bfy$ is $n-1$ and since $\bfa$ ends with {00}, $\bfy$ must end with {0} which implies that $\bfy=\bfb_{[1,n-1]}$. By the same arguments $\bfy\in \mathcal{LCS}(\bfb,\bfa')$ and $\bfy\in \mathcal{LCS}(\bfb, \bfa'')$.
Similarly,
$$\bfy' = \bfb'_{[1,n-1]}\in \mathcal{LCS}(\bfb',\bfa,\bfa',\bfa'').$$
Hence, $\bfa,\bfa',\bfa''\in \cI_1(\bfy) \cap \cI_1(\bfy') $ which is a contradiction to Lemma~\ref{lem: del/ins intersection}. Thus, $\cA$ contains at most one word with the suffix {01}.
\end{proof}

\begin{lemma}~\label{lem: suffix 01}
If an anticode $\cA$ of diameter one contains three distinct words with the suffix {01}, then there is at most one word in $\cA$ with the suffix {00}.
\end{lemma}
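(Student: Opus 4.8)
The plan is to run the argument of Lemma~\ref{lem: suffix 00} in its dual form: there, three words with suffix $00$ were trapped inside the intersection of two insertion balls; here I will trap three auxiliary words inside the intersection of two \emph{deletion} balls and again invoke Lemma~\ref{lem: del/ins intersection}.

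First I would set up the proof by contradiction: let $\bfb,\bfb',\bfb''\in\cA$ be three distinct words with suffix $01$, and suppose $\cA$ also contains two distinct words $\bfa,\bfa'$ with suffix $00$. A word ending in $00$ is automatically distinct from a word ending in $01$, so every pair consisting of one of $\bfa,\bfa'$ and one of $\bfb,\bfb',\bfb''$ is a pair of distinct words of $\cA$, and by Corollary~\ref{cor: LCS length} each such pair has a longest common subsequence of length exactly $n-1$. The key step is the identification: for any $\bfx$ with suffix $00$ and any $\bfz$ with suffix $01$, the longest common subsequence is unique and equals $\bfz_{[1,n-1]}$. Indeed, deleting a single coordinate from a word ending in $00$ always leaves a word ending in $0$, so every length-$(n-1)$ subsequence of $\bfx$ ends in $0$; on the other hand, among the length-$(n-1)$ subsequences of $\bfz=\bfz_{[1,n-2]}01$ the only one ending in $0$ is $\bfz_{[1,n-1]}$, since deleting any coordinate other than the last leaves a word ending in $1$. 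A common subsequence of length $n-1$ must be of both kinds, hence equals $\bfz_{[1,n-1]}$.

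Next I would extract the contradiction. Applying the identification to $\bfa$ with each of $\bfb,\bfb',\bfb''$, and then to $\bfa'$ with each of $\bfb,\bfb',\bfb''$, shows that each of $\bfb_{[1,n-1]},\bfb'_{[1,n-1]},\bfb''_{[1,n-1]}$ is a length-$(n-1)$ subsequence of both $\bfa$ and $\bfa'$, that is, it lies in $\cD_1(\bfa)\cap\cD_1(\bfa')$. Since $\bfb,\bfb',\bfb''$ all end in $1$, the truncation map $\bfw\mapsto\bfw_{[1,n-1]}$ is injective on them, so the three prefixes are pairwise distinct. Hence $|\cD_1(\bfa)\cap\cD_1(\bfa')|\ge 3$, contradicting Lemma~\ref{lem: del/ins intersection}, and therefore $\cA$ contains at most one word with suffix $00$.

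I expect the only delicate point to be the uniqueness of the longest common subsequence, i.e.\ the two short case analyses over which coordinate is deleted (every single deletion of a word ending in $00$ ends in $0$; exactly one single deletion of a word ending in $01$ ends in $0$). Everything else is a mechanical transcription of the proof of Lemma~\ref{lem: suffix 00} with deletion balls in place of insertion balls; note that Lemma~\ref{lem: del/ins intersection} bounds both kinds of intersection by $2$, which is precisely the symmetry that lets this dualization go through.
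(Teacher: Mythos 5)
Your proof is correct and follows essentially the same route as the paper's: identify the unique length-$(n-1)$ common subsequence of each (suffix-$01$, suffix-$00$) pair as the truncation of the suffix-$01$ word, place the three resulting truncations in $\cD_1(\bfa)\cap\cD_1(\bfa')$, and contradict Lemma~\ref{lem: del/ins intersection}. You are somewhat more explicit than the paper on two points it leaves implicit --- the uniqueness of the LCS and the pairwise distinctness of the three truncated words --- but the argument is the same.
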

\begin{proof}
Let $\bfa,\bfa',\bfa''\in\cA$ be three words with the suffix {01} and assume to the contrary that there exist two distinct words $\bfb,\bfb'\in \cA$ with the suffix {00}.
For $\bfy\in \mathcal{LCS}(\bfa,\bfb)$, by Corollary~\ref{cor: LCS length} the length of $\bfy$ is $n-1$ and since $\bfb$ ends with {00}, $\bfy$ must end with {0} which implies that $\bfy=\bfa_{[1,n-1]}$.
By the same arguments $\bfy\in \mathcal{LCS}(\bfa,\bfb')$.
Similarly,
\begin{align*}
\bfy' = \bfa'_{[1,n-1]} &\in \mathcal{LCS}(\bfa',\bfb,\bfb')\\
\bfy'' = \bfa''_{[1,n-1]} &\in \mathcal{LCS}(\bfa'',\bfb,\bfb').
\end{align*}
 Hence, $\bfy, \bfy', \bfy''\in \cD_1(\bfb) \cap \cD_1(\bfb') $ which is a contradiction to Lemma~\ref{lem: del/ins intersection}. Thus, $\cA$ contains at most one word with the suffix {00}.
\end{proof}

\begin{lemma}~\label{lem: suffixes}
Let $\cA$ be an anticode of diameter one. If ${\bfa,\bfa'\in \cA}$ are two distinct words that end with {00} and $\bfb,\bfb'\in\ \cA$ are two distinct words that end with {01},
then $\bfa_{[1,n-1]}\ne \bfb_{[1,n-1]}$ or  $\bfa'_{[1,n-1]}\ne\bfb'_{[1,n-1]}$.
\end{lemma}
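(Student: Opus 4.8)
The plan is to prove the contrapositive. Assume $\bfa,\bfa'$ both end with $00$, $\bfb,\bfb'$ both end with $01$, all four words lie in the diameter-one anticode $\cA$, and---towards a contradiction---that $\bfa_{[1,n-1]}=\bfb_{[1,n-1]}$ and $\bfa'_{[1,n-1]}=\bfb'_{[1,n-1]}$. Write $\bfu\triangleq\bfa_{[1,n-1]}$ and $\bfu'\triangleq\bfa'_{[1,n-1]}$. Then $\bfa$, $\bfb$ are obtained from $\bfu$ by appending the symbol $0$, resp.\ $1$, and $\bfa'$, $\bfb'$ are obtained from $\bfu'$ by appending $0$, resp.\ $1$; denote these four length-$n$ words by $\bfu0,\bfu1,\bfu'0,\bfu'1$. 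Since $\bfa$ and $\bfa'$ end with $00$, the last symbol of $\bfu$ and of $\bfu'$ equals $0$, and since $\bfa\ne\bfa'$ we have $\bfu\ne\bfu'$.

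The first step is to extract two subsequence relations from the diameter-one property applied to the two ``cross'' pairs. Since $\bfu0$ and $\bfu'1$ are distinct (they differ in the last coordinate) and both lie in $\cA$, we have $d_\ell(\bfu0,\bfu'1)=1$, so by Corollary~\ref{cor: LCS length}, $\mathsf{LCS}(\bfu0,\bfu'1)=n-1$; likewise $\mathsf{LCS}(\bfu1,\bfu'0)=n-1$. Expanding the first quantity with the recursion~(\ref{eq: recursive LCS})---the last symbols $0$ and $1$ differ---gives $n-1=\max\{\mathsf{LCS}(\bfu,\bfu'1),\,\mathsf{LCS}(\bfu0,\bfu')\}$. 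Now $\mathsf{LCS}(\bfu,\bfu'1)\le n-2$: otherwise $\bfu$ would be a subsequence of $\bfu'1$ of length $|\bfu'1|-1$, i.e.\ obtained from $\bfu'1$ by a single deletion; deleting the final $1$ produces $\bfu'\ne\bfu$, while deleting any earlier symbol leaves the final $1$ in place and forces $\bfu$ to end with $1$, contradicting that $\bfu$ ends with $0$. Since also $\mathsf{LCS}(\bfu0,\bfu')\le|\bfu'|=n-1$, we conclude $\mathsf{LCS}(\bfu0,\bfu')=n-1$, that is, $\bfu'\in\cD_1(\bfu0)$. The symmetric argument applied to $\mathsf{LCS}(\bfu1,\bfu'0)=n-1$ gives $\bfu\in\cD_1(\bfu'0)$.

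The core of the proof is to show these two relations are incompatible: there are no distinct $\bfu,\bfu'\in\Z_2^{n-1}$, both ending with $0$, with $\bfu'\in\cD_1(\bfu0)$ and $\bfu\in\cD_1(\bfu'0)$. First, neither word is $0^{n-1}$: if $\bfu=0^{n-1}$ then $\cD_1(\bfu0)=\cD_1(0^n)=\{0^{n-1}\}$, so $\bfu'=0^{n-1}=\bfu$. Thus both words contain a $1$; let $L(\bfw)$ denote the position of the last $1$ in a binary word $\bfw$ that contains a $1$. The key claim is: \emph{if $\bfv,\bfw\in\Z_2^{n-1}$ are distinct, $\bfw$ ends with $0$, $\bfv$ contains a $1$, and $\bfv\in\cD_1(\bfw0)$, then $L(\bfv)<L(\bfw)$.} To see this, note that in $\bfw0$ the symbols at positions $L(\bfw)+1,\dots,n$ are all $0$, and deleting any single one of them returns $\bfw$; since $\bfv\ne\bfw$, the deletion that yields $\bfv$ must occur at a position $\le L(\bfw)$, and then every symbol of $\bfv$ at a position $\ge L(\bfw)$ equals $0$, whence $L(\bfv)\le L(\bfw)-1$. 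Applying the claim with $(\bfv,\bfw)=(\bfu',\bfu)$ and then with $(\bfv,\bfw)=(\bfu,\bfu')$ yields $L(\bfu')<L(\bfu)$ and $L(\bfu)<L(\bfu')$, a contradiction. This completes the proof.

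The routine parts are the initial reduction and the manipulations with the LCS recursion; the step that needs genuine care is the last one. It is worth noting that Lemma~\ref{lem: del/ins intersection} by itself does not settle it: the two subsequence relations only place $\bfu$ and $\bfu'$ in $\cD_1(\bfu0)\cap\cD_1(\bfu'0)$, which is permitted to contain two elements, so the refinement via the position of the last $1$ is what makes the argument go through.
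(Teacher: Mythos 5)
Your proof is correct. The first half follows the same route as the paper: you work with the two cross pairs $(\bfa,\bfb')$ and $(\bfa',\bfb)$ and extract from $d_\ell=1$ the two mutual relations $\bfu'\in\cD_1(\bfu0)$ and $\bfu\in\cD_1(\bfu'0)$ (the paper phrases this as ``$\bfa'$ is obtained from $\bfb$ by deleting the last $1$ and inserting a $0$,'' which is the same pair of relations in insertion form); your derivation via the LCS recursion is actually more careful than the paper's ``clearly.'' Where you genuinely diverge is in closing the argument: the paper writes out explicit coordinate identities for the two deletion--insertion patterns, takes the two insertion indices $i\le j$, and propagates $a_{n-2}=a'_{n-2}=0$ leftward to force $\bfy=\bfy'$; you instead introduce the invariant $L(\cdot)$, the position of the last $1$, and show that a nontrivial single deletion from $\bfw0$ with $\bfw$ ending in $0$ strictly decreases it, so the two relations give $L(\bfu')<L(\bfu)<L(\bfu')$. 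Your version avoids the index bookkeeping entirely and also dispenses with the paper's auxiliary observation that $\text{wt}(\bfy)=\text{wt}(\bfy')$. Your closing remark is also apt: Lemma~\ref{lem: del/ins intersection} alone only puts $\bfu,\bfu'\in\cD_1(\bfu0)\cap\cD_1(\bfu'0)$, which that lemma permits, so some additional structural argument (yours or the paper's) is indeed required.
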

\begin{proof}
Assume to the contrary that there exist $\bfa,\bfa',\bfb,\bfb'\in \cA$ such that  ${\bfa_{[1,n-1]}=\bfb_{[1,n-1]}=\bfy}0$ and  $\bfa'_{[1,n-1]}=\bfb'_{[1,n-1]}=\bfy'0$,
$\bfa,\bfa'$ end with {00} and $\bfb,\bfb'$ end with {01}. Let,
\begin{align*}
\bfa \  &= a_1\ a_2 \ldots a_{n-2}\  {0\  0}   = \bfy\ \   {0\ 0} \\
\bfa'&= a'_1\ a'_2 \ldots a'_{n-2}\  {0\  0}=\bfy'\  {0\ 0} \\
\bfb \ &= a_1\ a_2 \ldots a_{n-2}\  {0\  1}  =\bfy\ \ {0\ 1} \\
\bfb'&= a'_1\ a'_2 \ldots a'_{n-2}\  {0\ 1}=\bfy'\  {0\ 1}.
\end{align*}
Notice that since the FLL distance between any two words in $\cA$ is one, it follows that
the Hamming weight of any two words can differ by at most one, which implies that $\text{wt}(\bfy)=\text{wt}(\bfy')$
(by considering the pairs $\bfa,\bfb'$ and $\bfa',\bfb$).
Clearly, $\bfy{0}\in \mathcal{LCS}(\bfa',\bfb)$ which implies that $\bfa'$ can be obtained from~$\bfb$ by deleting the last~{1} of $\bfb$ and then inserting {0} into the LCS. Hence, there exists an index $0\le j\le n-2$ such that

\begin{equation}
\label{eq: a' from a}
a_1 a_2 \ldots a_j{0} a_{j+1} \ldots a_{n-2}{0}  = a_1'a_2'\ldots a'_j  a'_{j+1} \ldots a'_{n-2}{00}.
\end{equation}
Similarly, $\bfa$ can be obtained from $\bfb'$, i.e., there exists an index $0\le i\le n-2$ such that

\begin{equation}~\label{eq: a from a'}
a_1'a_2'\ldots a'_i {0} a'_{i+1} \ldots a'_{n-2}{0}  = a_1 a_2 \ldots a_i a_{i+1} \ldots a_{n-2} {00}.
\end{equation}
Assume w.l.o.g. that $i\le j$. (\ref{eq: a' from a}) implies that $a_{r}=a_{r'}$ for $1\le r\le j$. In addition, $a_{n-2}={0}$ by~(\ref{eq: a' from a}) and $a_{n-2}'={0}$ by (\ref{eq: a from a'}). By assigning  $a_{n-2} = a_{n-2}' = {0}$ into (\ref{eq: a' from a}) and (\ref{eq: a from a'}) we obtain that $a_{n-3}=a_{n-3}'={0}$.  Repeating this process implies that $a_{r}=a_{r'}={0}$ for $j+1 \le r\le n-2$.
Thus, we have that $\bfy=\bfy'$ which is a contradiction.
\end{proof}

\begin{definition}
For an anticode ${\cA}\subseteq\mathbb{Z}_2^n$, the  \emph{puncturing of $\cA$ in the $n$-th coordinate}, $\cA'$, is defined by
$$\cA' \triangleq \left\{ \bfa_{[1:n-1]}\ : \ {\bfa}\in\cA \right\}.$$
\end{definition}

\begin{lemma}~\label{lem: anticode with fix last symbol}
Let $\cA\subseteq\mathbb{Z}_2^n$ be an anticode of diameter one. If the last symbol in all the words in~$\cA$ is the same symbol ${\sigma \in \mathbb{Z}_2^n}$, then $\cA'$ is an anticode of diameter one and ${|\cA'|=|\cA|}$.
\end{lemma}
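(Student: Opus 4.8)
The plan is to split the statement into its two assertions---that the punctured set $\cA'$ is an anticode of diameter one, and that $|\cA'|=|\cA|$---and to prove both by translating FLL distances into lengths of longest common subsequences via Corollary~\ref{cor: LCS length}.

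First I would dispose of the cardinality claim. Since every word $\bfa\in\cA$ ends with the same symbol $\sigma$, the puncturing map $\bfa\mapsto\bfa_{[1:n-1]}$ is injective on $\cA$: if $\bfa_{[1:n-1]}=\bfb_{[1:n-1]}$ then appending the common last symbol $\sigma$ gives $\bfa=\bfb$. As this map is by definition surjective onto $\cA'$, it is a bijection, so $|\cA'|=|\cA|$.

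For the diameter claim, I would take two \emph{distinct} punctured words $\bfa',\bfb'\in\cA'$ and lift them to $\bfa=\bfa'\sigma$ and $\bfb=\bfb'\sigma$ in $\cA$. These are distinct, so $d_\ell(\bfa,\bfb)=1$: it is at most one because $\cA$ is an anticode of diameter one, and at least one because $\bfa\ne\bfb$ and $d_\ell$ is a metric. By Corollary~\ref{cor: LCS length}, $\mathsf{LCS}(\bfa,\bfb)=n-1$. Because $a_n=b_n=\sigma$, the middle branch of the recursion~(\ref{eq: recursive LCS}) applies, giving $\mathsf{LCS}(\bfa,\bfb)=1+\mathsf{LCS}(\bfa_{[1:n-1]},\bfb_{[1:n-1]})=1+\mathsf{LCS}(\bfa',\bfb')$, hence $\mathsf{LCS}(\bfa',\bfb')=n-2$. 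Applying Corollary~\ref{cor: LCS length} once more---this time to the length-$(n-1)$ words $\bfa',\bfb'$---yields $d_\ell(\bfa',\bfb')=(n-1)-(n-2)=1$. Thus any two elements of $\cA'$ are at FLL distance at most one, so $\cA'$ is an anticode of diameter one.

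I do not expect a genuine obstacle here: once Corollary~\ref{cor: LCS length} and the LCS recursion~(\ref{eq: recursive LCS}) are available, the argument is essentially a one-line computation. The only points requiring care are (i) ensuring that the ``$x_n=y_m$'' branch of~(\ref{eq: recursive LCS}) is the one that fires, which is exactly what the hypothesis ``all words in $\cA$ share the last symbol $\sigma$'' guarantees, and (ii) keeping the trivial case $\bfa'=\bfb'$ (excluded) apart from the distinct case, so that the bound $d_\ell(\bfa,\bfb)\ge 1$ is legitimate.
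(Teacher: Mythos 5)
Your proposal is correct and follows essentially the same route as the paper's proof: both reduce the FLL distance of the punctured words to an LCS computation via Corollary~\ref{cor: LCS length} and the recursion~(\ref{eq: recursive LCS}) applied at the shared last symbol, and both get $|\cA'|=|\cA|$ from injectivity of puncturing when all words end in $\sigma$. The only cosmetic difference is that you track the exact value $\mathsf{LCS}(\bfa',\bfb')=n-2$ where the paper only needs the inequality $\mathsf{LCS}(\bfa,\bfb)\le|\bfy|+1$ to conclude $d_\ell(\bfa_{[1:n-1]},\bfb_{[1:n-1]})\le 1$.
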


\begin{proof}
Let $\bfa,\bfb\in\cA$ be two different words and let ${\bfy \in \mathcal{LCS}(\bfa_{[1:n-1]}, \bfb_{[1:n-1]})}$. By (\ref{eq: recursive LCS}), $\mathsf{LCS}(\bfa,\bfb)\le |\bfy|+1$ and since $d_\ell(\bfa,\bfb)=1$, Corollary~\ref{cor: LCS length} implies that ${|\bfy|\ge n-2}$ and that $$d_\ell(\bfa_{[1:n-1]},\bfb_{[1:n-1]})\le 1.$$ Hence, $\cA$ is an anticode of diameter one.
Since any two distinct words $\bfa, \bfb  \in \cA$ end with the symbol $\sigma$, it follows that $\bfa_{[1:n-1]}\ne \bfb_{[1:n-1]}$ and thus $|\cA|=|\cA'|$.
\end{proof}

\begin{lemma}~\label{lem: anticode with alt suffix}
Let $\cA$ be an anticode of diameter one. If the suffix of each word in $\cA$ is either  {01} or  {10}, then $\cA'$ is an anticode of diameter one and $|\cA'|=|\cA|$.
\end{lemma}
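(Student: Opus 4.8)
The plan is to split the statement into its two assertions: (i) $\cA'$ is an anticode of diameter one, and (ii) the puncturing map $\bfa\mapsto\bfa_{[1:n-1]}$ restricted to $\cA$ is injective, which then yields $|\cA'|=|\cA|$ immediately.

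For (i) I would reuse the reasoning from the proof of Lemma~\ref{lem: anticode with fix last symbol} almost verbatim: for any $\bfa,\bfb\in\cA$, pick $\bfy\in\mathcal{LCS}(\bfa_{[1:n-1]},\bfb_{[1:n-1]})$; the recursion (\ref{eq: recursive LCS}) gives $\mathsf{LCS}(\bfa,\bfb)\le|\bfy|+1$, and since $d_\ell(\bfa,\bfb)\le 1$, Corollary~\ref{cor: LCS length} forces $|\bfy|\ge n-2$. A second application of Corollary~\ref{cor: LCS length}, now to the two length-$(n-1)$ words $\bfa_{[1:n-1]},\bfb_{[1:n-1]}$, yields $d_\ell(\bfa_{[1:n-1]},\bfb_{[1:n-1]})\le 1$. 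It is worth noting that this step uses nothing about the suffixes — it holds for the puncturing of \emph{any} diameter-one anticode in $\Z_2^n$.

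The suffix hypothesis is precisely what is needed for (ii). Suppose for contradiction that $\bfa,\bfb\in\cA$ are distinct while $\bfa_{[1:n-1]}=\bfb_{[1:n-1]}$; then $a_{n-1}=b_{n-1}$ but $a_n\ne b_n$, so we may assume w.l.o.g. $a_n=0$ and $b_n=1$. By hypothesis the suffix $a_{n-1}a_n$ equals $01$ or $10$, and since $a_n=0$ it must be $10$, i.e., $a_{n-1}=1$; likewise $b_{n-1}b_n$ must equal $01$, i.e., $b_{n-1}=0$. This contradicts $a_{n-1}=b_{n-1}$, so the puncturing map is injective on $\cA$ and $|\cA'|=|\cA|$.

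There is essentially no hard step here: the whole argument is a few lines once one observes that the diameter bound transfers "for free" (exactly as in Lemma~\ref{lem: anticode with fix last symbol}) and that two distinct words of $\cA$ sharing a length-$(n-1)$ prefix would be forced to have the same $(n-1)$-th coordinate, which is incompatible with the $\{01,10\}$ suffix constraint once their last coordinates differ. The only points requiring minor care are handling both orderings of $\{a_n,b_n\}$ (absorbed by the w.l.o.g.) and phrasing part (i) so that it does not inadvertently rely on the suffix assumption.
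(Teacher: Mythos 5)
Your proposal is correct and follows essentially the same route as the paper: part (i) is the identical LCS/Corollary~\ref{cor: LCS length} argument, and your injectivity argument for part (ii) is just the contrapositive phrasing of the paper's two-case observation that distinct words with suffixes in $\{01,10\}$ cannot share the same length-$(n-1)$ prefix. No gaps.
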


\begin{proof}
Let $\bfa,\bfb\in\cA$ be two different words and let $\bfy \in \mathcal{LCS}(\bfa_{[1:n-1]}, \bfb_{[1:n-1]})$.  By (\ref{eq: recursive LCS}), $\mathsf{LCS}(\bfa,\bfb)\le |\bfy|+1$ and since $d_\ell(\bfa,\bfb)=1$, it follows that $|\bfy|\ge n-2$ and that $$d_\ell(\bfa_{[1:n-1]},\bfb_{[1:n-1]})\le 1.$$ Hence, $\cA'$ is an anticode of diameter one.
If $\bfa$ and $\bfb$ end with the same symbol $\sigma\in\{{0,1}\}$, then $\bfa_{[1:n-1]}\ne \bfb_{[1:n-1]}$. Otherwise, one of the words has the suffix {01} and the other has the suffix {10}. That is, $a_{n-1}\ne b_{n-1}$ and therefore ${\bfa_{[1:n-1]}\ne \bfb_{[1:n-1]}}$ and thus, $|\cA'|=|\cA|$.
\end{proof}

\subsection{Upper Bound}
\label{sec:upper_anticodes}

\begin{theorem}
Let $n>1$ be an integer and let ${\cA\subseteq\Z_2^n}$ be a maximal anticode of diameter one.
Then, $|\cA|\le n+1$, and there exists a maximal anticode with exactly $n+1$ codewords.
\end{theorem}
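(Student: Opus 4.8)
The statement has two parts, which I will treat separately. For the existence of a maximal anticode of size exactly $n+1$, the plan is to exhibit $\cA_0 \triangleq \cL_1(0^n)$. By Lemma~\ref{lem: hamming and ell minimal ball} (applicable since $n>1$) this equals the Hamming ball $\cH_1(0^n)$, so by~(\ref{eq: hamming ball size}) it consists of $0^n$ together with the $n$ words of Hamming weight one, giving $|\cA_0|=n+1$. It has diameter one because any two distinct weight-one words share $0^{n-1}$ as a common subsequence and hence are at FLL distance one by Corollary~\ref{cor: LCS length}, while $d_\ell(0^n,\bfe)=1$ for every weight-one $\bfe$. It is maximal because every word of $\Z_2^n$ outside $\cA_0$ has Hamming weight at least two, and a single deletion followed by a single insertion changes the Hamming weight by at most one, so such a word is at FLL distance at least two from $0^n$; adding it would raise the diameter.

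For the upper bound the plan is an induction on $n$ showing that \emph{every} anticode $\cA\subseteq\Z_2^n$ of diameter one satisfies $|\cA|\le n+1$ (maximality will not be used); the base case $n=1$ is immediate. For the step I would pass to the puncturing $\cA'$ of $\cA$ in the last coordinate. As in the proof of Lemma~\ref{lem: anticode with fix last symbol}, if $\bfa_{[1:n-1]}\ne\bfb_{[1:n-1]}$ then~(\ref{eq: recursive LCS}) gives $\mathsf{LCS}(\bfa,\bfb)\le\mathsf{LCS}(\bfa_{[1:n-1]},\bfb_{[1:n-1]})+1$, whence $\mathsf{LCS}(\bfa_{[1:n-1]},\bfb_{[1:n-1]})\ge n-2$ and $d_\ell(\bfa_{[1:n-1]},\bfb_{[1:n-1]})\le 1$ by Corollary~\ref{cor: LCS length}; so $\cA'$ is an anticode of diameter one in $\Z_2^{n-1}$ and $|\cA'|\le n$ by induction. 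The map $\bfa\mapsto\bfa_{[1:n-1]}$ has every fibre of size one or two, a fibre of size two being a pair $\{\bfz0,\bfz1\}\subseteq\cA$, so $|\cA|=|\cA'|+c$ where $c$ is the number of such pairs. It then remains to show $c\le 1$, which yields $|\cA|\le|\cA'|+1\le n+1$.

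To bound $c$ I would establish three facts; observe that such a pair $\{\bfz0,\bfz1\}$ is of the form $\{\bfw00,\bfw01\}$ or $\{\bfw10,\bfw11\}$ according to the last bit of $\bfz$, with $\bfw\in\Z_2^{n-2}$. \emph{(i)~The two forms cannot both occur}: if $\bfw00,\bfw'11\in\cA$ and $d_\ell(\bfw00,\bfw'11)\le 1$, then by~(\ref{eq: deletion intersection}) these words have a common subsequence of length $n-1$; such a word is $\bfw00$ with one symbol deleted and therefore ends in $0$, but it is also $\bfw'11$ with one symbol deleted and therefore ends in $1$ --- impossible, so $d_\ell(\bfw00,\bfw'11)\ge 2$, contradicting diameter one. \emph{(ii)~At most one $\bfz$ ending in $0$ occurs}: two distinct such $\bfz$ would give $\bfw_100,\bfw_101,\bfw_200,\bfw_201\in\cA$ with $\bfw_100$ and $\bfw_101$ sharing their length-$(n-1)$ prefix and $\bfw_200$ and $\bfw_201$ likewise, contradicting Lemma~\ref{lem: suffixes}. \emph{(iii)~At most one $\bfz$ ending in $1$ occurs}: apply (ii) to the bitwise complement $\overline{\cA}$, which is again an anticode of diameter one since complementation preserves the subsequence relation and hence $d_\ell$. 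By (i) all contributing $\bfz$ share the same last bit and by (ii)--(iii) there is at most one of them, so $c\le 1$.

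The step I expect to be the main obstacle is (ii): it is precisely Lemma~\ref{lem: suffixes}, whose proof is the somewhat delicate bookkeeping on two interleaved single-symbol edits, and the crux is to recognize that a hypothetical $c=2$ configuration is exactly the pattern it forbids. Fact~(i) is short once one invokes the description of FLL distance one through a common subsequence of length $n-1$ in~(\ref{eq: deletion intersection}), and everything else --- the fibre-size count, the puncturing preserving diameter one, and the degenerate cases for small $n$ where $\bfw$ is the empty word --- is routine.
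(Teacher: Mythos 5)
Your proof is correct, and in the upper-bound half it takes a genuinely different route from the paper's. The paper argues by minimal counterexample: after discarding the suffix $11$, it pigeonholes to find three codewords sharing a two-symbol suffix, and then runs three separate cases (shared suffix $00$, $01$, or $10$), each invoking Lemma~\ref{lem: suffix 00} or Lemma~\ref{lem: suffix 01} (which in turn rest on the bound $|\cD_1(\bfx)\cap\cD_1(\bfy)|\le 2$, $|\cI_1(\bfx)\cap\cI_1(\bfy)|\le 2$ of Lemma~\ref{lem: del/ins intersection}) to extract a subset of $n^*+1$ codewords on which puncturing is injective, reaching a contradiction via Lemma~\ref{lem: anticode with fix last symbol} or Lemma~\ref{lem: anticode with alt suffix}. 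You instead puncture the entire anticode, write $|\cA|=|\cA'|+c$ with $c$ the number of two-element fibres $\{\bfz 0,\bfz 1\}$, and show $c\le 1$ directly: your fact (ii) is precisely the four-word configuration forbidden by Lemma~\ref{lem: suffixes}, fact (iii) follows from it by complementation (which is an isometry for $d_\ell$), and fact (i) is the same last-symbol parity observation that the paper uses for its ``no suffix $11$'' reduction. This organization buys a shorter induction step that needs only Lemma~\ref{lem: suffixes} and dispenses with Lemmas~\ref{lem: del/ins intersection}, \ref{lem: suffix 00}, and~\ref{lem: suffix 01} entirely, at essentially no cost: the sub-additivity step $\mathsf{LCS}(\bfa,\bfb)\le\mathsf{LCS}(\bfa_{[1:n-1]},\bfb_{[1:n-1]})+1$ that you use to show puncturing preserves diameter one is the same one the paper already uses inside Lemmas~\ref{lem: anticode with fix last symbol} and~\ref{lem: anticode with alt suffix}. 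For the existence half you give the same example as the paper (the words of weight at most one), adding an explicit maximality check that the paper leaves implicit.
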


\begin{proof}
Since two words $\bfx,\bfy$ such that $\bfx$ ends with {00} and $\bfy$ ends with {11} are at FLL distance at least $2$, w.l.o.g. assume that $\cA$ does not contain codewords that end with~{11}. It is easy to verify that the theorem holds for $n\in\{2,3,4\}$. Assume that the theorem does not hold and let $n^*>4$ be the smallest integer such that there exists an anticode  $\cA\subseteq\mathbb{Z}_2^{n^*}$ such that $|\cA|= n^*+2$. Since there are only three possible options for the last two symbols of codewords in $\cA$ (00, 01, or 10) and $|\cA|\ge 7$, it follows that there exist three different codewords in $\cA$ with the same suffix of two symbols.   \\
\textbf{Case $\bf 1$ - } Assume $\bfx,\bfy,\bfz\in \cA$ are  three different words with the suffix {00}. By Lemma~\ref{lem: suffix 00}, there exists at most one codeword in $\cA$ with the suffix {01} and since $\cA$ does not contain codewords with the suffix {11}, there exists at most one codeword in $\cA$ that ends with the symbol {1}. That is, there exist at least $n^*+1$ codewords with {0} as the last symbol. Denote such a set with $n^*+1$ codewords by $\cA_1$. As a subset of the anticode $\cA$, $\cA_1$ is also an anticode and hence by Lemma~\ref{lem: anticode with fix last symbol}, $\cA_1'$ is an anticode of length $n^*-1$ and size $n^*+1$ which is a contradiction to the minimality of $n^*$.\\
\textbf{Case $\bf 2$ - } Assume $\bfx,\bfy,\bfz\in \cA$ are  three different words with the suffix {01}.
By Lemma~\ref{lem: suffix 01}, there exists at most one codeword in $\cA$ with the suffix {00} and since $\cA$ does not contain codewords with the suffix {11} there exist $n^*+1$ codewords that end with either {01} or {10}.  Denote this set of $n^*+1$ codewords as $\cA_1$. As a subset of the anticode $\cA$, $\cA_1$ is also an anticode and hence by Lemma~\ref{lem: anticode with alt suffix}, $\cA_1'$ is an anticode of length $n^*-1$ and size $n^*+1$ which is a contradiction to the minimality of~$n^*$.\\
\textbf{Case $\bf 3$ - } Assume $\bfx,\bfy,\bfz\in \cA$ are  three different words with the suffix {10}.
By the previous two cases, there exist at most two codewords in $\cA$ with the suffix {00} and at most two codewords with the suffix {01}. Since there are no codewords with the suffix {11}, it follows that the number of words that end with {1} is at most two.
If there exist at most one codeword in $\cA$ that ends with {1}, then there are $n^*+1$ codewords in $\cA$ that end with {0} and as in the first case, this leads to a contradiction.
Otherwise there are exactly two codewords in $\cA$ with the suffix {01}. If there are less than two codewords with the suffix {00}, then, the number of codewords with suffixes {01} and {10} is at least $n^*+1$ and similarly to Case $2$, this is a contradiction to the minimality of $n^*$.
Hence, there exist exactly two codewords in $\cA$ with the suffix {00}.
There are exactly $n^*-2$ codewords in $\cA$ with the suffix {10} and two more codewords with the suffix {01}.
By Lemma~\ref{lem: anticode with alt suffix} the words in $\cA'$ that were obtained from these $n^*$ codewords are all different and have FLL distance one from each other. In addition, by Lemma~\ref{lem: suffixes},
 the prefix of length $n^*-1$ of at least one of the codewords that end with {00} is different from the prefixes of length $n^*-1$ of the codewords that end with {01}. This prefix also differ from the prefixes of the codewords that end with {10}.
Therefore, $\cA'$ is an anticode with $n^*+1$ different codewords which is a contradiction to the minimality of~$n^*$.

Note that the set $\cA=\left\{a\in\mathbb{Z}_2^n \ : \ \text{wt}(a)\le1\right\}$ is an  anticode of diameter one with exactly $n+1$ codewords. Thus, the maximum size of an anticode of diameter one is~${n+1}$.
\end{proof}

\subsection{Lower Bound}
\label{sec:lower_anticodes}

\begin{theorem}
Let $n>2$ be a positive integer and let $\cA \subseteq\mathbb{Z}_2^n$ be a maximal anticode of diameter one,
then $|\cA|\ge 4$ and there exists a maximal anticode with exactly 4 codewords.
\end{theorem}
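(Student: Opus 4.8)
The plan is to treat the two assertions separately, beginning with the construction. For $n\ge 3$ I would take the word $\bfv=0^{n-2}101\in\Z_2^{n+1}$ and set $\cA_0=\cD_1(\bfv)$. Since $\bfv$ has exactly four runs, (\ref{eq: deletion ball size}) with $t=1$ gives $|\cA_0|=\rho(\bfv)=4$, the four words being $0^{n-3}101,\ 0^{n-1}1,\ 0^{n-2}11$ and $0^{n-2}10$. That $\cA_0$ has diameter one is immediate: any two words of $\cD_1(\bfv)$ are obtained from $\bfv$ by deleting one symbol each, so deleting both of those symbols from $\bfv$ produces a common subsequence of length $n-1$, and Corollary~\ref{cor: LCS length} then gives FLL distance at most one. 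The work is the maximality of $\cA_0$: I would assume $\bfw\in\Z_2^n\setminus\cA_0$ lies at FLL distance at most one from every word of $\cA_0$; since one deletion and one insertion change the Hamming weight by at most one and $\cA_0$ contains words of weight $1$ and $2$, necessarily $\text{wt}(\bfw)\in\{1,2\}$, and then a short case analysis on the positions of the ones of $\bfw$ — comparing $\bfw$ with $0^{n-1}1$, $0^{n-2}10$ and $0^{n-2}11$ via the LCS recursion~(\ref{eq: recursive LCS}) — forces $\bfw\in\cA_0$, a contradiction.

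For the lower bound it suffices to show that every anticode of diameter one with at most three words can be enlarged, hence is not maximal. The cases of one and two words are quick: a singleton $\{\bfx\}$ always has a neighbour at FLL distance one (the metric is graphic and $|\Z_2^n|>1$ for $n>2$); and for $\cA=\{\bfx,\bfy\}$ with $d_\ell(\bfx,\bfy)=1$, Claim~\ref{lem: deletion intersection and LCS} and Corollary~\ref{cor: LCS length} supply a common subsequence $\bfu$ with $|\bfu|=n-1$, after which the fact that $\bfu$ is a subsequence of every element of $\cI_1(\bfu)$ and of $\bfx$ and $\bfy$ shows that $\cI_1(\bfu)$ is an anticode of diameter one containing $\bfx,\bfy$; by (\ref{eq: insertion ball size}) it has $n+1\ge 4$ words, so it contains a third word enlarging $\cA$.

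The main obstacle is the case $\cA=\{\bfx,\bfy,\bfz\}$. If $\bfx,\bfy,\bfz$ have a common subsequence $\bfu$ of length $n-1$, the same argument applies: $\cI_1(\bfu)$ is an anticode of diameter one of size $n+1\ge 4$ containing $\cA$, and any fourth word enlarges $\cA$. Otherwise I would argue by induction on $n$. The base case $n=3$ is a finite check — one verifies that in the FLL distance-one graph on $\Z_2^3$ every triangle has a common neighbour, so every anticode of diameter one with at most three words extends. For $n\ge 4$, I would first discard, exactly as in the upper-bound proof, the possibility that $\cA$ contains words ending in both $00$ and $11$. If all words of $\cA$ end with the same symbol, or all their two-symbol suffixes lie in $\{01,10\}$, then by Lemma~\ref{lem: anticode with fix last symbol} or Lemma~\ref{lem: anticode with alt suffix} the punctured set $\cA'\subseteq\Z_2^{n-1}$ is an anticode of diameter one with $|\cA'|=|\cA|=3$; moreover $\cA'$ is maximal, since an enlargement $\bfw'$ of $\cA'$ lifts to $\bfw'\sigma$ (respectively $\bfw'1$), which enlarges $\cA$ because $\mathsf{LCS}(\bfw'\sigma,\bfa)\ge n-1$ for every $\bfa\in\cA$ — contradicting the inductive hypothesis. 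The only surviving suffix patterns on three words are then $\{00,00,01\}$, $\{00,01,01\}$ and $\{00,10,01\}$, and for each I would produce an explicit fourth word at FLL distance one from $\bfx,\bfy,\bfz$ by modifying the last one or two symbols of one of them.

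The delicate points, where the bounded but somewhat tedious LCS bookkeeping concentrates, are exactly these three residual suffix patterns, the base case $n=3$, and the maximality check in the construction; the rest is routine reduction. In particular, the key structural insight driving everything is that the deletion sphere $\cD_1(\bfv)$ and the insertion sphere $\cI_1(\bfu)$ are themselves anticodes of diameter one, of sizes $\rho(\bfv)$ and $n+1$ respectively, so that a common length-$(n-1)$ subsequence of $\cA$ immediately furnishes a large anticode containing $\cA$, while a length-$(n+1)$ common supersequence with at least four runs does the same; the hard configurations are precisely those three-word anticodes that possess neither.
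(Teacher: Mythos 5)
There is a fatal gap in the existence half. For $n\ge 4$ the set $\cA_0=\cD_1(0^{n-2}101)=\{0^{n-3}101,\ 0^{n-1}1,\ 0^{n-2}11,\ 0^{n-2}10\}$ is \emph{not} a maximal anticode, so it does not witness the second assertion of the theorem. Indeed, $\bfu=0^{n-2}1$ is a common subsequence of length $n-1$ of all four words (for $0^{n-3}101$ take the $n-3$ leading zeros, the zero in position $n-1$, and the final one). Hence $\cA_0\subseteq\cI_1(\bfu)$, and by your own observation $\cI_1(\bfu)$ is an anticode of diameter one of size $n+1\ge 5$ properly containing $\cA_0$. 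Concretely, $\bfw=10^{n-2}1\notin\cA_0$ contains $\bfu$ as a subsequence and therefore satisfies $d_\ell(\bfw,\bfa)\le 1$ for every $\bfa\in\cA_0$, so the ``short case analysis forcing $\bfw\in\cA_0$'' cannot exist. This is precisely the trap your own closing remark identifies: a four-word anticode certifies the bound only if its members have no common subsequence of length $n-1$, and the members of $\cD_1(\bfv)$ do have one here. The paper instead exhibits $\cA=\{0^{n-3}010,\ 0^{n-3}011,\ 0^{n-3}101,\ 0^{n-3}110\}$, whose four words admit neither a common subsequence of length $n-1$ nor a common supersequence of length $n+1$, which is what makes the maximality check go through.

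Your lower-bound plan is close in spirit to the paper's (minimal counterexample in $n$, the cases $|\cA|\le 2$ handled via the insertion sphere of a common length-$(n-1)$ subsequence, puncturing when the words agree in the last symbol, base case $n=3$), though the paper resolves the remaining three-word case by splitting on first/last symbols and a direct structural analysis ending at $\{01^{n^*-2}0,\ 01^{n^*-1},\ 1^{n^*-1}0\}$ rather than on two-symbol suffix patterns. One step of your reduction is also unjustified: in the $\{01,10\}$-suffix case, lifting an enlargement $\bfw'$ of $\cA'$ to $\bfw'1$ only gives $\mathsf{LCS}(\bfw'1,\bfb)\ge \mathsf{LCS}(\bfw',\bfb_{[1:n-1]})\ge n-2$ for a word $\bfb\in\cA$ ending in $10$, which does not establish $d_\ell(\bfw'1,\bfb)\le 1$; this case needs a different argument (the paper avoids it by not inducting through Lemma~\ref{lem: anticode with alt suffix} in the lower-bound proof).
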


\begin{proof}
For $n=3$ the maximal anticodes are
\begin{align*}
\cA_1 = \{000,\ 001, \ 010,\ 100\} &\ \ \ \cA_2 = \{001,\ 010, \ 100,\ 101\} & \cA_3 = \{001,\ 010, \ 011,\ 101\} \\
\cA_4 = \{010,\ 011, \ 101,\ 110\} &\  \ \ \cA_5 = \{011,\ 101, \ 110,\ 111\} & \cA_6 = \{010,\ 100, \ 101,\ 110\}
\end{align*}
and all of them have size $4 = n + 1$. Assume that the theorem does not hold and let $n^*>3$ be the smallest integer such that there exists a maximal anticode $\cA\subseteq \mathbb{Z}_2^{n^*}$ with less than four codewords. For each $\bfx\in\mathbb{Z}_2^{n^*}$ there exists a sequence $\bfy\in\mathbb{Z}_2^{n^*}$ such that $d_\ell(\bfx,\bfy)=1$ and hence $|\cA|>1$. If $\cA = \{\bfx,\bfy\}\subseteq \mathbb{Z}_2^{n^*}$ by the definition of an anticode $d_{\ell}(\bfx,\bfy)=1$ and $\mathsf{LCS}(\bfx,\bfy)=n-1$. For $\bfz\in\mathcal{ LCS}(\bfx,\bfy)$, by (\ref{eq: insertion ball size}), the insertion ball of radius one centered at $\bfz$ contains $n^*-1>2$ codewords in addition to $\bfx$ and $\bfy$ and each of them can be added into~$\cA$. Hence, $\cA$ is an anticode of diameter one with three codewords. We will prove that there exists a word that can be added into $\cA$ which is a contradiction to the maximality of~$\cA$. Consider the following cases:\\
\textbf{Case $\bf 1$ - } If all the codewords in $\cA$ have the same last symbol $\sigma\in\mathbb{Z}_2$, then by Lemma~\ref{lem: anticode with fix last symbol}, $\cA'\subseteq \mathbb{Z}_2^{n^*-1}$, is an anticode of diameter one that contains three codewords. Since $n^*$ is the smallest integer for which there exists a maximal anticode with less than four codewords, $\cA'$ is not maximal. That is, there exists a word $\bfx'\in\mathbb{Z}_2^{n^*-1}$ such that $\cA'\cup \{\bfx'\}$ is an anticode of diameter one. It can be readily verified that $\bfx'\sigma\notin\cA$ and that $\cA\cup\{\bfx'\sigma\}$ is an anticode of diameter one  which is a contradiction to the maximality of $\cA$.  \\
\textbf{Case $\bf 2$ - } If all the codewords in $\cA$ have the same first symbol $\sigma\in\mathbb{Z}_2$ then a contradiction is obtained by symmetrical arguments to those presented in Case $1$. \\
\textbf{Case $\bf 3$ - } Assume all the words in $\cA$  neither have the same first symbol nor the same last symbol. Let $|\cA| = \{\bfx,\bfy,\bfz\}$ and assume w.l.o.g. that $\bfx$ and $\bfy$ are codewords that end with $0$ and that~$\bfz$ ends with $1$. If $|\cA'|=3$, then $\bfz_{[1,n^*-1]}\ne \bfx_{[1,n^*-1]}$ and  $\bfz_{[1,n^*-1]}\ne \bfy_{[1,n^*-1]}$. Hence the word $\bfz_{[1,n^*-1]} 0$ is not in $\cA$ and it is easy to verify that it has distance one from each codeword in $\cA$, which is a contradiction.  Otherwise, since $\bfx_{[1,n^*-1]}\ne \bfy_{[1,n^*-1]}$, it must hold that $\bfz_{[1,n^*-1]}$ is equal either to $\bfx_{[1,n^*-1]}$ or to $\bfy_{[1,n^*-1]}$. Assume w.l.o.g. that  $\bfz_{[1,n^*-1]} = \bfx_{[1,n^*-1]}$, then $\bfx$ and $\bfz$ have the same first symbol $\sigma$ and hence $\bfy$ must begin with  $\overline{\sigma} = 1-\sigma$. The three codewords can be described as follows:
\begin{align*}
\bfx & = \sigma \textcolor{red}{x_2 x_3 \ldots x_{n^*-1}} 0\\
\bfy & =  \overline{\sigma} \textcolor{cyan}{y_2 y_3 \ldots y_{n^*-1}} 0\\
\bfz & =  \sigma \textcolor{red}{x_2 x_3 \ldots x_{n^*-1}}  1.
\end{align*}
Since $\bfy$ and $\bfz$ have different first and last symbols, their LCS must be equal to the suffix of length $n^*-1$ of one word and to the prefix of length $n^*-1$ of the other word. If
$$\bfz_{[1,n^*-1]} = \sigma  \textcolor{red}{a_2 a_3 \ldots a_{n^*-1}}\in \mathcal{LCS}(\bfy,\bfz),$$
then $\bfz_{[1,n^*-1]}$ is a common LCS of the three codewords $\bfx,\bfy$, and $\bfz$ and hence any word from $\cI_1(\bfz_{[1,n^*-1]})$ has distance one from all the words in $\cA$. Since, by (\ref{eq: insertion ball size}),
$$|\cI_1(\bfz_{[1,n^*-1]})| = n^*+1 \ge 4,$$
there is a word different from $\bfx,\bfy$ and $\bfz$ that can be added into $\cA$. In the other case,
$$\overline{\sigma} \textcolor{cyan}{y_2 y_3 \ldots y_{n^*-1}} = \bfy_{[1,n-1]} = \bfz_{[2,n]} = \textcolor{red}{x_2 x_3 \ldots x_{n^*-1}}  1\in \mathcal{LCS}(\bfy,\bfz)$$
and hence the codewords $\bfx$ and $\bfz$ can be written as
\begin{align*}
\bfx & =\sigma \overline{\sigma} \textcolor{cyan}{y_2 y_3 \ldots y_{n^*-2}} 0\\
\bfz & =  \sigma \overline{\sigma} \textcolor{cyan}{y_2 y_3 \ldots y_{n^*-2} y_{n^*-1}}
\end{align*}
and the word
$$\bfw = \sigma \overline{\sigma} \textcolor{cyan}{y_2 y_3 \ldots y_{n^*-1}} 0$$
is a common SCS of $\bfx,\bfy$, and $\bfz$.  If $\rho(\bfw)>3$  then there is a word in $\cD_1(\bfw)$ that is different from $\bfx,\bfy$, and $\bfz$ that can be added into $\cA$, which is again a contradiction. Otherwise, since the first two symbols of $\bfw$ are different and the last two symbols are also different, it holds that $\rho(\bfw)=3$. It is easy to verify that
$$\cA = \{0\underbrace{11\ldots 1}_{n^*-2 \text{ times}} 0, 0 \underbrace{11\ldots 1}_{n^*-1 \text{ times}}, \underbrace{11\ldots 1}_{n^*-1 \text{ times}} 0\}$$
 and that $\underbrace{11\ldots 1}_{n^*-2 \text{ times}}01$ can be added into $\cA$, which is a contradiction to the minimality of $\cA$. 
To see that the given bound is tight, one can simply consider the set of codewords that consist from the binary representation of length $n^*$ of the numbers $2,3,5,6$ that is, the set
$$
\cA= \{ \underbrace{0\ldots0}_{n^*-3}010,\  \underbrace{0\ldots0}_{n^*-3}011,\  \underbrace{0\ldots0}_{n^*-3}101,\ \underbrace{0\ldots0}_{n^*-3}110\}
$$
and verify that it is indeed a maximal anticode of diameter one.
\end{proof}

\section{Conclusion}
In this paper we studied the size of balls with radius one and the anticodes of diameter one under the FLL metric.
In particular we give explicit expressions for the maximum size of a ball with radius one and the minimum size
of a ball of any given radius in the FLL metric over $\Z_q$. We also found the average size of a $1$-ball in
the FLL metric. Finally, we considered the related concept of anticode in the FLL distance and we found that
the maximum and minimum size of a binary maximal anticode of diameter one are $n+1$ and $4$, respectively.
The latter can be extended to a non-binary alphabet and while the minimum size of a maximal anticode with diameter
one is $4$ for any alphabet size $q$, the maximum size of a maximal anticode with diameter one is $n(q-1)+1$.
The results in this paper were presented in part at the IEEE International Symposium on Information Theory (ISIT), 2021~\cite{BEY21}.
Recently, based on these results,  G. Wang and Q. Wang~\cite{WW22} extended the analysis of $1$-FLL balls by proving
that the size of the $1$-FLL balls is highly concentrated around its mean using Azuma’s inequality~\cite{AS16}.

\section*{Appendix}
\subsection*{Proof of Lemma~\ref{lam : max t}}
Let $\alpha>1$ be some integer and define  $\mathsf{diff} \triangleq \left|L_1\left(\bfx^{(\alpha)}\right)\right|-\left|L_1\left(\bfx^{(\alpha-1)}\right)\right|$. We will prove that $\mathsf{diff} >0$ if and only if $n>2\alpha(\alpha-1)$ by proving that $\mathsf{diff} >0$ for any $n>2\alpha(\alpha-1)$ and that $\mathsf{diff} <0$ for any $\alpha<n<2\alpha(\alpha-1)$. Before we analyze each case we present different expression for $ \left|L_1\left(\bfx^{(\alpha)}\right)\right|$ that will be in use within the proof. By Lemma~\ref{lem: k balanced ball size}, if $n$ is divisible by $\alpha$, then
\begin{align*}
\left|\cL_1\left(\bfx^{(\alpha)}\right)\right|
 & =  (n+1-\alpha)(n-1)+2 - \frac{\alpha}{2}\left(\frac{n}{\alpha}-1\right)
\left(\frac{n}{\alpha}-2\right)\\
& =  (n+1-\alpha)(n-1)+2 -\frac{n^2}{2\alpha}+\frac{3n}{2}-\alpha.
\end{align*}
Otherwise when $n$ is not divisible by $\alpha$, we have that $\left\lceil\frac{n}{\alpha}\right\rceil = \frac{n-k_{\alpha}}{\alpha}+1$ and hence, by Lemma~\ref{lem: k balanced ball size},

\begin{small}
\begin{align*}
\left|\cL_1\left(\bfx^{(\alpha)}\right)\right| & = (n+1-\alpha)(n-1) +2
 - \frac{k_\alpha}{2}\left(\left\lceil\frac{n}{\alpha}\right\rceil-1\right)
\left(\left\lceil\frac{n}{\alpha}\right\rceil-2\right)
 - \frac{\alpha-k_\alpha}{2}\left(\left\lceil\frac{n}{\alpha}\right\rceil-2\right)
 \left(\left\lceil\frac{n}{\alpha}\right\rceil-3\right)\\
 & = (n+1-\alpha)(n-1)+2
 - \frac{k_\alpha}{2}\left( \frac{n-k_\alpha}{\alpha}\right)
\left(\frac{n-k_\alpha}{\alpha}-1\right)
 - \frac{\alpha-k_\alpha}{2}\left( \frac{n-k_\alpha}{\alpha}-1\right)
 \left( \frac{n-k_\alpha}{\alpha}-2\right)\\
 & =  (n+1-\alpha)(n-1)+2
- \frac{k_\alpha}{2}\left(\frac{n-k_\alpha}{\alpha}-1\right)\left(\frac{n-k_\alpha}{\alpha} - \frac{n-k_\alpha}{\alpha}+2\right)
  -\frac{\alpha}{2}
\left( \frac{n-k_\alpha}{\alpha}-1\right)
 \left( \frac{n-k_\alpha}{\alpha}-2\right)\\
 & =  (n+1-\alpha)(n-1)+2
- k_\alpha\left(\frac{n-k_\alpha}{\alpha}-1\right)
  -\frac{\alpha}{2}
\left( \frac{n-k_\alpha}{\alpha}-1\right)
 \left( \frac{n-k_\alpha}{\alpha}-2\right)\\
 & =   (n+1-\alpha)(n-1)+2   - \left(\frac{n-k_\alpha}{\alpha}-1\right)
\left(k_\alpha + \frac{n-k_\alpha}{2}-\alpha\right)\\
& =  (n+1-\alpha)(n-1)+2
-\frac{(n-k_\alpha-\alpha)(n+k_\alpha-2\alpha)}{2\alpha}\\
& =  (n+1-\alpha)(n-1)+2
-\frac{n^2}{2\alpha}+\frac{3n}{2} + \frac{k_\alpha^2}{2\alpha}-\frac{k_\alpha}{2}-\alpha.
\end{align*}
\end{small}
Hence, by abuse of notation, if we let $0\le k_{\alpha}\le \alpha-1$ we have that
$$
\left|\cL_1\left(\bfx^{(\alpha)}\right)\right| = (n+1-\alpha)(n-1)+2
-\frac{n^2}{2\alpha}+\frac{3n}{2} + \frac{k_\alpha^2}{2\alpha}-\frac{k_\alpha}{2}-\alpha,
$$
which implies that
\begin{align*}
\mathsf{diff} & = (n+1-\alpha)(n-1)+2
-\frac{n^2}{2\alpha}+\frac{3n}{2} + \frac{k_\alpha^2}{2\alpha}-\frac{k_\alpha}{2}-\alpha \\
& \ \ \  - (n+1-(\alpha-1))(n-1)-2
     +\frac{n^2}{2(\alpha-1)}-\frac{3n}{2} - \frac{k_{\alpha-1}^2}{2(\alpha-1)}+\frac{k_{\alpha-1}}{2}+\alpha-1\\
     & = -n + n^2\left(\frac{1}{2(\alpha-1)}-\frac{1}{2\alpha}\right) + \left(\frac{k_\alpha^2}{2\alpha}-\frac{k_\alpha}{2}\right) +  \left(\frac{k_{\alpha-1}}{2}-\frac{k_{\alpha-1}^2}{2(\alpha-1)}\right)\\
     & = \frac{n^2}{2\alpha(\alpha-1)}-n + \left(\frac{k_\alpha^2}{2\alpha}-\frac{k_\alpha}{2}\right) +  \left(\frac{k_{\alpha-1}}{2}-\frac{k_{\alpha-1}^2}{2(\alpha-1)}\right).
\end{align*}
Let us consider the following distinct cases for the value of $n$.\\
\textbf{Case $\bf 1$ - } If $n=2\alpha(\alpha-1)$ then $k_\alpha=k_{\alpha-1}=0$ and
$$
\mathsf{diff} =  \frac{(2\alpha(\alpha-1))^2}{2(\alpha-1)\alpha} - 2\alpha(\alpha-1) = 0.$$
\textbf{Case $\bf 2$ - } If $n=2\alpha(\alpha-1)+k$ for some integer $1\le k\le \alpha-2$ then we have that $k_\alpha=k_{\alpha-1}=k$ and
\begin{align*}
\mathsf{diff}  & = \frac{n^2}{2\alpha(\alpha-1)}-n + \left(\frac{k^2}{2\alpha}-\frac{k}{2}\right) +  \left(\frac{k}{2}-\frac{k^2}{2(\alpha-1)}\right)\\
& =  \frac{n^2}{2\alpha(\alpha-1)}-n + \frac{k^2}{2\alpha}-\frac{k^2}{2(\alpha-1)} \\
& =   \frac{n^2}{2\alpha(\alpha-1)}-n - \frac{k^2}{2\alpha(\alpha-1)} \\
& = \frac{(2\alpha(\alpha-1)+k)^2}{2\alpha(\alpha-1)} - 2\alpha(\alpha-1)-k - \frac{k^2}{2\alpha(\alpha-1)}\\
& = 2\alpha(\alpha-1)+ 2k + \frac{k^2}{2\alpha(\alpha-1)}- 2\alpha(\alpha-1)-k - \frac{k^2}{2\alpha(\alpha-1)}\\
& = k > 0.
\end{align*}
\textbf{Case $\bf 3$ - } If $n\ge 2\alpha(\alpha-1)+\alpha - 1$, then we first note that for any $0\le k_{\alpha-1}\le \alpha-2$ we have that $  \left(\frac{k_{\alpha-1}}{2}-\frac{k_{\alpha-1}^2}{2(\alpha-1)}\right)\ge 0$ and hence
\begin{align*}
\mathsf{diff}  & =\frac{n^2}{2\alpha(\alpha-1)}-n + \left(\frac{k_\alpha^2}{2\alpha}-\frac{k_\alpha}{2}\right) +  \left(\frac{k_{\alpha-1}}{2}-\frac{k_{\alpha-1}^2}{2(\alpha-1)}\right)\\
& \ge \frac{n^2}{2\alpha(\alpha-1)}-n + \left(\frac{k_\alpha^2}{2\alpha}-\frac{k_\alpha}{2}\right).
\end{align*}
Define $f:[0,\alpha-1]\to\R$ by $f(x)\triangleq \frac{x^2}{2\alpha} - \frac{x}{2}$. It is easy to verify that $f$ has a single minimum point at $x=\frac{\alpha}{2}$. Hence,
\begin{align*}
 \frac{k_\alpha^2}{2\alpha} - \frac{k_\alpha}{2} = f(k_\alpha) \ge f\left(\frac{\alpha}{2}\right) = \frac{\alpha^2}{4\cdot2\alpha} - \frac{\alpha}{2} = -\frac{\alpha}{8}
\end{align*}
and
$$
\mathsf{diff}\ge  \frac{n^2}{2\alpha(\alpha-1)} - n - \frac{\alpha}{8}.
$$
It holds that
$$
\frac{n^2}{2\alpha(\alpha-1)} - n - \frac{\alpha}{8} \ge 0
$$
if and only if
\begin{align*}
n & \ge \alpha(\alpha-1) + \frac{1}{2}\sqrt{4\alpha^4-7\alpha^3+3\alpha^2} = \alpha(\alpha-1) + \sqrt{\alpha^4 - \frac{7}{4}\alpha^3+\frac{3}{4}\alpha^2}\\
& = \alpha(\alpha-1) + \alpha\sqrt{\alpha^2-\frac{7}{4}\alpha + \frac{3}{4}} = \alpha(\alpha-1) + \alpha\sqrt{\left(\alpha-\frac{3}{4}\right)\left(\alpha-1\right)} .
\end{align*}
Note that
\[
\alpha\sqrt{\left(\alpha-\frac{3}{4}\right)\left(\alpha-1\right)} < \alpha\left(\alpha-\frac{3}{4}\right),
\]
and additionally, it can be verified that for any $\alpha>1$,
$$2\alpha(\alpha-1) + (\alpha-1)\ge  \alpha(\alpha-1) + \alpha\left(\alpha-\frac{3}{4}\right),$$
and thus,
$$n \ge 2\alpha(\alpha-1) + (\alpha-1 )>  \alpha(\alpha-1) + \alpha\sqrt{\left(\alpha-\frac{3}{4}\right)\left(\alpha-1\right)}, $$
which implies that $\mathsf{diff}>0$.\\
\textbf{Case $\bf 4$ - } If $n=2\alpha(\alpha-1)-k$ for some integer $1\le k\le \alpha-2$ then we have that $k_\alpha=\alpha-k,\ k_{\alpha-1}=\alpha-1-k$, and thus
\begin{align*}
\mathsf{diff}  & =\frac{n^2}{2\alpha(\alpha-1)}-n + \left(\frac{(\alpha-k)^2}{2\alpha}-\frac{\alpha-k}{2}\right) +  \left(\frac{\alpha-1-k}{2}-\frac{(\alpha-1-k)^2}{2(\alpha-1)}\right)\\
& = \frac{n^2}{2\alpha(\alpha-1)}-n  + \frac{\alpha^2}{2\alpha} - \frac{2\alpha k}{2\alpha} + \frac{k^2}{2\alpha} + \frac{\alpha-1-k-\alpha+k}{2} - \frac{(\alpha-1)^2}{2(\alpha-1)} + \frac{2(\alpha-1)k}{2(\alpha-1)} - \frac{k^2}{2(\alpha-1)}\\
& = \frac{n^2}{2\alpha(\alpha-1)}-n  + \frac{\alpha}{2} - k + \frac{k^2}{2\alpha} - \frac{1}{2} - \frac{\alpha-1}{2} + k - \frac{k^2}{2(\alpha-1)}\\
& = \frac{n^2}{2\alpha(\alpha-1)}-n  + \frac{k^2}{2\alpha}  - \frac{k^2}{2(\alpha-1)}\\
& =  \frac{(2\alpha(\alpha-1)-k)^2}{2\alpha(\alpha-1)} - 2\alpha(\alpha-1) + k  + \frac{k^2}{2\alpha}  - \frac{k^2}{2(\alpha-1)}\\
& = 2\alpha(\alpha-1) - 2k + \frac{k^2}{2\alpha(\alpha-1)}- 2\alpha(\alpha-1) + k    - \frac{k^2}{2\alpha(\alpha-1)}\\
& = -k < 0.
\end{align*}
\textbf{Case $\bf 5$ - } If $\alpha\le n\le 2\alpha(\alpha-1) - (\alpha-1)$ then we first note that for any $0\le k_{\alpha}\le \alpha-1$ we have that $  \left(\frac{k_{\alpha}^2}{2\alpha}-\frac{k_{\alpha}}{2}\right)\le 0$ and hence
\begin{align*}
\mathsf{diff}  & =\frac{n^2}{2\alpha(\alpha-1)}-n + \left(\frac{k_\alpha^2}{2\alpha}-\frac{k_\alpha}{2}\right) +  \left(\frac{k_{\alpha-1}}{2}-\frac{k_{\alpha-1}^2}{2(\alpha-1)}\right)\\
& \le \frac{n^2}{2\alpha(\alpha-1)}-n + \left(\frac{k_{\alpha-1}}{2}-\frac{k_{\alpha-1}^2}{2(\alpha-1)}\right).
\end{align*}
Define $f:[0,\alpha-2]\to\R$ by $f(x)\triangleq \frac{x}{2} - \frac{x^2}{2(\alpha-1)}$. It can be verified that $f$ has a single maximum point at $x=\frac{\alpha-1}{2}$ and hence
$$
\frac{k_{\alpha-1}}{2}-\frac{k_{\alpha-1}^2}{2(\alpha-1)} = f(k_{\alpha-1})\le f\left(\frac{\alpha-1}{2}\right) = \frac{\alpha-1}{4} - \frac{(\alpha-1)^2}{8(\alpha-1)} =  \frac{\alpha-1}{8}
$$
and
$$
\mathsf{diff} \le \frac{n^2}{2\alpha(\alpha-1)}-n + \frac{\alpha-1}{8}.
$$
It holds that $\frac{n^2}{2\alpha(\alpha-1)}-n + \frac{\alpha-1}{8}\le 0$ if and only if
$$
\alpha(\alpha-1) - \sqrt{\frac{(\alpha-1)^2\alpha(4\alpha-1)}{4}} \le n \le \alpha(\alpha-1) + \sqrt{\frac{(\alpha-1)^2\alpha(4\alpha-1)}{4}}.
$$
Note that
\begin{align*}
\alpha(\alpha-1) - \sqrt{\frac{(\alpha-1)^2\alpha(4\alpha-1)}{4}} & = (\alpha-1)\left(\alpha-\sqrt{\frac{\alpha\left(4\alpha-1\right)}{4}}\right)\\
& =  (\alpha-1)\left(\alpha - \sqrt{\alpha\left(\alpha-\frac{1}{4}\right)}\right) \\
& \le (\alpha-1)\left(\alpha-\left(\alpha-\frac{1}{4}\right)\right)\\
& \le \frac{\alpha-1}{4},
\end{align*}
and
\begin{align*}
\alpha(\alpha-1) + \sqrt{\frac{(\alpha-1)^2\alpha(4\alpha-1)}{4}} & =  (\alpha-1)\left(\alpha+\sqrt{\frac{\alpha\left(4\alpha-1\right)}{4}}\right)\\
& = (\alpha-1)\left(\alpha + \sqrt{\alpha\left(\alpha-\frac{1}{4}\right)}\right)\\
& \ge  (\alpha-1)\left(\alpha+\left(\alpha-\frac{1}{4}\right)\right)\\
& = 2\alpha(\alpha-1) - \frac{\alpha-1}{4},
\end{align*}
and since $\alpha\le n\le 2\alpha(\alpha-1)-(\alpha-1)$, we have that $\mathsf{diff}<0$ as required.
\\

Since $\alpha$ is the number of alternating segments in a sequence of length $n$, it holds that $n\ge \alpha$. In addition, Case $1$ states that for $n=2\alpha(\alpha-1)$ we have that $\mathsf{diff} = 0$. Furthermore, by combining the results from Case $2$ and Case $3$ we have that for any $n > 2\alpha(\alpha-1)$ the value of $\mathsf{diff}$ is a positive number. Similarly Case $4$ and Case $5$ prove that for any $\alpha\le n <2\alpha(\alpha-1)$ the value of  $\mathsf{diff}$ is negative. Thus,
$$\mathsf{diff} = 0 \iff n\ge 2\alpha(\alpha-1).$$
\end{document}